\newif\ifcountdistinct
\newcommand{\fullversion}{}
    \newcommand{\sigmodSubmission}{}
	\newcommand{\countdistinct}[1]{#1}
	\newcommand{\countdistinctapp}[1]{}
    \newcommand{\countdistinct}[1]{}
    \newcommand{\countdistinctapp}[1]{#1}
    \newcommand{\icdeSubmission}{}
\newif\ifcomm
	\newcommand{\mycomm}[3]{{\footnotesize{{\color{#2} \textbf{[#1: #3]}}}}}
	\newcommand{\CRdel}[1]{\textcolor{red}{\sout{#1}}}
    \newcommand{\mycomm}[3]{}
    \newcommand{\CRdel}[1]{}
\newcommand{\MM}[1]{\mycomm{MM}{red}{#1}} 
\newcommand{\RBB}[1]{\mycomm{Ran}{purple}{#1}} 
\newcommand{\ran}[1]{\RBB{#1}}
\let\subparagraph\relax
\def\BibTeX{{\rm B\kern-.05em{\sc i\kern-.025em b}\kern-.08em T\kern-.1667em\lower.7ex\hbox{E}\kern-.125emX}}
\newcommand{\Var}{\mathrm{Var}}
\newcommand{\eps}{\epsilon}
\newcommand{\set}[1]{\left\{#1\right\}}
\newcommand{\brackets}[1]{\left[#1\right]}
\newcommand{\angles}[1]{\left\langle#1\right\rangle}
\newcommand{\ceil}[1]{ \left\lceil{#1}\right\rceil}
\newcommand{\floor}[1]{ \left\lfloor{#1}\right\rfloor}
\newcommand{\parentheses}[1]{ \left({#1}\right)}
\newcommand{\norm}[1]{\left\lVert#1\right\rVert}
\title{SALSA: Self-Adjusting Lean Streaming Analytics}
\author{\IEEEauthorblockN{Ran Ben Basat}
	\IEEEauthorblockA{University College London
	}
	\and
	\IEEEauthorblockN{Gil Einziger}
	\IEEEauthorblockA{Ben Gurion University
	}
	\and
	\IEEEauthorblockN{Michael Mitzenmacher }
	\IEEEauthorblockA{Harvard University	
	}
	\and
	\IEEEauthorblockN{Shay Vargaftik}
	\IEEEauthorblockA{
		VMware Research
	}
}
\newtheorem{theorem}{Theorem}[section]
\newtheorem{lemma}[theorem]{Lemma}
\newtheorem*{observation*}{Observation}
\renewcommand{\textcolor}[2]{#2}
\newcommand{\rev}[1]{\textcolor{black}{#1}}
\newcommand{\CR}[1]{\textcolor{blue}{#1}}
\newcommand{\del}[1]{}
\begin{document}
\maketitle
\begin{abstract}
\rev{Counters are the fundamental building block of many data sketching schemes, which hash items to a small number of counters and account for collisions to provide good approximations for frequencies and other measures.}
Most existing methods rely on fixed-size counters, which may be wasteful in terms of space, as counters must be large enough to eliminate any risk of overflow.
Instead, some solutions use small, fixed-size counters that may overflow into secondary structures. 

This paper takes a different approach. We propose a simple and general method called SALSA for \emph{dynamic re-sizing} of counters, and show its effectiveness.
\rev{SALSA starts with small counters, and overflowing counters simply merge with their neighbors.}
\rev{SALSA can thereby allow more counters for a given space, expanding them as necessary to represent large numbers.}
Our evaluation demonstrates that, at the cost of a small overhead for its merging logic, SALSA significantly improves the accuracy of popular schemes (such as Count-Min Sketch and \mbox{Count Sketch) over a variety of tasks.}
\mbox{Our code is released as open source~\cite{SALSACode}.}

\end{abstract}


\section{Introduction}
Analysis of large data streams is essential in many domains, including natural language processing~\cite{NLPsketches}, load balancing~\cite{LoadBalancing}, and forensic analysis~\cite{Forensic2}.
Typically, the data volume renders exact analysis algorithms too expensive. 
However, often it is sufficient to {estimate} measurements such as per-item frequency~\cite{ConextPaper}, item distribution entropy~\cite{DBLP:conf/sigmetrics/LallSOXZ06}, or top-$k$/heavy hitters~\cite{RAP} by using {approximation algorithms} {often referred to as sketches. }
\CR{Sketching schemes reduce the space requirements by sharing counters that keep frequency counts of the (potentially multiple) associated items~\cite{AugmentedSketch,ColdFilter}.}
  
That is, rather than use a counter for each item, which may be space-prohibitive, sketches bound the effect of collisions to \mbox{guarantee good approximations.}

\CR{
A common approach for sketch design is to consider counters as the basic building block. 
Namely, the goal is to optimize the accuracy for a given number of counters (e.g.,~\cite{CountMinSketch,RAP}).
However, these works do not discuss how many bits each counter should have, a quantity whose optimal value depends on the workload and optimization metric.
}

For fixed-size counters, if they are too large, space is wasted. 
Conversely, if they are too small, there are risks of overflow. 
Instead, some solutions use small fixed-size counters that may overflow into secondary structures (e.g.,~\cite{Brick,PyramidSketch}).

{
\CR{\textbf{Our Contributions:}~We present Self-Adjusting Lean Streaming Analytics (SALSA), a  simple and  general framework for dynamic re-sizing of counters.
In a nutshell, SALSA starts with small (e.g., $8$-bit) counters and merges overflowing ones with their neighbors to represent larger numbers. This way, more counters {fit in a given space without limiting the counting range.}}
}
\CR{\noindent
To do so efficiently, we employ novel methods for representing merges with low memory and computation overheads.
These methods also respects byte boundaries making them readily {implementable in software and some hardware platforms.}
}
%
%

\CR{SALSA integrates with popular sketches and probabilistic counter-compression techniques to improve their precision to memory tradeoff.} 
\CR{We prove that SALSA stochastically improves the accuracy of standard schemes, including the Count Min Sketch~\cite{CountMinSketch}, the Conservative Update Sketch~\cite{CUSketch}, \mbox{and the Count Sketch~\cite{CountSketch}.}
Using different workloads, metrics, and tasks, we also show significant accuracy improvements for the above schemes as well as for state of the art solutions like Univmon~\cite{univmon}, Cold Filter~\cite{ColdFilter}, and AEE~\cite{Compsketch}.
\rev{We also compare against Pyramid Sketch~\cite{PyramidSketch} and ABC~\cite{gong2017abc}, recent \CR{variable-counter-size} solutions, and show that SALSA is more accurate than both.}
\mbox{Finally, we release our code as open source~\cite{SALSACode}.
}
}
\vspace*{-1.0mm}

\section{Related Work}
\ifdefined\sigmodSubmission\vspace*{-1mm}\fi
\CR{
The term \emph{sketch} here informally describes an algorithm that uses shared counters, such that each item is associated with a subset of the counters via hash functions~\cite{CountMinSketch,CountSketch,CUSketch,univmon}.
\CRdel{
Examples include the Count Min, the Conservative Update, and the Count sketches.
{
Some modern sketches, such as the Universal Sketch~\cite{univmon,UnivMonTheory}, can estimate \emph{all} tractable functions of the frequency vector, and the Cold Filter~\cite{ColdFilter}  which improves the accuracy and speed by using an aggregation buffer and filtering cold elements before updating the sketch.}
}
Sketches offer tradeoffs between update speed, accuracy, and space, where each of these parameters is important in some scenarios. For example, in software-based network measurement, we are primarily concerned about update speed~\cite{Nitro}. Conversely, in hardware-based measurements, \mbox{space is often the bottleneck~\cite{PRECISION,CASE}.}
}

\CR{
Some sketches optimize the update speed at the expense of space. 
For example, Randomized Counter Sharing~\cite{RandomizedCounterSharing} uses multiple hash functions but only updates a random one.
NitroSketch~\cite{Nitro} extends this idea and only performs updates for sampled packets using a novel sampling technique that asymptotically improves over uniform sampling.
}
\CR{
Other solutions aim to maximize the accuracy for a given space allocation.
For example, Counter Braids~\cite{CounterBraids} and Counter Tree~\cite{CounterTree} aim to fit into the fast static RAM (SRAM) while optimizing the precision.
These solutions estimate element sizes using complex offline procedures that, while being highly accurate, may be too slow for online applications. 
}

\CR{
Most relevant to our setting are ABC~\cite{gong2017abc} and Pyramid 
Sketch~\cite{PyramidSketch}, which vary the size of counters on the fly. 
In ABC, an overflowing counter is allowed to ``borrow'' bits from the next counter. If there are not enough bits to represent both values, the counters ``combine'' to create a larger counter.
However, the encoding of ABC is cumbersome. It requires three bits to mark combined counters (e.g., when starting with $8$-bits, combined counters can count to $2^{13}-1$) and slows the sketch down significantly (see Section~\ref{sec:eval}). Moreover, it does not allow counters to combine more than once.
Pyramid Sketch~\cite{PyramidSketch} has several layers for extending overflowing counters. 
An overflowing counter increases a counter at the next layer. 
Each pair of same-layer counters are associated with a single counter at the next layer. If both overflow, they will share their most significant bits in that counter while keeping the least significant bits separately.
Critically, the counters of all layers are pre-allocated regardless of the access patterns. 
This results in inferior memory utilization since many of the upper layers' counters may never be used. 
Further, when reading a counter, Pyramid may make multiple non-sequential memory accesses, thus slowing the processing down.
SALSA improves over these solutions due to its efficient encoding and the fact that its counting range is not limited by the initial configuration (e.g., counter size).
}

\rev{
\CR{An orthogonal line of} works reduces the size of counters by using  probabilistic estimators that  only increment their value with a certain probability on an update~\cite{CEDAR,ANLSUpscaling,Compsketch,ApproximateCounting}.}
Such an approach saves space as estimators can represent large numbers with fewer bits\CR{, at the cost of a higher error. }

\ifdefined\sigmodSubmission\vspace*{-1mm}\fi
\section{Preliminaries}
\ifdefined\sigmodSubmission\vspace*{-1mm}\fi
\CRdel{This section gives the relevant vocabulary used in this paper. }
\CR{
We consider a data stream $S$  consisting of \emph{updates} in the form of $\angles{x,v}$, where $x\in U$ is an element (or item) and $v\in\mathbb Z$
\ran{was $v\in\set{-M,\ldots,M}$ }
is a value.
Here, $U\triangleq\set{1,\ldots,u}$ is the \emph{universe} and $u$ is the \emph{universe size}.\ran{, and $M$ bounds the update value.}}
For $x\in\set{1,\ldots,u}$, $f_x\triangleq\sum_{\angles{x,v}\in S} v$ denotes the \emph{frequency} of $x$. \CR{Additionally}, $f\triangleq\angles{f_1,\ldots,f_u}$ is the \emph{frequency vector} of $S$.
We denote by $N\triangleq\sum_{x\in U} |f_x|$ the {volume} of the stream. 
\CR{
The above is called the \emph{Turnstile} model. 
Other models include the \emph{Strict Turnstile} model, where frequencies are non-negative at all times, and the \emph{Cash Register} model, where updates are strictly positive.}

The $p$'th moment of the frequency vector is defined as $F_p\triangleq\sum_{x\in U} |f_i|^p$ (e.g., $F_1=N$) and the $p$'th norm (defined for $p\ge 1$) is $L_p\triangleq\sqrt[p]F_p$.
We say that an algorithm estimates frequencies with an $(\epsilon,\delta)$
$L_p$ guarantee if for any element $x\in U$ it produces an estimate $\widehat{f_x}$ \mbox{that satisfies $\Pr\brackets{|\widehat{f_x}-f_x|\le \epsilon L_p}\ge 1-\delta$.} 
Throughout the paper, we assume the standard RAM model and that each counter value fits into $O(1)$ machine words.

We survey several popular sketches that SALSA extends.
\textbf{Count Min Sketch (CMS)~\cite{CountMinSketch}: }\label{sec:CMS}
CMS is arguably the simplest and most popular sketch.  It provides an $L_1$ guarantee in the Strict Turnstile model.
The sketch consists of a $d\times w$ matrix $C$ of \emph{counters} and $d$ random hash functions $h_1,\ldots,h_d:U\to[w]$ that map elements into counters. 
Each element $x$ is associated with one counter in each row: $C[1,h_1(x)],\ldots,C[d,h_d(x)]$. When processing the update $\angles{x,v}$, CMS adds $v$ to all of $x$'s counters.
Since CMS operates in the Strict Turnstile model where \rev{all frequencies are non-negative}, each of $x$'s counters provides an over-estimation for its true frequency (i.e., $\forall i\in[d]:C[i,h_i(x)]\ge f_x$). Therefore, CMS uses the \rev{minimum} of $x$'s counters to estimate $f_x$. That is,  $\widehat{f_x}\triangleq\min_{i\in[d]} C[i,h_i(x)]$.

For its analysis, denote by $\mathfrak E_i \triangleq C[i,h_i(x)] - f_x\ge 0$ the \emph{estimation error} of the $i$'th counter of $x$.

Notice that $\mathbb E[\mathfrak E_i]=\frac{N-f_x}{w}\le \frac{N}{w}$, and according to Markov's \mbox{inequality we have that}
{\small
\begin{align}
    \forall c>0, i\in[d]:\Pr[\mathfrak E_i\ge N\cdot c/w]\le 1/c.\label{eq:cms}
\end{align}}
We note that CMS, like all the algorithms below, provides a curve of guarantees, in that setting $\delta$ determines the $\epsilon$ value for which we have an $(\epsilon,\delta)$ guarantee with the $d\times w$ configuration.
Setting $\epsilon=\delta^{-1/d}/w$ and $c=\delta^{-1/d}$, equation \eqref{eq:cms} gives that $\Pr[\mathfrak E_i\ge N\epsilon]\le \delta^{1/d}$, and as the $d$ rows are independent we get that $\Pr[\forall i:\mathfrak E_i\ge N\epsilon] = \parentheses{\Pr[\mathfrak E_i\ge N\epsilon]}^d\le \delta$.
For fixed $(\epsilon,\delta)$ values, setting $w=e/\epsilon$ and $d=\ln\delta^{-1}$ minimizes the space required by the sketch, but CMS is often configured with a smaller \mbox{number of rows $d$ since its update and query time are $O(d)$.}

\textbf{Conservative Update Sketch (CUS)~\cite{CUSketch}:}
CUS improves the accuracy of CMS but is restricted to the Cash Register model. 
Intuitively, when all the update values are positive, we may not need to increase all the counters of the current element. For example, assume that $C[1,h_1(x)]=7$ and $C[2,h_2(x)]=4$, and the update $\angles{x,2}$ arrives. In such a scenario, we know that $f_x\le 4$ before the update, and thus should not increase $C[1,h_1(x)]$.
In general, given an update $\angles{x,v}$, CUS sets each counter $C[i,h_i(x)]$ to $\max\set{C[i,h_i(x)],v+\widehat{f_x}}$, where $\widehat{f_x}=\min_{i\in[d]} C[i,h_i(x)]$ is the estimate for $x$ \emph{before the update}. While CUS improves the accuracy of CMS, its updates are slower due to the need to compute $\widehat{f_x}$ before increasing the counters. Since an estimate of CUS is always bounded by CMS's estimates from above (and by $f_x$ from below), the analysis of CMS holds for CUS as well. We refer the reader to~\cite{CUAnal} for a refined analysis.

\textbf{Count Sketch (CS)~\cite{CountSketch}:}
CS works in the more general Turnstile model and provides the stronger $L_2$ guarantee.
As with CMS and CUS, each element $x$ is associated with a set of counters $\set{C[i,h_i(x)]}_{i\in[d]}$. However, the update process is slightly different. Each row $i\in[d]$ in CS has another pairwise independent hash function $g_i:U\to\set{+1,-1}$ that associates each element with a \emph{sign}. When processing an update $\angles{x,v}$, CS increases each counter $C[i,h_i(x)]$ by $v\cdot g_i(x)$. 
Intuitively, this ``unbiases'' the noise from all other elements as they increase or decrease the counters with equal probabilities. 
As a result, each counter now gives an unbiased estimate and therefore CS estimates the size as
$\widehat{f_x}\triangleq\text{median} \set{C[i,h_i(x)]\cdot g_i(x)}_{i\in[d]}$.

Assuming without loss of generality that $g_i(x)=1$, the standard CS analysis bounds the error of the $i$'th row, $\mathfrak E_i\triangleq C[i,h_i(x)] - f_x$, by showing that $\Var[\mathfrak E_i]\le
    F_2 / w$.
Therefore, using Chebyshev's inequality we get that 
$\Pr[|\mathfrak E_i|\ge c\sqrt{\Var[\mathfrak E_i]}] \le \Pr[|\mathfrak E_i|\ge cL_2/\sqrt{w}]\le 1/c^2$. By setting $w=\Theta(\epsilon^{-2})$, we can get $\Pr[|\mathfrak E_i|\ge L_2\cdot \epsilon]\le 1/2-\Omega(1)$, and then use a Chernoff bound to show that $d=O(\log \delta^{-1})$ rows are enough for an $(\epsilon,\delta)$ guarantee.

\textbf{Universal Sketch (UnivMon)~\cite{univmon,UnivMonTheory}: }
UnivMon summarizes the data once and supports \emph{many} functions of the frequency vectors (e.g., its entropy or number of non-zero entries) in the Cash Register model.
Importantly, when using UnivMon, we provide a function $G:\mathbb Z\to \mathbb R$ as an input, and estimate the \emph{G-sum}, \mbox{given by $\sum_{x\in U} G(f_x)$. }
Not all functions of the frequency vector can be computed in poly-log space in a one-pass streaming setting (a class called \emph{Stream-PolyLog}). The surprising result of~\cite{UnivMonTheory} is that any function $G$ in Stream-PolyLog is supported by UnivMon.

UnivMon leverages $O(\log u)$ sketches with an $L_2$ guarantee (e.g., Count Sketch), which are applied to different subsets of the universe. We refer the reader to~\cite{univmon,UnivMonTheory} for details. \CRdel{about the construction of UnivMon and the precise definition of Stream-PolyLog.}

\textbf{Cold Filter~\cite{ColdFilter}: }
\rev{
A recent framework for fast and accurate stream processing. It consists of two \emph{stages}, where the first stage is designed to filter cold items and the second measures heavy hitters accurately. To accelerate the computation, it \mbox{uses an aggregation buffer and employs SIMD parallelism.}
}


\textbf{Finding Heavy Hitters:}
Often, we care about finding the most significant elements in a data stream, which has applications for load balancing~\cite{LoadBalancing}, accounting, and security. That is, in addition to estimating the frequency of elements, we wish to track the most frequent elements without needing to query each $x\in U$. For $p\ge 1$, the $L_p$-heavy hitter problem asks to return all elements with frequency larger than $\theta L_p$ and no element smaller than $(\theta-\epsilon)L_p$, where $\theta\in[0,1]$ is given at query time.
In the Cash Register model, we can store a min-heap with the $1/\epsilon$ elements with the highest estimates. Whenever an update arrives, we query the item and update the heap if necessary. As a result, we can find the $L_1$ heavy hitters using CMS and CUS, or the $L_2$ heavy \mbox{hitters using CS.}

\ifdefined\fullversion
{\textbf{{Counting Distinct Items:}}
Estimating the number of distinct items in a data stream (defined as $F_0\equiv\norm f_0$) is a fundamental primitive for applications such as discovering denial of service attacks~\cite{IntrusionDetection2}.
While UnivMon can natively support such a function, we can also estimate it from CMS and CUS.
By observing the fraction of zero-valued counters in a sketch's row $p$, we can estimate the number of distinct elements (as additional occurrences of the same element would not change this quantity). 
Specifically, a common approach (e.g.,~\cite{elasticsketch}) is use the Linear Counting algorithm~\cite{LinearCounting} that estimates the distinct count as $\frac{\log p}{\log(1-1/w)}\approx -w\log p$. Such an estimate has a standard error of $\frac{\sqrt {w\cdot (e^{\frac{F_0}{w}}-\frac{F_0}{w}-1)}}{F_0}$~\cite{LinearCounting} that \mbox{improves when $w$ grows.}} 
\else
\rev{
{\textbf{{Counting Distinct Items:}}}
Sketches can also estimate the number of distinct elements in the stream (e.g., as suggested in~\cite{elasticsketch}). In the full version~\cite{fullVersion}, we demonstrate how SALSA can improve their count distinct performance as well.
}
\fi

\section{Techniques}
The description of the above sketches does not address the fundamental question of sizing the counters.  A common practice is to assume some upper bound on the maximal frequency (e.g., $\overline N$) and allocate each counter with $\overline n=O(\log \overline N)$ bits. For performance, this upper bound is often rounded up to be a multiple of the word size. 
For example, practitioners often allocate $32$-bit counters when estimating the \emph{unit-count} of {elements}, and $64$-bit counters for measuring their weighted-frequency (e.g.,~\cite{FAST,CormodeCode}). 
When space is tight, estimators are sometimes integrated into sketches to allow smaller (e.g., $16$-bit) per-counter overhead at the cost of additional error~\cite{Compsketch}.
However, these solutions miss the potential of allowing counters' bit sizes to vary and adjust dynamically.  Intuitively, the \CR{largest} counter value is often considerably larger than the average value, especially in highly skewed workloads where many counter values remain small as most of the volume belongs to a small set of \emph{heavy hitters}. 

Alternatively, one can use address-calculation coding (e.g., see~\cite{teuhola2011interpolative,elmasry2012improved}) to encode a variable length counter array in near-optimal space (compared to the information theoretic lower bound). \CR{Such schemes require an upper bound $N_{\max}$ on the volume, and use $w\log_2 (1+N_{\max}/w) + O(w)$. However, the update time of such encoding is  $\Omega(\log^2 N_{\max})$ which may be  prohibitive \mbox{for high-performance applications.} }
To the best of our knowledge, no implementation that combines such encoding with sketches has been proposed.
\CR{In comparison, SALSA allows for dynamic counter sizing by \emph{merging} overflowing counters with their neighbors, and optimizes for performance by respecting word alignments. }
A simple SALSA encoding requires one bit per counter, and an optimized encoding requires less than $0.6$ bits per counter while still allowing for constant-time read and update operations. 
\CR{Importantly, SALSA resolves overflows without dynamic memory allocations (e.g.,~\cite{SpectralBloom}), without relying on additional data structures  (as in~\cite{CASE}), and without requiring global \mbox{rescaling operations for all the counters (e.g.,~\cite{Compsketch}).} }
\CRdel{Thus, SALSA poses a significant advantage \mbox{over the alternatives. }}



\textbf{The SALSA encoding:}\label{sec:SALSA_encoding}
SALSA starts with all counters having $s$ bits (e.g., $s=8$), where $s$ may be significantly smaller than the intended counting range (e.g., $\overline N=2^{32}$). 
Here, we describe an encoding that requires one bit of overhead per counter (e.g., 12.5\% for $s=8$ bit counters); we later explain how to reduce it to less than \mbox{$0.6$ bits (7.5\% for $s=8$).}

Each counter $i$ is associated with a \emph{merge bit} $m_i$. 
Once a counter needs to represent a value of $2^s$, we say that the counter \emph{overflows}. In principle, an overflowing counter can merge with its left-neighbor or right-neighbor. 
In SALSA, we select the merge direction to maximize byte and word alignment, which improves performance. We also make counters grow in powers of two (e.g., from $s$ bits to $2s$, then to $4s$, etc.). 
\ifdefined\fullversion
In Section~\ref{sec:tango}, we explore a slower but more fine-grained approach.\ran{This talks about tango, combine with the smarter encoding part}
\else
\fi
Specifically, when an $s$-bit counter $i$ overflows, it merges with $i+(1-2\cdot (i\mod 2))$. For example, if counter $6$ overflows, it merges with $7$, \rev{while if counter 7 overflows, it merges with $6$}.
More generally, when an $s\cdot 2^{\ell}$-bit counter with indices $\angles{i\cdot 2^{\ell}, i\cdot 2^{\ell}+1\ldots (i+1)\cdot 2^{\ell}-1}$ overflows, it merges with the counter-set at indices $\angles{j\cdot 2^{\ell}, j\cdot 2^{\ell}+1\ldots (j+1)\cdot 2^{\ell}-1}$, for $j=(1-2\cdot (i\mod 2))$.
As an example, if we started from $s=8$ bit counters and counter $6$ overflows, it right-merges with $7$ to create a $16$ bit counter with indices $\angles{6,7}$. If this counter overflows, it left-merges into a $32$ bit counter with indices $\angles{4,5,6,7}$, and if this overflows, it left-merges into a $64$ bit counter \mbox{with indices $\angles{0,\ldots,7}$.}

\rev{To encode that $\angles{i\cdot 2^{\ell}, i\cdot 2^{\ell}+1,\ldots, (i+1)\cdot 2^{\ell}-1}$ are merged into a single $s\cdot 2^{\ell}$-bit counter, SALSA sets $m_{i\cdot 2^{\ell}+2^{\ell-1}-1}=1$. For example, to encode that $\angles{6,7}$ are merged, we have $(i=3,\ell=1)$ and thus set $m_{3\cdot 2^1 + 2^{1-1}-1}=m_6=1$; when $\angles{4,5,6,7}$ are merged, we have $(i=1,\ell=2)$ and thus we set $m_{1\cdot 2^2 + 2^{2-1}-1}=m_5=1$; and when $\angles{0,\ldots,7}$ are merged we have ($i=0,\ell=3)$ and thus we set $m_{0\cdot 2^3 + 2^{3-1}-1}=m_3=1$.}
We can compute the counter size by testing the $\ell$ relevant \rev{bits.}
We demonstrate this encoding in Figure~\ref{fig:encoding}.
All the computations involved in determining the counter size and offset can be efficiently implemented using bit operations, \mbox{especially if $s$ is a power of two.}\\

\begin{figure}[t]
\centering
\ifdefined\sigmodSubmission\vspace*{2mm}\fi
\includegraphics[width = 1\columnwidth]
		{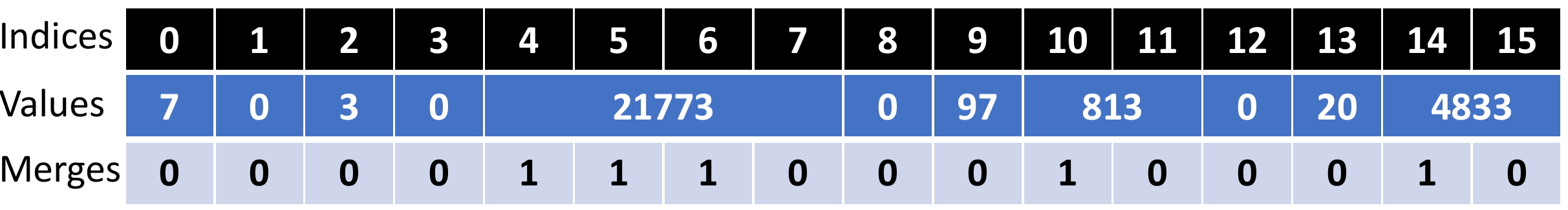}
    
	\ifdefined\submissionVersion
	\ifdefined\sigmodSubmission\vskip -0.1cm\fi
	\fi
    \caption{\label{fig:encoding} SALSA encoding for an array with a basic counter size of $s=8$ bits, notice that large counters consume more indices than small counters due to merge operations.
    }
\end{figure}

\textbf{{Reducing the Encoding Overhead}:}\label{sec:smartEncoding}
The encoding we used in SALSA so far is efficient as well as amenable for simple implementation. The cost of this encoding is a single merge bit per counter. 
\ifdefined\fullversion
This is, in fact, within a factor of 2 of the optimal encoding, as we show in  Appendix~\ref{app:improvedEncoding}. 
That is, we prove that any encoding for SALSA must use at least $\log_2 1.5\approx 0.585$ overhead bits per counter and show a somewhat more complex  $O(1)$-time encoding with at most $0.594$ overhead bits per counter. 
For a given memory allocation, this encoding provides improved accuracy as the lower overhead allows fitting more counters, but may be somewhat slower.
\else
\rev{In the full version~\cite{fullVersion}, we show that any encoding must use at least $\log_2 1.5\approx 0.585$ bits per counter and present a near-optimal, albeit slower, constant time encoding with $0.594$-bits per counter.}
\fi
\ifdefined\fullversion\else
We also explore a slower but more fine-grained approach \rev{where counters can merge to e.g., $3s$ bits. Our analysis and evaluation indicate that the benefit of such an approach is limited, \mbox{given the computational overheads.}}
\fi

\ifdefined\icdeSubmission
\else
\textbf{Fine-grained Counter Merges:}\label{sec:tango}
The SALSA encoding we presented in Section~\ref{sec:SALSA_encoding} doubles the counter size upon an overflow, which may be wasteful when the overflowing counter could benefit from a smaller increase in size. 
Thus, we suggest the more refined Tango algorithms to explore the benefits of a more fine-grained merging strategy. 
In Tango, counters can be merged into sizes that are arbitrary multiples of $s$. For example, if we start from $s=8$ bit counters, Tango can merge a $16$ bit counter into a $24$ bit counter while SALSA would merge from $16$ bits to $32$. The encoding of Tango is simple: each counter $j$ is associated with a merge bit $m_j$ that denotes whether the counter is merged with its right-neighbor.
To compute the counter size and offset in Tango of $j=h(x)$, we scan the number of set bits to the left and right of $m_j$ until we hit a zero at both sides. 
For example, if $j = 5$ and $m_4=m_5=m_6=m_7=1$ while $m_3=m_8=0$ then the counter consists of $s\cdot 5$ bits, spanning $\angles{4,5,6,7,8}$.
In general, one can use complex logic to decide whether to merge with the left or right neighbor once a counter overflows. However, we design Tango to evaluate the potential benefits of fine-grained merging and therefore enforce a merging logic that mimics SALSA. 
Specifically, Tango always tries to be aligned to the smallest possible power of two. For example, if counter $9$ overflows, it merges with $8$ to be aligned with the $2$-block $\angles{8,9}$. If it overflows again, it merges with $10$ (creating a $s\cdot 3$ bits sized counter) and then with $11$. If more bits are needed it will merge with $12$ then with $13,14$ and $15$ (being aligned to the $8$-block $\angles{8,\ldots,15}$). Then it merges with $7, 6,\ldots$, etc. Notice that at every point in time, the Tango counters are contained in the corresponding SALSA counters. In particular, this allows us to produce an estimate that is at least as accurate as SALSA.
We note that Tango poses a tradeoff -- while it allows more accurate sketches (e.g., as a counter may not exceed $2^{24}-1$ and thus it could be wasteful to merge it into $32$ bits), it also has slower decoding time and cannot use the efficient encoding of the previous section.
\fi



\section{SALSA-fying Sketches}
We now describe how SALSA integrates with existing sketches, and specifically how to set the value of merged counters in each sketch. We also state and prove accuracy guarantees for the resulting SALSA sketches.
We employ hash functions $h_i:U\to[w]$ similarly to the original sketches. Given a merged counter with indices $\angles{L,L+1,\ldots,R}$, we consider all elements $x$ with $L\le h_i(x)\le R$ to be mapped into it.
Hereafter, we often refer to the \emph{underlying sketch} as following: If the largest merged counter size is $s\cdot 2^{\ell}$, the underlying sketch is a vanilla (fixed counter size) sketch where \emph{each counter} is of size $s\cdot 2^{\ell}$ and its hashes are $\set{\widetilde{h}_i(x) \triangleq \floor{h_i(x)/2^\ell}\mid i\in[d]}$.

\begin{figure}[t]
    \ifdefined\sigmodSubmission\centering\ifdefined\sigmodSubmission\vskip -0.15cm\fi\fi
    \ifdefined\sigmodSubmission\vspace*{-2mm}\fi
    \subfloat[\rev{Sum merging of counters}]
    {\label{fig:sum-merge}\includegraphics[width =0.48\columnwidth]
    {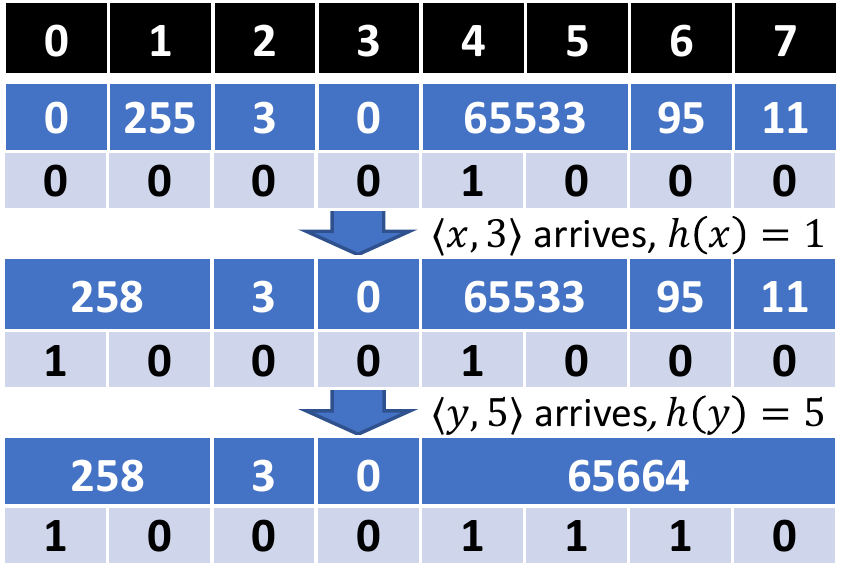}}    
    \hspace{1mm}
    \subfloat[\rev{Max merging of counters}]
    {\label{fig:max-merge}\includegraphics[width =0.48\columnwidth]
    {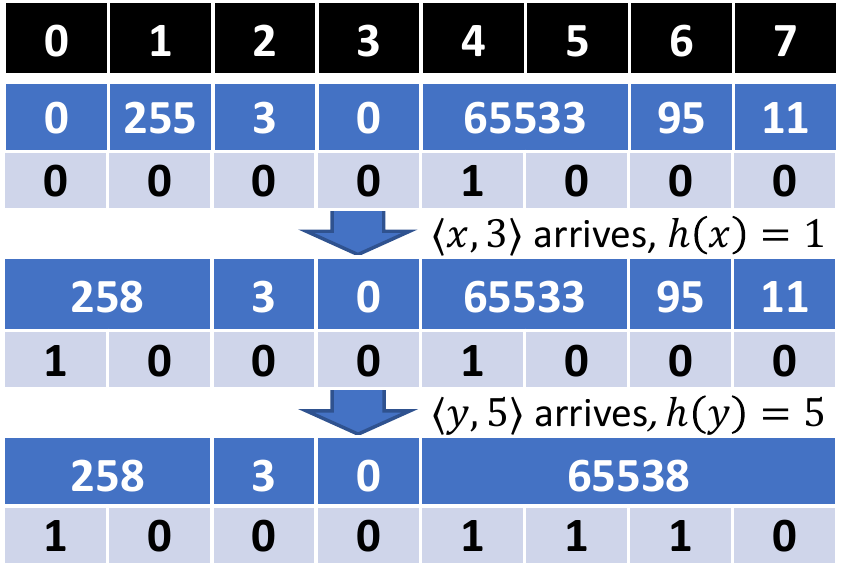}}
    \ifdefined\sigmodSubmission\vspace*{-2mm}\fi
    \caption{Sum and Max merge in SALSA CMS with $s=8$.
    \label{fig:cache-based}}\ifdefined\sigmodSubmission\vspace*{-4mm}\fi
\end{figure} 

\textbf{Count Min Sketch (CMS):}
SALSA CMS 
\ifdefined\fullversion
and Tango CMS are
\else
is
\fi 
identical to CMS as long as no counter overflows. We have already defined the merge operation with regard to encoding (Section~\ref{sec:SALSA_encoding}), and with regard to hash mapping in the previous section. However, we still need to define how we determine the value of a merged counter, which provides a degree of freedom we leverage to increase measurement accuracy according to the specific model requirements. A natural merging operation is to \emph{sum} the merged counter values (illustrated in Figure~\ref{fig:sum-merge}). We formalize the correctness of this approach for the Strict Turnstile model via the following theorem.

\ifdefined\sigmodSubmission
Due to lack of space, we defer the proofs of the simpler theorems (\ref{thm:CMS_sum}-\ref{thm:cus}), also marked by (*), to the full paper~\cite{fullVersion}.
\fi

\ifdefined\sigmodSubmission
\ifdefined\fullversion
\begin{theorem}\label{thm:CMS_sum}(*)
 Assume that SALSA and Tango use sum merge to unify counters. Let $2^\ell\cdot s$ be the maximal bit-size of any counter in SALSA CMS, and $\forall i\in[d]$ let $\widetilde{h}_i(x) = \floor{h_i(x)/2^\ell}$ be hash functions that map items into a standard CMS with $(2^\ell\cdot s)$-sized counters. Then for any $x\in U:f_x\le \widehat{f_x^{\text{Tango}}}\le \widehat{f_x^{\text{SALSA}}}\le \widehat{f_x^{\text{CMS}}}$, where $\widehat{f_x^{\text{Tango}}}, \widehat{f_x^{\text{SALSA}}}$ and $\widehat{f_x^{\text{CMS}}}$ are the estimates of Tango, SALSA, {and the underlying CMS (with functions $\widetilde{h}_i(x)$).}
\end{theorem}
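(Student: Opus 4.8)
The plan is to reduce everything to a single structural invariant about sum merge and then to exploit the non-negativity of frequencies in the Strict Turnstile model. First I would establish the \emph{sum invariant}: in each of the three schemes, a (possibly merged) counter in row $i$ spanning the index range $\langle L,\ldots,R\rangle$ holds exactly $\sum_{y\in U:\,L\le h_i(y)\le R} f_y$. This follows by induction over the update stream. It holds initially, when each unmerged counter at position $j$ accumulates $\sum_{y:h_i(y)=j}f_y$; it is preserved by each update $\langle x,v\rangle$, since adding $v$ to the single counter containing $h_i(x)$ is linear in $v$; and it is preserved by each sum merge, since $\sum_{\text{range }1} f_y + \sum_{\text{range }2} f_y = \sum_{\text{range }1\,\cup\,\text{range }2} f_y$. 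The same invariant holds for the underlying CMS, whose row-$i$ counter indexed by $\widetilde h_i(x)$ is exactly the fixed counter spanning the full $2^\ell$-aligned block $\langle \widetilde h_i(x)\cdot 2^\ell,\ldots,(\widetilde h_i(x)+1)2^\ell-1\rangle$.

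Second, I would prove the lower bound $f_x\le\widehat{f_x^{\text{Tango}}}$. Fix a row $i$; the counter containing position $h_i(x)$ spans a range that includes $h_i(x)$, so by the sum invariant and the non-negativity of all $f_y$ its value is at least $f_x$. Taking the minimum over the $d$ rows preserves this and gives $f_x\le\widehat{f_x^{\text{Tango}}}$ (the identical argument bounds the SALSA and CMS estimates from below, though only the Tango bound is needed to anchor the chain).

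Third, and this is the crux, I would establish the nesting of spans in each row $i$ at every fixed time:
\[
\mathrm{span}^{\text{Tango}}_i(x)\ \subseteq\ \mathrm{span}^{\text{SALSA}}_i(x)\ \subseteq\ \langle \widetilde h_i(x)\cdot 2^\ell,\ldots,(\widetilde h_i(x)+1)2^\ell-1\rangle .
\]
The right inclusion is a direct consequence of the SALSA encoding: every SALSA counter of size $s\cdot 2^{j}$ occupies a $2^{j}$-aligned block, and since $j\le\ell$ such a block sits inside a unique $2^\ell$-aligned block, which is precisely the underlying-CMS block containing $h_i(x)$. The left inclusion is exactly the containment property already noted for Tango (that at all times each Tango counter is contained in the corresponding SALSA counter); I would justify it by a coupled induction over merges, using that Tango follows the same power-of-two-alignment order as SALSA but fills in one sub-block at a time.

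Finally, combining these pieces, the sum invariant together with non-negativity implies that summing $f_y$ over a smaller index range yields a smaller value, so in each row $i$ the Tango counter value is at most the SALSA counter value, which is at most the underlying-CMS counter value. Taking the minimum over the $d$ rows preserves both inequalities, yielding $\widehat{f_x^{\text{Tango}}}\le\widehat{f_x^{\text{SALSA}}}\le\widehat{f_x^{\text{CMS}}}$, and chaining with the lower bound from the second step completes the proof. I expect the main obstacle to be the left inclusion of the nesting, namely rigorously verifying the coupled Tango/SALSA containment through the merge dynamics, whereas the sum invariant and the monotonicity of range sums are routine once non-negativity is invoked.
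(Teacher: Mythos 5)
Your proposal is correct and follows essentially the same route as the paper's proof: the sum-merge invariant (each merged counter equals the total frequency of all elements hashed into its span), the containment of Tango spans in SALSA spans in the underlying CMS blocks, and non-negativity of frequencies in the Strict Turnstile model to turn span containment into value inequalities. The paper states this far more tersely (and simply asserts the Tango-in-SALSA containment, which it also notes without proof in the Tango section), so your write-up is just a fully detailed version of the same argument.
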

\else
\begin{theorem}\label{thm:CMS_sum}(*)
 \rev{Assume that SALSA uses sum merge to unify counters. Let $2^\ell\cdot s$ be the maximal bit-size of any counter in SALSA CMS, and $\forall i\in[d]$ let $\widetilde{h}_i(x) = \floor{h_i(x)/2^\ell}$ be hash functions that map items into a standard CMS with $(2^\ell\cdot s)$-sized counters. Then for any $x\in U:f_x\le \widehat{f_x^{\text{SALSA}}}\le \widehat{f_x^{\text{CMS}}}$, where $\widehat{f_x^{\text{SALSA}}}$ and $\widehat{f_x^{\text{CMS}}}$ are the estimates of SALSA {and the underlying CMS (with functions $\widetilde{h}_i(x)$).}}
\end{theorem}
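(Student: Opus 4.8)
The plan is to reduce both inequalities to a single structural invariant about the value held by each merged counter, and then exploit the non-negativity of frequencies (which holds in the Strict Turnstile model) twice. First I would establish the invariant that, at query time, in each row $i$ the merged counter into which $x$ falls --- say it spans indices $\angles{L,\ldots,R}$ with $L\le h_i(x)\le R$ --- holds exactly $C_i = \sum_{z:\, L\le h_i(z)\le R} f_z$. This follows by induction over the sequence of operations applied to the sketch: initially each counter holds the sum of the increments hashed into it; an increment $\angles{y,v}$ adds $v$ to the one block currently containing $h_i(y)$ and preserves the claim trivially; and a sum merge replaces two adjacent blocks by one whose value is their sum, which by the induction hypothesis equals the sum of increments over the union of the two index ranges. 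Since sum merge neither creates nor destroys mass, the invariant is maintained throughout.

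Given the invariant, the lower bound $f_x\le\widehat{f_x^{\text{SALSA}}}$ is immediate: because $L\le h_i(x)\le R$, the term $f_x$ appears in the sum defining $C_i$, and every other term is non-negative, so $C_i\ge f_x$ in each row; taking the minimum over rows gives $\widehat{f_x^{\text{SALSA}}}=\min_i C_i\ge f_x$.

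For the upper bound $\widehat{f_x^{\text{SALSA}}}\le\widehat{f_x^{\text{CMS}}}$, the key is a containment claim: the SALSA block $\angles{L,\ldots,R}$ containing $h_i(x)$ is a subset of the underlying-CMS block, which spans the aligned range $\angles{j\cdot 2^\ell,\ldots,(j+1)\cdot 2^\ell-1}$ with $j=\floor{h_i(x)/2^\ell}$. This is where the merging discipline of Section~\ref{sec:SALSA_encoding} does the work: every live SALSA counter of size $s\cdot 2^{\ell'}$ occupies an aligned dyadic block $\angles{i'\cdot 2^{\ell'},\ldots,(i'+1)\cdot 2^{\ell'}-1}$, and since $2^\ell\cdot s$ is the maximal size we have $\ell'\le\ell$, so any aligned $2^{\ell'}$-block sits inside a unique aligned $2^\ell$-block --- necessarily the one indexed by $j=\floor{h_i(x)/2^\ell}$, since both contain $h_i(x)$. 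With containment in hand, non-negativity again finishes the job: summing $f_z$ over the smaller index set yields no more than summing over the larger one, so $C_i\le\widetilde{C}_i$ (the underlying-CMS counter value in row $i$), and the minimum over rows preserves this, giving $\widehat{f_x^{\text{SALSA}}}\le\widehat{f_x^{\text{CMS}}}$.

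The main obstacle is the containment claim, and specifically arguing that the power-of-two, alignment-respecting merge rule guarantees that every realizable SALSA block is an aligned dyadic interval nested inside the maximal-size aligned block. I would make this precise by an induction on merge operations showing that the collection of index-ranges occupied by the live counters is always a dyadic partition refining the $2^\ell$-aligned partition. Everything else --- both the over-estimation and the monotone improvement over the coarse CMS --- then follows purely from set inclusion and the non-negativity of frequencies, with no probabilistic input required.
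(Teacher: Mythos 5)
Your proposal is correct and follows essentially the same route as the paper's proof: the same invariant that a sum-merged counter holds the total frequency of all elements mapped to its index range, combined with non-negativity of frequencies in the Strict Turnstile model and the observation that each SALSA block is contained in the corresponding underlying-CMS block. You simply make explicit the dyadic-alignment argument for the containment step, which the paper leaves implicit.
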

\fi 
\else
\begin{theorem}\label{thm:CMS_sum}
 Assume that SALSA and Tango use sum merge to unify counters. Let $2^\ell\cdot s$ be the maximal bit-size of any counter in SALSA CMS, and $\forall i\in[d]$ let $\widetilde{h}_i(x) = \floor{h_i(x)/2^\ell}$ be hash functions that map items into a standard CMS with $(2^\ell\cdot s)$-sized counters. Then for any $x\in U:f_x\le \widehat{f_x^{\text{Tango}}}\le \widehat{f_x^{\text{SALSA}}}\le \widehat{f_x^{\text{CMS}}}$, where $\widehat{f_x^{\text{Tango}}}, \widehat{f_x^{\text{SALSA}}}$ and $\widehat{f_x^{\text{CMS}}}$ are the estimates of Tango, SALSA, {and the underlying CMS (with functions $\widetilde{h}_i(x)$).}
\end{theorem}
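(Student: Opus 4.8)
The plan is to reduce the entire chain to two structural invariants about sum-merging and then invoke non-negativity of frequencies (the Strict Turnstile model) to convert set containments into value inequalities. I would prove the three links of the chain one at a time, working \emph{per row} and only taking the minimum over the $d$ rows at the very end, since if $a_i\le b_i$ for all $i$ then $\min_i a_i\le \min_i b_i$.

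First I would establish a \textbf{sum invariant}: at every point in the stream, a SALSA (resp.\ Tango) counter spanning indices $\angles{L,\ldots,R}$ in row $i$ holds exactly $\sum_{y:\,L\le h_i(y)\le R} f_y$, the sum of the current frequencies of all items routed by $h_i$ into its index range. This follows by induction on the merge operations: it holds initially, since each unmerged counter $j$ accumulates precisely the updates sent to index $j$; a sum-merge replaces two adjacent counters by a counter equal to their sum, which is exactly the sum over the union of their ranges; and all later updates to items in that union are routed to the merged counter, preserving the invariant. The identical argument applies to Tango. Note this step is model-agnostic bookkeeping of sums. For the underlying CMS the corresponding fact is the standard one: its counter $\widetilde{h}_i(x)$ equals $\sum_{y:\,\widetilde{h}_i(y)=\widetilde{h}_i(x)} f_y=\sum_{y:\,h_i(y)\in B_i(x)} f_y$, where $B_i(x)=\brackets{\widetilde{h}_i(x)\cdot 2^\ell,\ (\widetilde{h}_i(x)+1)\cdot 2^\ell-1}$ is the size-$2^\ell$ aligned block grouped into one counter.

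Second I would prove the \textbf{containment invariant}. Because SALSA only ever merges power-of-two-aligned blocks and the maximal counter size is $s\cdot 2^\ell$, the SALSA counter containing $h_i(x)$ spans an index set $I^{\text{SALSA}}_i(x)$ with $h_i(x)\in I^{\text{SALSA}}_i(x)\subseteq B_i(x)$. Likewise, the Tango construction is designed so that its counters are contained in the corresponding SALSA counters at all times, giving $h_i(x)\in I^{\text{Tango}}_i(x)\subseteq I^{\text{SALSA}}_i(x)$. I expect this invariant to be the main obstacle: it again requires an induction on the merge history, this time tracking the power-of-two alignment of SALSA's merge rule and verifying that Tango's finer merges never escape the power-of-two block SALSA has committed to; everything else is routine.

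Finally I would combine the two invariants with non-negativity. Since frequencies are non-negative in the Strict Turnstile model, enlarging the index set can only increase the sum, so for each row $i$ we get $f_x\le \sum_{y:\,h_i(y)\in I^{\text{Tango}}_i(x)} f_y\le \sum_{y:\,h_i(y)\in I^{\text{SALSA}}_i(x)} f_y\le \sum_{y:\,h_i(y)\in B_i(x)} f_y$, where the leftmost inequality uses $h_i(x)\in I^{\text{Tango}}_i(x)$ (so $x$ itself contributes $f_x$ and all other terms are $\ge 0$) and the remaining two use the containments. By the sum invariant, these four quantities are exactly $f_x$, the Tango counter, the SALSA counter, and the underlying CMS counter of $x$ in row $i$. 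Taking the minimum over $i\in[d]$ preserves the inequalities, and since $f_x$ lower-bounds every term of the Tango sequence it lower-bounds its minimum, yielding $f_x\le \widehat{f_x^{\text{Tango}}}\le \widehat{f_x^{\text{SALSA}}}\le \widehat{f_x^{\text{CMS}}}$.
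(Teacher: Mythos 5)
Your proposal is correct and follows essentially the same route as the paper's proof: the paper also rests on the sum invariant (each merged counter holds the total frequency of all elements mapped to it) together with the fact that Tango counters are contained in SALSA counters, which are contained in the aligned $2^\ell$-blocks of the underlying CMS, so that non-negativity of frequencies in the Strict Turnstile model turns the containments into value inequalities. You simply spell out the containment induction, the role of non-negativity, and the per-row-then-minimum step, all of which the paper leaves implicit.
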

\begin{proof}
 The sum merge maintains an \textit{invariant} where the value of each merged counter is the total frequency of \textit{all} elements mapped to it. In the worst case, a merge results in a counter of size equal to that of the corresponding counter in the underlying CMS. In this case, the values of the counters are identical. 
  \ifdefined\fullversion
Otherwise, the value of a Tango counter is upper bounded by a SALSA counter which, in turn, is upper bounded by the corresponding \mbox{value in the underlying CMS.\hspace*{-1mm} \qedhere}
\else
\rev{Otherwise, the value of a SALSA counter is upper bounded by the corresponding \mbox{value in the underlying CMS.\hspace*{-1mm} \qedhere}}\fi
\end{proof}
\fi

For Cash Register streams (with only positive updates), \rev{rather than sum the counters when merging, we can take the maximum value of the merged counters to gain more accuracy (exemplified in Figure~\ref{fig:max-merge}) while maintaining guarantees, as formalized in the following theorem.}
\ifdefined\sigmodSubmission
\ifdefined\fullversion
\begin{theorem}\label{thm:CMS_max}(*)
 Assume that SALSA and Tango use max merge to unify counters. Let $2^\ell\cdot s$ be the maximal bit-size of any counter in SALSA CMS, and $\forall i\in[d]$ let $\widetilde{h}_i(x) = \floor{h_i(x)/2^\ell}$ be hash functions that map items into a standard CMS with $(2^\ell\cdot s)$-sized counters. Then for any $x\in U:f_x\le \widehat{f_x^{\text{Tango}}}\le \widehat{f_x^{\text{SALSA}}}\le \widehat{f_x^{\text{CMS}}}$, where $\widehat{f_x^{\text{Tango}}}, \widehat{f_x^{\text{SALSA}}}$ and $\widehat{f_x^{\text{CMS}}}$ are the estimates of Tango, SALSA, {and the underlying CMS (with functions $\widetilde{h}_i(x)$).}
\end{theorem}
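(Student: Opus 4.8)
The plan is to mirror the sum-merge argument of Theorem~\ref{thm:CMS_sum}, replacing the exact invariant ``each merged counter equals the sum of the frequencies of the elements mapped into it'' by the one-sided bound obtained from $\max(a,b)\le a+b$ for $a,b\ge 0$. Fix a row $i$ and, for a set of indices $\rho$, write $\Sigma(\rho,t)\triangleq\sum_{x:\,h_i(x)\in\rho} f_x^{(t)}$ for the total frequency seen up to time $t$ of the elements hashed into $\rho$. I would first prove, by induction over the stream, the \emph{range invariant}: at all times the value of any max-merge counter (SALSA or Tango) spanning indices $\rho$ is at most $\Sigma(\rho,t)$. A singleton counter holds exactly $\Sigma(\set{j},t)$; an update $\angles{x,v}$ raises both the counter containing $h_i(x)$ and the corresponding $\Sigma$ by $v$; and a merge of ranges $\rho_1,\rho_2$ replaces two values $V_1\le\Sigma(\rho_1,t)$, $V_2\le\Sigma(\rho_2,t)$ by $\max(V_1,V_2)\le\Sigma(\rho_1,t)+\Sigma(\rho_2,t)=\Sigma(\rho_1\cup\rho_2,t)$, so the invariant is preserved.

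Two of the three inequalities then follow quickly. For $\widehat{f_x^{\text{SALSA}}}\le\widehat{f_x^{\text{CMS}}}$, note that by the power-of-two alignment of SALSA the range of the SALSA counter containing $h_i(x)$ is contained in the block $\angles{\lfloor h_i(x)/2^\ell\rfloor\cdot 2^\ell,\ldots}$ of size $2^\ell$ indexing the underlying CMS counter $\widetilde h_i(x)$; the range invariant and monotonicity of $\Sigma$ in $\rho$ give that the SALSA value is at most $\Sigma(\text{block})$, which is exactly the underlying counter's value, and taking the minimum over rows yields the claim. For the lower bound $f_x\le\widehat{f_x^{\text{Tango}}}$ (and $\le\widehat{f_x^{\text{SALSA}}}$), I would argue that the counter containing $h_i(x)$ accumulates every update of $x$: between merges it starts at a value at least the current partial frequency of $x$ and is incremented by the full $v$ on each $\angles{x,v}$, while a max-merge never decreases it; hence at the end its value is at least $f_x$ in every row.

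The main obstacle is the middle inequality $\widehat{f_x^{\text{Tango}}}\le\widehat{f_x^{\text{SALSA}}}$, because Tango merges at a finer granularity and at different times than SALSA, so the range invariant alone only bounds a Tango value by a sum over a sub-range, not by the corresponding SALSA value. Here I would use the containment property established in the Tango description---that at every time each Tango counter's range lies inside the SALSA counter's range for the same index---to prove a \emph{per-index domination invariant}: for every index $j$ the Tango value at $j$ is at most the SALSA value at $j$. Updates preserve this since both sides rise by $v$, and a SALSA merge only raises SALSA values. The delicate case is a Tango merge of ranges $\rho_1,\rho_2$: by containment $\rho_1\cup\rho_2$ still lies inside a single SALSA counter, so all indices of $\rho_1\cup\rho_2$ share one SALSA value $V$; the inductive hypothesis gives that both pre-merge Tango values are at most $V$, hence so is their maximum, and the invariant survives. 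Taking minima over the $d$ rows then chains the three bounds into $f_x\le\widehat{f_x^{\text{Tango}}}\le\widehat{f_x^{\text{SALSA}}}\le\widehat{f_x^{\text{CMS}}}$.
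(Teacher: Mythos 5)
Your proof is correct and follows essentially the same route as the paper: the paper's (two-sentence) argument observes that after each max-merge the counter still upper-bounds the frequency of every element mapped into its range, and that max-merge values are dominated by the corresponding sum-merge values, which is exactly your range invariant $\Sigma(\rho,t)$ since the sum-merge counter holds precisely that quantity. You are in fact more explicit than the paper on the middle inequality $\widehat{f_x^{\text{Tango}}}\le\widehat{f_x^{\text{SALSA}}}$, for which the paper (like you) ultimately relies on the containment of Tango counters in SALSA counters asserted in the Tango description; your per-index domination invariant spells out the induction the paper leaves implicit.
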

\else
\begin{theorem}\label{thm:CMS_max}(*)
 \rev{Assume that SALSA uses max merge to unify counters. Let $2^\ell\cdot s$ be the maximal bit-size of any counter in SALSA CMS, and $\forall i\in[d]$ let $\widetilde{h}_i(x) = \floor{h_i(x)/2^\ell}$ be hash functions that map items into a standard CMS with $(2^\ell\cdot s)$-sized counters. Then for any $x\in U:f_x\le \widehat{f_x^{\text{SALSA}}}\le \widehat{f_x^{\text{CMS}}}$, where $\widehat{f_x^{\text{SALSA}}}$ and $\widehat{f_x^{\text{CMS}}}$ are the estimates of SALSA {and the underlying CMS (with functions $\widetilde{h}_i(x)$).}}
\end{theorem}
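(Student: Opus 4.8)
The plan is to reduce everything to a single \emph{sandwich invariant} on the value of each merged max-counter, analogous to the ``value $=$ total mapped frequency'' invariant used for sum merge in Theorem~\ref{thm:CMS_sum}, but necessarily weaker: a max-merged value depends on the \emph{history} of merges and not only on the set of elements currently mapped into it, so I cannot identify the value as a pure function of the mapped set. Concretely, fix a row $i$; for a merged counter spanning a set of original indices, let $M$ be the set of elements $x$ with $h_i(x)$ in that span, and let $C_M$ be the counter's value. I would prove by induction over the stream that, in the Cash Register model, at all times $\max_{x\in M} f_x \le C_M \le \sum_{x\in M} f_x$.

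The induction has three event types. At initialization every counter is an ordinary CMS counter, so $C_M=\sum_{x\in M}f_x$ and both bounds hold. On an update $\angles{z,v}$ with $v>0$ and $z$ mapped to $M$, the counter and $f_z$ both rise by $v$; the upper bound is immediate, and for the lower bound I would use $f_z^{\text{new}}=f_z^{\text{old}}+v\le C_M^{\text{old}}+v=C_M^{\text{new}}$ together with the fact that $z$ is the only frequency that changed. On a merge of $A$ and $B$ into $A\cup B$ with $C_{A\cup B}=\max(C_A,C_B)$, the upper bound follows from $\max(C_A,C_B)\le \max(\sum_{x\in A}f_x,\sum_{x\in B}f_x)\le \sum_{x\in A\cup B}f_x$ (using non-negativity) and the lower bound from $\max_{x\in A\cup B}f_x=\max(\max_{x\in A}f_x,\max_{x\in B}f_x)\le\max(C_A,C_B)$. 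The lower bound immediately yields $f_x\le C_M$ for every counter containing $x$, hence $f_x\le\widehat{f_x^{\text{SALSA}}}$ after the row-wise minimum. For the right inequality I would invoke the power-of-two alignment of SALSA merges: the span of any SALSA counter is contained in the aligned block of $2^\ell$ indices that defines the underlying counter $\widetilde h_i(x)$, so $M\subseteq G$ for that block $G$; the upper bound then gives $C_M\le\sum_{x\in M}f_x\le\sum_{x\in G}f_x$, which is exactly the underlying CMS counter value, and minimizing over rows gives $\widehat{f_x^{\text{SALSA}}}\le\widehat{f_x^{\text{CMS}}}$.

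The Tango inequality $\widehat{f_x^{\text{Tango}}}\le\widehat{f_x^{\text{SALSA}}}$ is where I expect the real work to lie, and it is the main obstacle: because max values are history dependent, the clean ``fewer indices $\Rightarrow$ smaller value'' argument available for sum merge does not transfer, as a small Tango set may have a larger max than a partial-sum bound on the enclosing SALSA set. My plan is to maintain a per-index coupling $T(j)\le S(j)$, where $T(j)$ and $S(j)$ are the current Tango and SALSA counter values containing index $j$, leveraging the established fact that every Tango counter lies inside a single SALSA counter at all times. Updates preserve the coupling since any index whose $T$ rises lies in a Tango counter, hence inside the enclosing SALSA counter whose $S$ rises by the same amount; SALSA merges only raise the right-hand side; and a Tango merge of $A,B$ raises each affected $T(j)$ to $\max(T_A,T_B)$, which is safe precisely because $A\cup B$ sits inside one SALSA counter of value $S_{A\cup B}\ge\max(T_A,T_B)$ by the inductive coupling. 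Taking minima over rows then gives $\widehat{f_x^{\text{Tango}}}\le\widehat{f_x^{\text{SALSA}}}$, while the lower bound of the sandwich invariant applied to Tango gives $f_x\le\widehat{f_x^{\text{Tango}}}$, closing the chain.
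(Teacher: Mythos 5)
Your proof is correct and follows essentially the same route as the paper's: the lower bound comes from observing that after every update and merge the counter value dominates the frequency of each element mapped into its span, and the upper bound comes from the max-merged value being dominated by the total mapped frequency (i.e., exactly the sum-merge value), which Theorem~\ref{thm:CMS_sum} already bounds by the corresponding underlying CMS counter. The version of the theorem you were asked to prove does not mention Tango, so your final paragraph, while a sensible coupling argument, is not needed here.
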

\fi 
\else
\begin{theorem}\label{thm:CMS_max}
 Assume that SALSA and Tango use max merge to unify counters. Let $2^\ell\cdot s$ be the maximal bit-size of any counter in SALSA CMS, and $\forall i\in[d]$ let $\widetilde{h}_i(x) = \floor{h_i(x)/2^\ell}$ be hash functions that map items into a standard CMS with $(2^\ell\cdot s)$-sized counters. Then for any $x\in U:f_x\le \widehat{f_x^{\text{Tango}}}\le \widehat{f_x^{\text{SALSA}}}\le \widehat{f_x^{\text{CMS}}}$, where $\widehat{f_x^{\text{Tango}}}, \widehat{f_x^{\text{SALSA}}}$ and $\widehat{f_x^{\text{CMS}}}$ are the estimates of Tango, SALSA, {and the underlying CMS (with functions $\widetilde{h}_i(x)$).}
\end{theorem}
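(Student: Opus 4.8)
The plan is to follow the template of the sum-merge argument (Theorem~\ref{thm:CMS_sum}), but to account for the fact that max merge destroys the clean invariant ``each counter equals the total frequency mapped to it.'' Instead I would maintain a \emph{two-sided} invariant for every (possibly merged) counter, and then chain the three estimates through a single monotonicity lemma. Throughout, fix a row $i$; all updates are positive (Cash Register), and a counter ``covering'' an index set $T$ means the merged counter whose base indices are $T$.

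First I would prove, by induction on the stream, two invariants for max merge. (a) \emph{Lower bound:} any counter covering $T$ has value at least $f_x$ for every $x$ with $h_i(x)\in T$; this holds because each positive update to $x$ is added to the counter covering $h_i(x)$, and a max merge $\max(c_1,c_2)$ never decreases either merged value. (b) \emph{Upper bound:} any counter covering $T$ has value at most $\sum_{y:\,h_i(y)\in T} f_y$, where the only new point over the sum-merge proof is the merge step, in which $\max(c_1,c_2)\le\max(\Sigma_1,\Sigma_2)\le\Sigma_1+\Sigma_2$ using that the per-block frequency sums $\Sigma_1,\Sigma_2$ are non-negative. Invariant (a) applied to Tango gives $f_x\le\widehat{f_x^{\text{Tango}}}$ (minimize over rows). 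Invariant (b) applied to SALSA, together with the fact that a SALSA counter's index set is contained in a single $2^\ell$-bucket $\set{j:\floor{j/2^\ell}=\widetilde h_i(x)}$, yields $\widehat{f_x^{\text{SALSA}}}\le\widehat{f_x^{\text{CMS}}}$, since the underlying CMS counter accumulates exactly $\sum_{y:\,\widetilde h_i(y)=\widetilde h_i(x)} f_y$.

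The remaining and most delicate inequality is $\widehat{f_x^{\text{Tango}}}\le\widehat{f_x^{\text{SALSA}}}$. Here the two-sided invariant is not enough: both estimates are sandwiched between $f_x$ and the same bucket sum, which does not order them. I would instead prove a \emph{domination lemma}: if one max-merge scheme $A$ refines another scheme $B$ at all times (each $A$-counter's index set is contained in the $B$-counter covering the same indices), then $C_A(j)\le C_B(j)$ for every index $j$ and every time. Our Tango construction guarantees exactly this containment of Tango within SALSA, and the underlying CMS is the coarsest such scheme (its counters are pre-sized and never merge). The lemma is proved by induction on the stream: process $B$'s update fully, then $A$'s. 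The crux is the merge step, where I would use that a max-merged counter equals the maximum of the values of its constituents, so $C_A'=\max\parentheses{C_A(j_0)+v,\ \max_{j\in T_A'}C_A(j)}$, apply the inductive hypothesis $C_A\le C_B$ to each constituent, and finally bound the result by $C_B'(j_0)$ using the containment $T_A'\subseteq T_B'$ and positivity of $v$. I expect this merge-event bookkeeping --- reconciling that $A$ and $B$ merge at different times and to different granularities --- to be the main obstacle; the max-of-constituents identity, the containment, and the positivity of updates are exactly what make it go through.

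Combining the three pieces yields the chain $f_x\le\widehat{f_x^{\text{Tango}}}\le\widehat{f_x^{\text{SALSA}}}\le\widehat{f_x^{\text{CMS}}}$ for every $x\in U$. I note that the same domination lemma also reproves the rightmost inequality (taking $A=$ SALSA and $B=$ the underlying CMS), so one could present it as a single workhorse and keep invariant (a) only for the leftmost bound; I would nonetheless retain invariant (b), as it makes the comparison to the underlying CMS most transparent and connects directly to the sum-merge analysis.
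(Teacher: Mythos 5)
Your proof is correct, but it takes a genuinely different route from the paper's for the two right-hand inequalities. The paper's proof is a two-line reduction: (i) the max-merge lower-bound invariant (your invariant (a)) gives $f_x\le\widehat{f_x}$, and (ii) every max-merged counter is pointwise upper bounded by the corresponding counter under sum merge, so the chain $\widehat{f_x^{\text{Tango}}}\le \widehat{f_x^{\text{SALSA}}}\le \widehat{f_x^{\text{CMS}}}$ is inherited from Theorem~\ref{thm:CMS_sum}. You instead prove a self-contained domination lemma comparing two max-merge schemes when one refines the other, together with a direct upper-bound invariant (b). What each buys: the paper's reduction is shorter and reuses the sum-merge invariant wholesale, but it only literally delivers $\widehat{f_x^{\text{Tango,max}}}\le \widehat{f_x^{\text{Tango,sum}}}\le \widehat{f_x^{\text{SALSA,sum}}}$ and $\widehat{f_x^{\text{SALSA,max}}}\le \widehat{f_x^{\text{SALSA,sum}}}$ --- the middle inequality between the two \emph{max-merge} estimates does not follow from comparing each scheme to its own sum-merge version, since those two upper bounds are not ordered against each other. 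Your domination lemma closes exactly that gap (and, as you note, re-derives $\widehat{f_x^{\text{SALSA}}}\le\widehat{f_x^{\text{CMS}}}$ by taking $B$ to be the never-merging underlying sketch), at the cost of the merge-event bookkeeping you describe; your analysis of that step (refinement is preserved, the merged value is the max of constituents each dominated by the single $B$-counter containing them, and positivity of $v$) is sound. The one ingredient you take on faith --- that Tango's counters are contained in the corresponding SALSA counters at all times --- is asserted by the paper in its description of Tango, so it is fair to invoke. In short: correct, and arguably more airtight on the Tango-versus-SALSA comparison than the paper's own one-sentence argument.
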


\begin{proof}
After each merge, the counter value upper bounds the frequency of any element mapped to the hash range of the merged counter. In addition, the value of SALSA and Tango counters when using the max merge are upper bounded by the corresponding value of SALSA and Tango counters when using the sum merge.
\end{proof}
\fi


\ifdefined\fullversion
Theorems~\ref{thm:CMS_sum} and \ref{thm:CMS_max} show that SALSA CMS and Tango CMS are at least as accurate as the underlying CMS for both merge operations. 
\else
\rev{Theorems~\ref{thm:CMS_sum} and \ref{thm:CMS_max} show that SALSA CMS is at least as accurate as the underlying CMS for both merge operations.}
\fi 
Intuitively, by sum-merging every consecutive $\overline n$ bits, we obtain estimates that are identical to a CMS sketch that uses $\overline n$ bit counters. 
Therefore, sum-merging SALSA's estimates are upper bounded by the CMS estimates. 
In Cash Register streams, max-merging estimates are upper bounded by the sum-merging ones. 
\ifdefined\fullversion
Finally, for any given element, the estimates of SALSA CMS and Tango CMS are lower bounded by its true frequency, which implies that our approach provides the same error guarantee as the underlying sketch. 
\else
\rev{Finally, for any given element, the estimates of SALSA CMS are lower bounded by its true frequency, which implies that our approach provides the same error guarantee as the underlying sketch.}
\fi 

\ifdefined\fullversion
SALSA CMS also improves the performance of count distinct queries for Linear Counting~\cite{LC} using CMS.\ran{Remove anything else about linear counting and add a ref to the full version}  Recall Linear Counting estimates the number of distinct queries using the fraction $p$ of zero counters.
We consider running Linear Counting using SALSA CMS staring with $s=8$ bit counters, compared to a standard CMS implementation using 32-bit counters.  Unlike standard CMS, SALSA may be unable to determine the exact number of ($s$-bit) counters that remain zero, as some are merged into other counters. Instead, we compute the fraction $\mathfrak f$ of $s$-bit counters that remained zero \emph{from the overall number of counters that did not merge}.  For every counter that is the result of one or more merges, we know that at least one of its sub-counters is not zero;  we optimistically assume that a fraction $\mathfrak f$ of its remaining sub-counters are zero.  So, for example, our estimate of the number of counters that are 0 is the number of $s$-bit counters that remained zero, plus $\mathfrak f$ times the number of $2s$-bit counters, plus $3\mathfrak f$ times the number of $4s$-bit counters, and so on if there are larger counters.  Note that this approach is heuristic and its accuracy guarantees are left as future work.
\else
\fi
\textbf{Conservative Update Sketch (CUS):}
SALSA CUS is similar to the standard CUS -- whenever an update $\angles{x,v}$ arrives, each counter $C[i,h_i(x)]$ is set to $\max\set{C[i,h_i(x)],v+\widehat{f_x}}$ with  $\widehat{f_x}=\min_{i\in[d]} C[i,h_i(x)]$ being the previous frequency estimate for $x$. 
Unlike the CMS variant, the correctness of SALSA CUS is not immediate as not all counters are increased for each packet. Theorem~\ref{thm:cus} shows that SALSA CUS is correct in the Cash Register model when working with the max-merge method. 
     
\ifdefined\sigmodSubmission
\begin{theorem}
\label{thm:cus}(*)
 Let $2^\ell\cdot s$ be the maximal bit-size of any counter in \emph{max-merge} SALSA CUS, and $\forall i\in[d]$ let $\widetilde{h}_i(x) = \floor{h_i(x)/2^\ell}$ be hash functions that map items into a standard CUS with $(2^\ell\cdot s)$-sized counters. Then for any $x\in U:f_x\le \widehat{f_x^{\text{SALSA}}}\le \widehat{f_x^{\text{CUS}}}$, where $\widehat{f_x^{\text{SALSA}}}$ and $\widehat{f_x^{\text{CUS}}}$ are the estimates of SALSA and the underlying CUS (with functions $\widetilde{h}_i(x)$).
\end{theorem}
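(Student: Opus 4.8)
The plan is to establish the two inequalities separately, both by induction over the stream of updates, exploiting the structural fact that the SALSA counter partition always \emph{refines} the partition induced by $\widetilde h_i$. Concretely, because the maximal counter bit-size is $s\cdot 2^\ell$ and merges are aligned, powers-of-two unions of consecutive indices, every SALSA counter at every time is a sub-interval of exactly one underlying block $\angles{j\cdot 2^\ell,\ldots,(j+1)\cdot 2^\ell-1}=\widetilde h_i^{-1}(j)$; in particular, no merge ever crosses a block boundary. I would state this alignment observation first, as it is what lets me compare SALSA against the coarse underlying CUS block-by-block.

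For the lower bound $f_x\le\widehat{f_x^{\text{SALSA}}}$, I would prove the invariant that after every update every SALSA counter over-estimates the frequency of every element currently mapped into it. This mirrors the standard CUS correctness argument: assuming the invariant before an update $\angles{x,v}$, the pre-update estimate satisfies $\widehat{f_x^{\text{SALSA}}}\ge f_x$, so each of $x$'s counters is set to at least $v+\widehat{f_x^{\text{SALSA}}}\ge f_x^{\text{new}}$, while the counters of other elements only increase (their frequencies being unchanged). The one new ingredient is the merge step, but since max-merge sets the merged value to the maximum of its constituents, and the maximum of over-estimators for two index sets over-estimates frequencies for their union, the invariant is preserved; this is exactly the property already used in Theorem~\ref{thm:CMS_max}.

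For the upper bound $\widehat{f_x^{\text{SALSA}}}\le\widehat{f_x^{\text{CUS}}}$ I would run SALSA CUS and the underlying CUS on the same stream and maintain the coupled invariant: for every row $i$ and index $c\in[w]$, the value of the SALSA counter containing $c$ is at most $\widetilde C[i,\floor{c/2^\ell}]$, the underlying block counter. This invariant immediately gives the claim by taking $c=h_i(x)$ and minimizing over $i$. In the inductive step for $\angles{x,v}$, the hypothesis first yields $\widehat{f_x^{\text{SALSA}}}\le\widehat{f_x^{\text{CUS}}}$ (both are minima of compared quantities). For indices $c$ outside $x$'s block nothing changes on either side (here I use that SALSA's merges stay within $x$'s block). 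For indices $c$ inside $x$'s block, I would show every SALSA sub-counter value in that block stays bounded by the single new underlying value $M\triangleq\widetilde C[i,\widetilde h_i(x)]=\max\set{\widetilde C^{\text{old}},\,v+\widehat{f_x^{\text{CUS}}}}$: the incremented counter is set to $\max\set{C^{\text{old}},\,v+\widehat{f_x^{\text{SALSA}}}}\le M$ (using the old invariant and $\widehat{f_x^{\text{SALSA}}}\le\widehat{f_x^{\text{CUS}}}$), the untouched sub-counters were already $\le M$, and every merge only takes maxima of values already $\le M$, hence stays $\le M$.

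The main obstacle is the upper bound: unlike CMS, the CUS update rule couples all rows through the single estimate $\widehat{f_x}$, and the merge operation changes the partition dynamically, so the comparison $\widehat{f_x^{\text{SALSA}}}\le\widehat{f_x^{\text{CUS}}}$ must be carried inside the induction — it is simultaneously used to bound the increment and re-established for the next step. The alignment observation and the closure of ``$\le M$'' under maxima are what make the merge step harmless; I would be careful to verify that the order of operations (compute $\widehat{f_x}$ from pre-update values, then increment and merge) matches in both sketches so that the coupling is exact.
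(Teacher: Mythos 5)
Your proposal is correct and follows essentially the same route as the paper: an induction over updates maintaining the counter-wise domination invariant $C_{SALSA}[i,h_i(x)]\le C_{CUS}[i,\widetilde h_i(x)]$, with the estimate comparison carried inside the induction and max-merge keeping merged values bounded. The only differences are cosmetic — the paper first reduces to unit updates ($v=1$) whereas you handle general $v$ directly, and you spell out the block-alignment observation and the lower bound $f_x\le\widehat{f_x^{\text{SALSA}}}$, which the paper leaves implicit.
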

\else
\begin{theorem}
\label{thm:cus}
 Let $2^\ell\cdot s$ be the maximal bit-size of any counter in \emph{max-merge} SALSA CUS, and $\forall i\in[d]$ let $\widetilde{h}_i(x) = \floor{h_i(x)/2^\ell}$ be hash functions that map items into a standard CUS with $(2^\ell\cdot s)$-sized counters. Then for any $x\in U:f_x\le \widehat{f_x^{\text{SALSA}}}\le \widehat{f_x^{\text{CUS}}}$, where $\widehat{f_x^{\text{SALSA}}}$ and $\widehat{f_x^{\text{CUS}}}$ are the estimates of SALSA and the underlying CUS (with functions $\widetilde{h}_i(x)$).
\end{theorem}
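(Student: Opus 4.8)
The plan is to establish the two inequalities separately, both via invariants maintained over the entire stream that couple max-merge SALSA CUS with the underlying fixed-size CUS on the coarse hashes $\widetilde{h}_i$. Throughout, let $B_i(x)$ denote the merged block of row $i$ that currently contains position $h_i(x)$, let $V(B_i(x))$ be its value, and recall that on an update $\angles{x,v}$ (with $v>0$) each block is set to $\max\set{V(B_i(x)), v+\widehat{f_x}}$ where $\widehat{f_x}=\min_i V(B_i(x))$ is the pre-update estimate.

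For the lower bound $f_x\le \widehat{f_x^{\text{SALSA}}}$, I would adapt the standard CUS correctness argument to account for merges. The claim is the invariant $V(B_i(x))\ge f_x$ at every step, where $f_x$ is the current frequency. On an update $\angles{x,v}$, the induction hypothesis gives $V(B_i(x))\ge f_x^{\text{old}}$ for all $i$, hence $\widehat{f_x}\ge f_x^{\text{old}}$, so setting each block to $\max\set{V(B_i(x)),v+\widehat{f_x}}$ makes it at least $v+f_x^{\text{old}}=f_x^{\text{new}}$; blocks of other elements only increase. A merge replaces two blocks by their max, which preserves any lower bound already met by each sub-block. Taking the minimum over rows at query time yields $\widehat{f_x^{\text{SALSA}}}\ge f_x$.

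The upper bound $\widehat{f_x^{\text{SALSA}}}\le \widehat{f_x^{\text{CUS}}}$ is the crux, and it genuinely differs from the CMS case: because CUS updates are conservative (min-based), the merged-counter value is not a simple sum of element frequencies, so the ``finer partition gives smaller sums'' reasoning is unavailable. Instead I would run both sketches on the same stream and prove the domination invariant (I): for every row $i$ and every SALSA block $B$, $V(B)\le \widetilde{C}[i,\beta(B)]$, where $\beta(B)$ is the underlying bin $\floor{\cdot/2^\ell}$ containing $B$ and $\widetilde{C}$ is the underlying CUS. The enabling structural fact is dyadic nesting: every merged block is power-of-two aligned and of size at most $2^\ell$ base counters (by the definition of $\ell$ as the maximal attained size), so each block lies entirely inside one underlying bin and $\beta(B)$ is well defined and stable. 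Given (I), at an update $\angles{x,v}$ the pre-update estimates satisfy $\min_i V(B_i(x))\le \min_i \widetilde{C}[i,\widetilde{h}_i(x)]$; since each post-update value is the max of its own old value and $v$ plus its own pre-estimate, and both terms are dominated by the post-update underlying value, (I) is restored. Merges preserve (I) trivially, as $\max\set{V(B),V(B')}\le \widetilde{C}[i,\beta]$ whenever both sub-blocks already satisfy it. Taking the min over rows gives the claim.

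The main obstacle is precisely the step-by-step verification of (I) under updates: one must check simultaneously that the SALSA pre-update estimate is dominated by the underlying one (itself an instance of (I) applied to the queried blocks), and that every block sharing the affected underlying bin with $h_i(x)$ but \emph{not} updated in SALSA still respects the bound after the coarse counter increases. I would also highlight that max-merge, rather than sum-merge, is essential here, since summing merged counters could overshoot the corresponding conservative-update value in the coarse sketch and break (I).
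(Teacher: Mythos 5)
Your proof takes essentially the same route as the paper's: an induction over the stream maintaining the counter-wise domination invariant $C_{SALSA}[i,h_i(x)]\le C_{CUS}[i,\widetilde h_i(x)]$, with max-merge preserving it because the merged value is the maximum of sub-values that are each already dominated by the (only-increasing) underlying counter. The differences are cosmetic: you handle general update values $v$ directly where the paper first reduces to unit updates, and you spell out the lower bound $f_x\le\widehat{f_x^{\text{SALSA}}}$ and the dyadic-nesting fact (each block stays inside a single underlying bin), both of which the paper leaves implicit.
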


\begin{proof}
It is sufficient to consider only updates with $v=1$ since each $\angles{x,v}$ update is identical to $v$ consecutive $\angles{x,1}$ updates.
The proof is by induction on the number of updates. Specifically, we show that after each update it holds that
\begin{equation}\label{eq:cuSALSA_proof}
\forall x, i \in [d]: C_{SALSA}[i,h_i(x)] \le C_{CUS}[i,\tilde{h}_i(x)] \ ,    
\end{equation}
where we denote by $C_{SALSA}$ and $C_{CUS}$ the counters of SALSA and the underlying CUS, respectively.

As a base case, initially $C_{SALSA}[i,h_i(x)] = C_{CUS}[i,\tilde{h}_i(x)]=0 \,\, \forall i \in [d]$. 
We show that if Equation \eqref{eq:cuSALSA_proof} holds, it continues to hold after an additional update.   

Case 1: $C_{SALSA}[i,h_i(x)] = C_{CUS}[i,\tilde{h}_i(x)]$. 
In this case, on update $\angles{x,1}$, $C_{CUS}[i,\tilde{h}_i(x)]$ is increased by CUS. Therefore the claim trivially holds if there is no overflow in SALSA. If there is an overflow, the claim holds by the virtue of the max-merge. That is, the value of the merged counter grows by exactly 1. This also means that the inequality holds for all counters involved in this merge since they are all upper bounded by $C_{CUS}[i,h_i(x)]$ prior to the update.  

Case 2: $C_{SALSA}[i,h_i(x)] < C_{CUS}[i,\tilde{h}_i(x)]$. In this case, on update $\angles{x,1}$, the claim trivially holds if there is no overflow in SALSA. If there is an overflow, by the virtue of the max-merge, the value of the merged counter still only grows by 1. This also means that the inequality holds for all counters involved in this merge since they are all upper bounded by $C_{CUS}[i,h_i(x)]$ prior to the update, and therefore are upper bounded by  $C_{CUS}[i,\tilde{h}_i(x)]$ after it.  \qedhere

\end{proof}
\fi

\textbf{Count Sketch (CS):}
SALSA can also extend the CS, with a minor modification. Unlike most existing implementations, which use the standard Two's Complement encoding, SALSA CS uses a \emph{sign-magnitude} representation of counters (as counters can be negative), with the most significant bit for the sign and the rest as magnitude.  
While Two's Complement represents values in the range $\set{-2^{s-1},\ldots,2^{s-1}-1}$, sign-magnitude does not allow a representation of $-2^{s-1}$.  However, our use of sign-magnitude
is critical for us to ensure that the overflow event is sign-symmetric, which allow us to prove that our sketch is unbiased. 
When an $s\cdot 2^\ell$ bits counter exceeds an absolute value of $2^{s\cdot 2^\ell - 1}- 1$, we merge to double its size. 
\rev{
When merging counters in SALSA CS, we use sum-merge;  note max-merge may not be correct as counters may have opposite signs.
}
We prove the correctness of SALSA 
CS.
For simplicity, we focus on the main variant. That is, a counter merges \emph{at most twice}, starting from $s=8$ bits and assuming that no counter reaches an absolute value of $2^{31}$, which is the common implementation assumption.





\ifdefined\sigmodSubmission\vspace*{-7mm}\fi
 
\ \\
\begin{wrapfigure}{r}[0.cm]{0.07\textwidth}
    \begin{center}
        \ifdefined\sigmodSubmission\vspace*{-3mm}\fi
        \hspace*{-4mm}
        \includegraphics[width=0.09\textwidth]{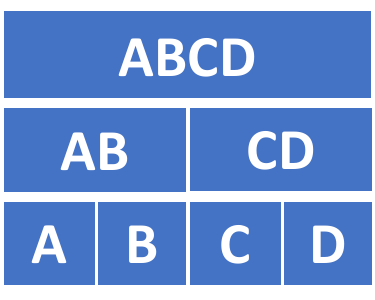}
    \end{center}
    \ifdefined\sigmodSubmission\vspace*{-4mm}\fi
\end{wrapfigure} \mbox{             } \mbox{    } \rev{
Let $x\in U$ be an element mapped to counter $A$, which may be merged with counter $B$ to create $\angles{A,B}$, which in turn may later merge with $\angles{C,D}$ to make the $4s$-bit counter $\angles{A,B,C,D}$. This setting is illustrated to the right:
}


\rev{
  We wish to show that the estimates of each row in SALSA CS are unbiased, and further that each estimate has variance with SALSA CS that is no larger than the corresponding variance with CS. As mentioned, here give a full analysis for starting with $s$ bit counters and allowing counters to grow to $4s$ bits, as this is the focus in our implementation, but the approach generalizes readily to additional levels.  we now introduce some notation to analyze SALSA CS.
}

We use $O_{AB}$ to denote the event that $A$ and $B$ \rev{have been} merged at query time (either into $\angles{A,B,C,D}$ or just as $\angles{A,B}$) and $O_{ABCD}$ for the event that $A, B, C,$ and $D$ \rev{have been} merged into $\angles{A,B,C,D}$. We also denote the value of $A$ by (the random variable) $X_A$, the value of $\angles{A,B}$ by $X_{AB}$, and similarly for $X_B$, $X_{CD}$, and $X_{ABCD}$. We emphasize that $X_S$ represents the value of the count mapped to $S$, regardless of whether the counter overflows (e.g., $X_A\triangleq\sum_{y\in U: h(y)=A}f_yg(y)$ even if $X_A\ge 2^s$).
Without loss of generality, we also assume that the sign of $x$ is $g(x)=1$ (and thus $\mathbb E[X_A]=\mathbb E[X_{AB}]=\mathbb E[X_{ABCD}]=f_x$).
This allows us to express the estimate given \mbox{in a row for SALSA CS as:}
$    \widehat{f_x}= X_A (1-O_{AB}) + X_{AB} \cdot O_{AB} \cdot (1-O_{ABCD}) + X_{ABCD}\cdot O_{ABCD} 
    = X_A (1-O_{AB}) + X_{AB} \cdot O_{AB} - X_{AB} \cdot O_{AB}\cdot O_{ABCD} + X_{ABCD}\cdot O_{ABCD}.
$
 Observe that $O_{ABCD}\subseteq O_{AB}$ and thus $O_{AB}\cdot O_{ABCD}=O_{ABCD}$.  \rev{This implies} that (since $X_{AB}-X_A=X_B$ and $X_{ABCD}-X_{AB}=X_{CD}$):
\begin{align}
    \widehat{f_x}
    = X_A + X_{B} \cdot O_{AB} + X_{CD}\cdot O_{ABCD}
    .\label{eq:CS_est}
\end{align}
We continue by proving that the estimate is unbiased.
\begin{lemma}
SALSA is unbiased, i.e.,
$\mathbb E[\widehat{f_x}] = f_x.$
\end{lemma}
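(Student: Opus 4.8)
The plan is to start from the closed form \eqref{eq:CS_est}, $\widehat{f_x} = X_A + X_B\cdot O_{AB} + X_{CD}\cdot O_{ABCD}$, and apply linearity of expectation, so that it suffices to evaluate the three terms separately. The first term is immediate: since we assume $g(x)=1$ and $x$ is mapped to $A$, we have $X_A = f_x + \sum_{y\neq x:\,h(y)=A} f_y g(y)$, and because each sign $g(y)$ with $y\neq x$ has mean zero, we get $\mathbb E[X_A]=f_x$. Thus the whole claim reduces to showing that the two correction terms vanish in expectation, i.e.\ $\mathbb E[X_B\cdot O_{AB}]=0$ and $\mathbb E[X_{CD}\cdot O_{ABCD}]=0$.

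To handle $\mathbb E[X_B\cdot O_{AB}]=0$ I would use a sign-flipping involution argument, which is exactly where the sign-magnitude representation is essential. Let $\Phi$ be the map that flips $g(y)\mapsto -g(y)$ for every element $y$ mapped to $B$ and leaves all other signs (and the hashes $h$) untouched; since the signs are uniform and independent, $\Phi$ is measure-preserving. Under $\Phi$ we clearly have $X_B\mapsto -X_B$. The key point is that $O_{AB}$ is invariant under $\Phi$: before the $A$--$B$ merge, counters $A$ and $B$ accumulate only elements hashed to $A$ and to $B$ respectively, and because counters use sign-magnitude, an overflow is triggered purely by the \emph{absolute value} of a counter crossing its threshold. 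Flipping every sign feeding $B$ negates the running content of $B$ at each time step but leaves its magnitude trajectory, and hence its overflow times, unchanged; counter $A$ is untouched; and once merged, $O_{AB}$ is locked. Therefore $O_{AB}\circ\Phi=O_{AB}$, and applying $\Phi$ inside the (measure-preserving) expectation gives $\mathbb E[X_B\cdot O_{AB}] = \mathbb E[-X_B\cdot O_{AB}]$, forcing it to be $0$.

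The term $\mathbb E[X_{CD}\cdot O_{ABCD}]=0$ follows from the same idea with the involution that flips the signs of all elements mapped to $C$ or $D$. This negates $X_{CD}$ while, by the same magnitude-invariance, leaving the overflow behaviour of the $C$--$D$ side (the individual overflows that cause the $C$--$D$ merge, and the subsequent overflow of $\langle C,D\rangle$) unchanged; the $A$--$B$ side is not touched at all. Since $O_{ABCD}$ depends only on whether the $\langle A,B\rangle$ and $\langle C,D\rangle$ blocks overflow, it is invariant, and the same cancellation yields $0$. Combining the three terms gives $\mathbb E[\widehat{f_x}]=f_x$.

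I expect the main obstacle to be making the invariance claims fully rigorous: one must argue that each overflow/merge event is a measurable function of only the signs feeding the relevant counter block, and that flipping all of those signs preserves the entire \emph{magnitude} trajectory (not merely the final value), which is precisely what fails for two's-complement encoding and motivates the sign-magnitude choice. A secondary subtlety is the independence assumption on the signs: the involution must be measure-preserving, so the clean argument wants fully independent signs, and if only limited independence is available one has to verify that flipping an entire block's signs remains a symmetry of the sign distribution.
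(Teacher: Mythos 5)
Your proposal is correct and follows essentially the same route as the paper: both start from the decomposition $\widehat{f_x}=X_A+X_B\cdot O_{AB}+X_{CD}\cdot O_{ABCD}$ of Equation~\eqref{eq:CS_est}, use $\mathbb E[X_A]=f_x$, and reduce the claim to the vanishing of the two cross terms via the sign-symmetry of $X_B$ (resp.\ $X_{CD}$) conditioned on the overflow event, which the paper asserts in a footnote as a consequence of the sign-magnitude representation. Your sign-flipping involution is simply the explicit justification of that conditional symmetry, i.e.\ of $\Pr[X_B=\mathfrak i\mid O_{AB}]=\Pr[X_B=-\mathfrak i\mid O_{AB}]$, so the two arguments coincide in substance.
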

\begin{proof}
Due to the sign-symmetry of the sign function $g$, we have that\footnote{This assumes that the variables remain symmetric conditioned on the overflow events, which is correct for SALSA CS due to \mbox{our sign-magnitude representation.}} $\mathbb E[X_B|O_{AB}]=0$ and $\mathbb E[X_{CD}|O_{ABCD}]=0$ (as $\forall\mathfrak i: \Pr[X_B=\mathfrak i|O_{AB}]=\Pr[X_B=-\mathfrak i|O_{AB}]$ and $\Pr[X_{CD}=\mathfrak i|O_{ABCD}]=\Pr[X_{CD}=-\mathfrak i|O_{ABCD}]$).
Thus, according to~\eqref{eq:CS_est}:
{\small
\begin{multline*}
\mathbb E[\widehat{f_x}]
    = \mathbb E\brackets{X_A + X_{B} O_{AB} + X_{CD}O_{ABCD}}
    = \mathbb E\brackets{X_A} \\+ \mathbb E\brackets{X_{B}| O_{AB}}  \Pr[O_{AB}] + \mathbb E\brackets{X_{CD} | O_{ABCD}} \Pr[O_{ABCD}] = f_x.\qedhere
\end{multline*}
}
\end{proof}
\mbox{We show that SALSA reduces the variance in each row.}
\begin{lemma}\label{lem:CS_var}
$\Var[\widehat{f_x}] \le \Var[CS],$ where $\Var[CS]\triangleq \Var[X_{ABCD}]$ is the variance of the underlying Count Sketch.
\end{lemma}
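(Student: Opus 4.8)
The plan is to prove the equivalent inequality $\mathbb E[\widehat{f_x}^2]\le \mathbb E[X_{ABCD}^2]$. This suffices because both $\widehat{f_x}$ and $X_{ABCD}$ are unbiased estimators of $f_x$ — the former by the preceding lemma, and the latter because $X_{ABCD}$ is exactly the value of the single $4s$-bit counter into which $\widetilde h_i$ maps $x$ in the underlying Count Sketch — so $\Var[\widehat{f_x}]=\mathbb E[\widehat{f_x}^2]-f_x^2$ and $\Var[CS]=\Var[X_{ABCD}]=\mathbb E[X_{ABCD}^2]-f_x^2$. I would also record the additive identity $X_{ABCD}=X_A+X_B+X_{CD}$ (i.e.\ $X_B=X_{AB}-X_A$ and $X_{CD}=X_{ABCD}-X_{AB}$), which lets me compare the two second moments term by term.

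First I would expand both squares. Using that $O_{AB},O_{ABCD}$ are indicators (so $O_{AB}^2=O_{AB}$, $O_{ABCD}^2=O_{ABCD}$) together with $O_{ABCD}\subseteq O_{AB}$ (hence $O_{AB}O_{ABCD}=O_{ABCD}$), squaring~\eqref{eq:CS_est} gives
\begin{multline*}
\widehat{f_x}^2=X_A^2+X_B^2O_{AB}+X_{CD}^2O_{ABCD}\\
+2X_AX_BO_{AB}+2X_AX_{CD}O_{ABCD}+2X_BX_{CD}O_{ABCD},
\end{multline*}
whereas $X_{ABCD}^2=(X_A+X_B+X_{CD})^2$ expands identically but with every indicator replaced by $1$. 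Subtracting, the $X_A^2$ terms cancel and the difference $\mathbb E[X_{ABCD}^2]-\mathbb E[\widehat{f_x}^2]$ becomes the sum of two \emph{diagonal} terms,
\[
\mathbb E[X_B^2(1-O_{AB})]+\mathbb E[X_{CD}^2(1-O_{ABCD})],
\]
and three \emph{cross} terms,
\[
2\mathbb E[X_AX_B(1-O_{AB})]+2\mathbb E[X_AX_{CD}(1-O_{ABCD})]+2\mathbb E[X_BX_{CD}(1-O_{ABCD})].
\]

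The two diagonal terms are manifestly non-negative, since $X_B^2,X_{CD}^2\ge 0$ and $1-O_{AB},1-O_{ABCD}\in\{0,1\}$. The crux is that each of the three cross terms vanishes, which I would obtain from the same sign-symmetry that drove the unbiasedness proof, now applied on the complementary (non-)merge events. Concretely, let $\sigma_B$ be the involution that flips $g(y)$ for every $y$ mapped to $B$, and $\sigma_{CD}$ the involution that flips $g(y)$ for every $y$ mapped to $C$ or $D$. Each is measure-preserving — this is exactly the symmetry granted by the sign-magnitude representation. Because an overflow is decided by the \emph{absolute} value of a counter's running sum, flipping all signs inside a block leaves every overflow and merge indicator unchanged; thus $\sigma_B$ fixes $X_A$ and $O_{AB}$ while sending $X_B\mapsto -X_B$, and $\sigma_{CD}$ fixes $X_A,X_B$ and $O_{ABCD}$ while sending $X_{CD}\mapsto -X_{CD}$. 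Hence $X_AX_B(1-O_{AB})$ is odd under $\sigma_B$, and both $X_AX_{CD}(1-O_{ABCD})$ and $X_BX_{CD}(1-O_{ABCD})$ are odd under $\sigma_{CD}$; integrating an odd function against a symmetric measure yields $0$. Therefore the cross contribution is zero, the difference equals the non-negative diagonal part, and $\Var[\widehat{f_x}]\le\Var[CS]$.

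The main obstacle is justifying that these block sign-flips are genuinely measure-preserving and, crucially, that they fix the overflow indicators: I must argue that whether a counter (or a merged pair) ever overflows depends only on the absolute values of its partial sums, so negating all signs within a block leaves the entire merge pattern — and the values of the untouched blocks — intact. This is precisely the role of the sign-magnitude encoding and the sign-symmetric overflow condition flagged in the footnote; with ordinary two's-complement the overflow event would not be sign-symmetric and the cross terms need not cancel. A secondary point to state carefully is that distinct blocks index disjoint element sets, so $\sigma_B$ and $\sigma_{CD}$ act on independent coordinates and do not disturb the conditioning. Finally, I would note that the same telescoping expansion and block-flip argument extends to counters merging more than twice, contributing one additional non-negative diagonal term and one additional (vanishing) family of cross terms per level.
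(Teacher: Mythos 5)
Your proof is correct and follows essentially the same route as the paper's: both reduce the variance comparison to second moments via unbiasedness, expand $\widehat{f_x}=X_A+X_BO_{AB}+X_{CD}O_{ABCD}$ against $X_{ABCD}=X_A+X_B+X_{CD}$, kill the cross terms by sign-symmetry conditioned on the (non-)merge events, and are left with the non-negative diagonal terms $\mathbb E[X_B^2(1-O_{AB})]+\mathbb E[X_{CD}^2(1-O_{ABCD})]$. Your explicit block sign-flip involutions are simply a more detailed justification of the sign-symmetry-under-conditioning claim that the paper asserts in its footnote about the sign-magnitude representation.
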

\begin{proof}
Let us prove that $\Var[CS]-\Var[\widehat{f_x}]\ge 0$.
Observe that since CS and SALSA CS are unbiased, we have that:
\begin{align*}
    \Var[&CS]-\Var[\widehat{f_x}] \\&=\mathbb E\brackets{\parentheses{X_{ABCD}-f_x}^2} - \mathbb E\brackets{\parentheses{\widehat{f_x}-f_x}^2}\\
    &= \mathbb E\brackets{{X_{ABCD}^2-\widehat{f_x}^2-2f_xX_{ABCD}+2f_x\widehat{f_x}}}\\
    &=
    \mathbb E\brackets{X_{ABCD}^2}-\mathbb E\brackets{\widehat{f_x}^2}+2f_x\mathbb E\brackets{\widehat{f_x}-X_{ABCD}}.
\end{align*}
Due to unbiasedness, $\mathbb E\brackets{\widehat{f_x}}=\mathbb E\brackets{X_{ABCD}}=f_x$ and thus
\begin{align}
    \Var[CS]-\Var[\widehat{f_x}]=\mathbb E\brackets{X_{ABCD}^2}-\mathbb E\brackets{\widehat{f_x}^2}.\label{eq:CSeq1}
\end{align}
We continue by simplifying the expression for $\mathbb E\brackets{\widehat{f_x}^2}$:
{\small
\begin{align*}
    \mathbb E&\brackets{\widehat{f_x}^2} = \mathbb E\brackets{\parentheses{X_A+X_BO_{AB}+X_{CD}O_{ABCD}}^2}\\
     &= \mathbb E\brackets{X_A^2} + \mathbb E\brackets{X_B^2O_{AB}}+\mathbb E\brackets{X_{CD}^2O_{ABCD}}\\
     &+2\Bigg(\mathbb E\brackets{X_{A}X_BO_{AB}}+\mathbb E\brackets{X_AX_{CD}O_{ABCD}}\\&\qquad+\mathbb E\brackets{X_BX_{CD}O_{ABCD}}\Bigg)\\
    &= \mathbb E\brackets{X_A^2} + \mathbb E\brackets{X_B^2|O_{AB}}\Pr[O_{AB}]\\&+\mathbb E\brackets{X_{CD}^2|O_{ABCD}}\Pr[O_{ABCD}]+2\Big(\mathbb E\brackets{X_{A}X_B|O_{AB}}\Pr[O_{AB}]\\&+\mathbb E\brackets{X_AX_{CD}|O_{ABCD}}\Pr[O_{ABCD}]\\&+\mathbb E\brackets{X_BX_{CD}|O_{ABCD}}\Pr[O_{ABCD}]\Big).     
\end{align*}
}
Since $x$ is mapped to $A$, we can use the sign-symmetry of $X_B$ and $X_{CD}$ to get\footnotemark[1]  $\mathbb E\brackets{X_{A}X_B|O_{AB}}=\mathbb E\brackets{X_AX_{CD}|O_{ABCD}}=\mathbb E\brackets{X_BX_{CD}|O_{ABCD}}=0$, which gives
{\small
\begin{multline}
\mathbb E\brackets{\widehat{f_x}^2} = \mathbb E\brackets{X_A^2} + \mathbb E\brackets{X_B^2|O_{AB}}\Pr[O_{AB}]\\+\mathbb E\brackets{X_{CD}^2|O_{ABCD}}\Pr[O_{ABCD}]
= \mathbb E\brackets{X_A^2} + \mathbb E\brackets{X_B^2}+\mathbb E\brackets{X_{CD}^2} \\- 
\parentheses{\mathbb E\brackets{X_B^2|\neg O_{AB}}\Pr[\neg O_{AB}]+\mathbb E\brackets{X_{CD}^2|\neg O_{ABCD}}\Pr[\neg O_{ABCD}]}\\\le \mathbb E\brackets{X_A^2} + \mathbb E\brackets{X_B^2}+\mathbb E\brackets{X_{CD}^2}.\label{eq:CSeq2}
\end{multline}
}
Now, notice that $X_{ABCD}^2 = \parentheses{X_A+X_B+X_{CD}}^2= X_A^2+X_B^2+X_{CD}^2+2(X_AX_B+X_AX_{CD}+X_BX_{CD})$. Due to the sign-symmetry of $g$, we have $\mathbb E\brackets{X_AX_B}=\mathbb E\brackets{X_AX_{CD}}=\mathbb E\brackets{X_BX_{CD}}=0$ and thus:
$
\mathbb E\brackets{X_{ABCD}^2} = \mathbb E\brackets{X_A^2}+\mathbb E\brackets{X_B^2}+\mathbb E\brackets{X_{CD}^2}\ge \mathbb E\brackets{\widehat{f_x}^2},
$
where the inequality follows from~\eqref{eq:CSeq2}. Together with~\eqref{eq:CSeq1}, \mbox{this concludes the proof. \qedhere}
\end{proof}

Because the theorem shows the error variance is no larger \emph{for each row}, following the same analysis as for CS (using Chebyshev's inequality to bound the error of the row and then Chernoff's inequality to bound the error of the median) yields the same error bounds for SALSA CS.  Indeed, we expect better estimates using SALSA as the inequality from the proof of the theorem ($\Var[CS] \geq \Var[\widehat{f_x}]$) is usually a strict inequality.  In our experimental evaluation, we show that SALSA CS obtains better estimates than CS. 

\ifdefined\sigmodSubmission\vspace*{-1mm}\fi
\begin{theorem}
Let $2^\ell\cdot s$ be the maximal bit-size of any counter in \emph{sum-merge} SALSA CS, and $\forall i\in[d]$ let $\widetilde{h}_i(x) = \floor{h_i(x)/2^\ell}$ be hash functions that map items into a standard CS with $(2^\ell\cdot s)$-sized counters. Then for any $x\in U, i\le d:
\mathbb E \brackets{C_{SALSA}[i,h_i(x)]\cdot g_i(x)} = f_x$ and
$
\Var\brackets{C_{SALSA}[i,h_i(x)]\cdot g_i(x)-f_x}\le  \Var\brackets{C_{CS}[i,\tilde{h}_i(x)]\cdot g_i(x)-f_x}$, where $C_{SALSA}[i,h_i(x)]$ and $C_{CS}[i,\tilde{h}_i(x)]$ are the counters of SALSA and the underlying CS (with functions $\widetilde{h}_i(x)$).
\end{theorem}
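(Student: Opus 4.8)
The plan is to observe that this theorem is exactly the per-row statement already established by the two preceding lemmas for the special case $\ell=2$ (counters growing to $4s$ bits), and that the only real work is to lift the same argument to an arbitrary number of merge levels. First I would set up uniform notation. Write $X_0 \triangleq \sum_{y:h_i(y)=A} f_y g_i(y)$ for the signed content of $x$'s base counter $A$, and for $j=1,\ldots,\ell$ let $N_j$ be the (disjoint) set of foreign items that join $x$'s counter at the $j$-th merge, with $Y_j \triangleq \sum_{y\in N_j} f_y g_i(y)$ their signed sum. Let $O_{(j)}$ denote the event that, at query time, $x$'s counter has been merged at least $j$ times; these events are nested, $O_{(\ell)}\subseteq\cdots\subseteq O_{(1)}$, so $O_{(j)}\cdot O_{(k)}=O_{(\max\set{j,k})}$. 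Telescoping exactly as in the derivation of \eqref{eq:CS_est} then yields the single expression $C_{SALSA}[i,h_i(x)]\cdot g_i(x) = X_0 + \sum_{j=1}^{\ell} Y_j\cdot O_{(j)}$, assuming without loss of generality that $g_i(x)=1$.

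For unbiasedness, the crux is to show $\mathbb E[Y_j \mid O_{(j)}]=0$ for every $j$, from which $\mathbb E[Y_j O_{(j)}]=0$ and hence $\mathbb E[C_{SALSA}[i,h_i(x)]g_i(x)]=\mathbb E[X_0]=f_x$. I would prove this with a sign-flip involution: flip $g_i(y)$ for every $y\in N_j$ simultaneously. This maps $Y_j\mapsto -Y_j$ while leaving $X_0$ and every $Y_k$ with $k\neq j$ untouched, since the underlying index sets are disjoint. The key point is that it also leaves $O_{(j)}$ invariant: the $j$-th merge is triggered by whichever of the two aligned size-$(s2^{j-1})$ blocks first overflows, and because SALSA CS stores counters in sign-magnitude form, the overflow of the $N_j$-block is a function of its magnitude trajectory alone and is therefore unchanged by negating all of its signs, while the trajectory of $x$'s block does not depend on the $N_j$-signs at all. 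As the involution is measure preserving and fixes the event $O_{(j)}$, it forces $\mathbb E[Y_j\mid O_{(j)}]=0$.

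For the variance, I would expand $\mathbb E[(X_0+\sum_j Y_jO_{(j)})^2]$ using $O_{(j)}^2=O_{(j)}$ and $O_{(j)}O_{(k)}=O_{(k)}$ for $k>j$. Every cross term vanishes by the same involution: $\mathbb E[X_0 Y_j O_{(j)}]=0$ (flip $N_j$, which fixes $X_0$ and $O_{(j)}$ and negates $Y_j$) and $\mathbb E[Y_j Y_k O_{(k)}]=0$ for $j<k$ (flip $N_k$, which fixes $Y_j$ and $O_{(k)}$ and negates $Y_k$). This leaves $\mathbb E[\widehat{f_x}^2]=\mathbb E[X_0^2]+\sum_j \mathbb E[Y_j^2\mid O_{(j)}]\Pr[O_{(j)}]$. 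For the underlying CS the whole $(s2^\ell)$-block is a single counter, so $C_{CS}[i,\widetilde h_i(x)]g_i(x)=X_0+\sum_j Y_j$ and, by the standard unconditional sign-symmetry, $\mathbb E[(C_{CS}[i,\widetilde h_i(x)]g_i(x))^2]=\mathbb E[X_0^2]+\sum_j\mathbb E[Y_j^2]$. Since both estimators are unbiased, subtracting gives $\Var[CS]-\Var[\widehat{f_x}]=\sum_{j=1}^{\ell}\mathbb E[Y_j^2\mid \neg O_{(j)}]\Pr[\neg O_{(j)}]\ge 0$, which is exactly the claimed variance inequality.

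The main obstacle I anticipate is making the conditional sign-symmetry step fully rigorous, i.e., justifying that conditioning on $O_{(j)}$ preserves the symmetry of $Y_j$ — the point glossed by the footnote in the $\ell=2$ analysis. The delicacy is that $O_{(j)}$ may be triggered by the $N_j$-block overflowing rather than by $x$'s block, so one cannot reduce it to plain independence of $Y_j$ and $O_{(j)}$; the sign-magnitude representation is precisely what saves the argument, since it makes block overflow depend only on magnitudes and hence invariant under the sign-flip involution. I would therefore isolate this as an explicit lemma — that the entire overflow (and merge) history of any aligned block is unchanged when all signs inside that block are flipped — and apply it uniformly at every level, rather than re-deriving it term by term as was done in the $4s$-bit case.
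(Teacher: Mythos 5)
Your proposal follows essentially the same route as the paper's proof: the paper establishes exactly this per-row statement via the telescoped decomposition $\widehat{f_x}=X_A+X_BO_{AB}+X_{CD}O_{ABCD}$ for the two-merge ($4s$-bit) case, kills the cross terms using sign-symmetry conditioned on the overflow events (asserted in a footnote via the sign-magnitude representation), and obtains $\Var[CS]-\Var[\widehat{f_x}]=\mathbb E\brackets{X_B^2|\neg O_{AB}}\Pr[\neg O_{AB}]+\mathbb E\brackets{X_{CD}^2|\neg O_{ABCD}}\Pr[\neg O_{ABCD}]\ge 0$, noting that the argument ``generalizes readily to additional levels.'' Your only additions are carrying out that generalization to arbitrary $\ell$ and promoting the footnote's conditional-symmetry claim to an explicit sign-flip-involution lemma, which is a sound and slightly more rigorous rendering of the same argument rather than a different one.
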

\ifdefined\sigmodSubmission\vspace*{-1mm}\fi

\rev{
We note that SALSA CS can also provide other derived results, similarly to CS.} For example, by using a heap, we can find the $L_2$-heavy hitters in Cash Register \mbox{streams similarly to the original version. }


\textbf{Universal Sketch (UnivMon):}
The universal monitoring sketch (UnivMon) uses several $L_2$ sketches that are applied on different subsets of the universe. By improving the accuracy of CS, we can also improve the performance of UnivMon.
\rev{We note that since SALSA CS provides an accuracy guarantee that is at least as good as the underlying sketch, SALSA Univmon provides the same accuracy guarantee as the vanilla Univmon.}

\textbf{Merging and Subtracting SALSA Sketches:}\label{sec:merging}
\rev{Given streams $A,B$ and their sketches $s(A),s(B)$,
we may then wish to derive statistics on $A\cup B$ (for example, we can parallelize the sketching of $A$ and $B$ and then \emph{merge} them), or on $A\setminus B$ (for example, to detect changes in our network traffic compared to the previous epoch).
By $A\setminus B$, we refer to computing the frequency difference; e.g., if $x$ appeared twice in $A$ and three times in $B$, its frequency in $A\setminus B$ is -1. }
Most standard sketches are linear, and can be naturally summed/subtracted counter-wise to obtain a sketches $s(A\cup B)\equiv s(A) + s(B)$ and $s(A\setminus B)\equiv s(A)-s(B)$ if they share the \mbox{same hash functions, and work in the Turnstile model.} 

SALSA can also merge and subtract sketches. For merging $s(A)$ and $s(B)$, SALSA traverses the counters and merges them according sum-merging.
Specifically, each counter in the merged sketches has a size at least as large as its size in $s(A)$ and its size in $s(B)$. Additionally, when summing or subtracting counters an overflow may occur, triggering another merge to make sure we have enough bits to encode the resulting values.
CS, as a Turnstile sketch, also supports general subtracting that is done similarly to merging, while CMS (which works in the Strict Turnstile model) can compute $s(A\setminus B)$ \rev{given a  guarantee} that $B\subseteq A$. These operations are illustrated in Figure~\ref{fig:merge_subtract}. 

\begin{figure}[t]
\centering
\includegraphics[width = 1\columnwidth]
		{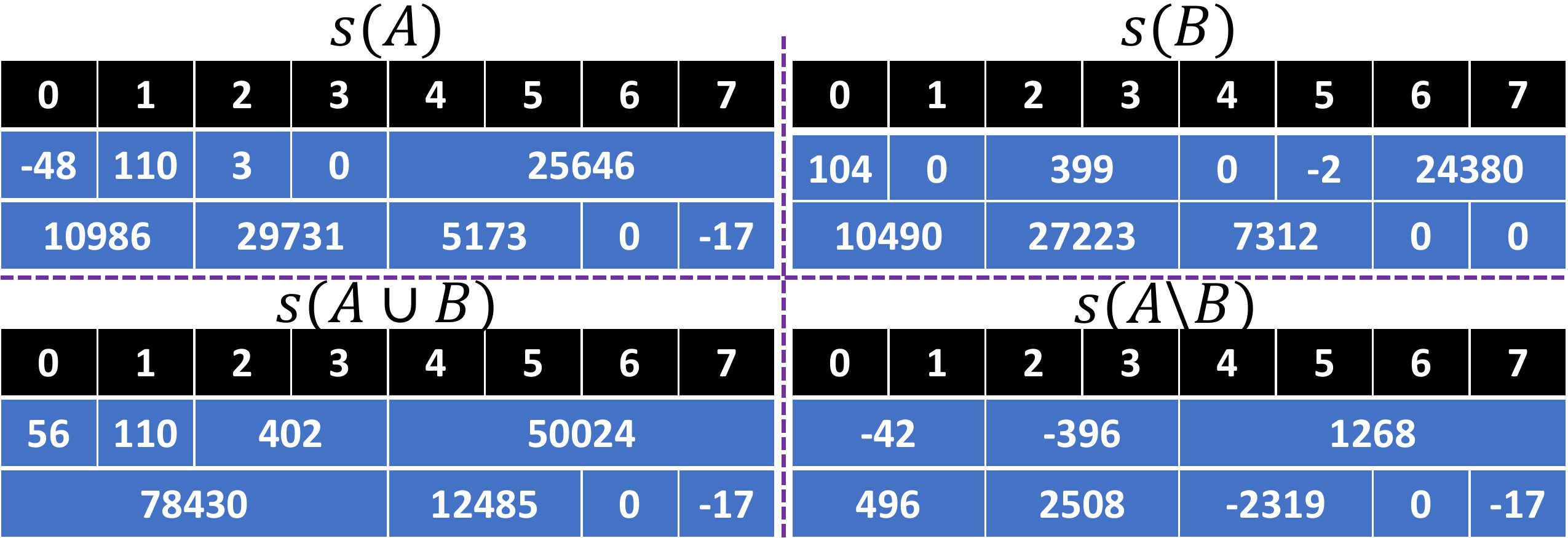}
	\ifdefined\sigmodSubmission\vskip -0.15cm\fi
	\ifdefined\submissionVersion
	\ifdefined\sigmodSubmission\vskip -0.1cm\fi
	\fi
	\vspace*{-0mm}
    \caption{\label{fig:merge_subtract} An example of $s=8$ bit SALSA CS merging and subtracting Turnstile sketches $s(A)$ \mbox{and $s(B)$.}
    }
    \ifdefined\sigmodSubmission\vspace*{-3mm}\fi
\end{figure}

\textbf{Integrating Estimators into SALSA: }\label{sec:estimators}
\rev{Thus far, we have described a single strategy to handle overflows: when a counter reaches a value that can not be represented with the current number of bits, it merges with a neighbor.}
However, there are alternatives that allow one to increase the counting range. 
Specifically, \emph{estimators} can represent large numbers using a smaller number of bits, \mbox{at the cost of introducing an error.}

The state of the art Additive Error Estimators (AEE)~\cite{Compsketch} offer a simple and efficient technique to increase the counting range. 
For simplicity, we describe the technique for CMS and unit-weight streams (where all updates are of the form $\angles{x,1}$), although AEE can support weighted updates and other L1 sketches as well.
Throughout the execution, incoming updates are \emph{sampled} with probability $p$. If an update is sampled, it increases the sketch, and otherwise, it is ignored.
Whenever a counter overflows, a \emph{downsampling} event happens. When downsampling, $p$ is halved and any counter $C[i,j]$ is replaced by either $Bin(C[i,j],1/2)$ (called probabilistic downsampling) or by  $\floor{C[i,j]/2}$ (deterministic downsampling). 
Since the counter values are reduced as a result of the downsampling, new updates can be processed, and no additional counter bits are needed.  
\rev{For any $\delta_{est}>0$, we have an implied estimation error for AEE given by
$\epsilon_{est}\triangleq\sqrt{\frac{2p^{-1}\ln(2/\delta_{est})}{N}}$, such that
{\small
\ifdefined\sigmodSubmission\vspace*{-1mm}\fi
$\Pr\brackets{\Big|\widehat{C[i,j]}-C[i,j]\Big|\ge N\epsilon_{est}}= 
\Pr\Big[|\widehat{C[i,j]}-C[i,j]|\ge \sqrt{2Np^{-1}\ln(2/\delta_{est})}\Big]\le\delta_{est} . 
$}
}
Another motivation for AEE comes from the processing speed. Since the sampling probability is independent of the value of the current counter, one can compute the hash functions $h_i(x)$ \emph{only if a packet is sampled}. Since hash functions are a major bottleneck for sketches~\cite{Nitro}, AEE is faster \mbox{than the baseline sketches.}
Another version of the estimator, called AEE MaxSpeed, aims to maximize the processing speed while bounding the error. Therefore, instead of waiting for a counter to overflow, it downsamples all counters once enough updates have been processed. In comparison with the original variant (called AEE MaxAccuracy), MaxSpeed is faster \mbox{but less accurate~\cite{Compsketch}.}

Intuitively, downsampling and merging increase the error in different ways. While downsampling increases the inherent error of a counter, merging adds noise from other elements that previously have not collided with the counter. 
SALSA selects how to handle overflows in a way that minimizes the theoretical error increase by either downsampling or merging. 
Specifically, as our accuracy theorems suggest, the sketch error in SALSA depends on the size of the largest counter. Therefore, unless a largest counter overflows, SALSA opts for merging as its overflow strategy.
When a largest counter overflows, SALSA computes the estimator error difference $\Delta_{\mathit{est}}=\sqrt 2 \cdot \epsilon_{est}$, which is the increase in error if we downsample.
Similarly, if the currently largest counter is of size $s\cdot 2^\ell$, SALSA computes $\eps_{CMS}\triangleq \delta^{-1/d}\cdot 2^\ell/w$, which is the current accuracy guarantee (see Theorem~\ref{thm:CMS_sum} and Section~\ref{sec:CMS}) and $\Delta_{\mathit{CMS}}=\eps_{CMS}$ is then the difference in error guarantee that results from merging. We pick $\delta_{est}=\delta/d$ to allow \emph{all} counters of the current element to be estimated within $\epsilon_{est}$ with probability $1-\delta$.
Finally, SALSA chooses to merge if $\Delta_{\mathit{CMS}}\le \Delta_{\mathit{est}}$, and otherwise it downsamples.
As a result, SALSA estimates element sizes to within $N\cdot(\epsilon_{est}+\epsilon_{CMS})$ with a probability of at least $1-2\delta$. 

As an optimization, when downsampling, SALSA may be able to \emph{split} counters if the resulting values can be represented using fewer bits. For example, if $s=8$ and a value of $300$ was represented in the 16-bit counter $\angles{4,5}$, and if it is then downsampled to a value of $150$, we can split the counter and set both counter $4$ and counter $5$ to $150$. We note that this only works for max-merging, where the accuracy guarantees~seamlessly follow.   




\section{Evaluation}\label{sec:eval}
In this section, we extensively evaluate SALSA's \rev{performance} on real and synthetic datasets and compare it to that of the underlying sketches. 
We first document the methodology.

\textbf{Sketch Configuration Parameters:}
Unless specified otherwise, all CMS and CUS sketches are configured with $d=4$ rows, as is used e.g., in the Caffeine caching library~\cite{caffein}. Since CS requires taking median over the rows, all CS experiments are configured with $d=5$ rows as done, e.g., in~\cite{L2IQ}.
We configure UnivMon with $16$ CS instances, each configured with $d=5$ and a heap of size $100$, following the implementation of~\cite{univmon}.
Such settings are standard for applications that aim for speed rather than being memory-optimal.
%
For the ABC~\cite{gong2017abc} and Pyramid~\cite{PyramidSketch} sketches, as well as the Cold Filter~\cite{ColdFilter} framework, we use the configurations recommended by the authors.
We pick $s=8$ bit counters as the default configuration of SALSA, motivated by the synthetic results. 
We use the simple encoding (1 bit of overhead per counter) of Section~\ref{sec:SALSA_encoding}, which uses slightly more space but is faster.  
The Baseline implementations use 32-bit counters, \rev{a choice we justify later in Figure~\ref{fig:smallCounters}, and that is also} common \mbox{in existing implementations~\cite{Nitro,CormodeCode}.}
Nonetheless, our SALSA implementation allows counters to grow further, up to $64$ bits.
For implementation efficiency, all row widths $w$ are powers of two.
When we give figures where an $x$-axis is allocated memory, we include the encoding overheads. 
For the integration with AEE, we configure SALSA AEE with $\delta=4\cdot\delta_{est}=0.001$ (see Section~\ref{sec:estimators}).

\textbf{Datasets: }
\rev{ We evaluate our algorithms using four real datasets and several synthetic ones. In particular, we use three network packet traces: two from major backbone routers in the US, denoted NY18~\cite{CAIDA2018} and CH16~\cite{CAIDA2016}, and a data center network trace denoted Univ2~\cite{UWISC}.
In these traces, we define items using the ``5-tuples'' of the packets (srcip, dstip, srcport, dstport, proto).
Additionally, we use a YouTube video trace~\cite[US category]{kaggleYouTubeDataset}. As the video data does not have a recorded order (just view-count), we use a random order where each item is a video independently sampled according to the view-count distribution. Finally, we use random order Zipfian traces. All traces have 98M elements for consistency with the shortest real dataset.
}
%
In our evaluation, we use unit-weight Cash Register streams (i.e., all updates are of the form $\angles{x,1}$). \rev{We also experiment with the task of evaluating change detection, which requires \mbox{a SALSA sketch under the Turnstile model.}
}

\textbf{Metrics: }
For frequency estimates, we use the On-arrival model that asks for an estimate of the size of \emph{each arriving element} (e.g.,~\cite{Compsketch,ConextPaper,HeavyHitters,RAP}). Intuitively, this model is motivated by the need to take per-packet actions in networking, e.g., to restrict the allowed bandwidth to prevent denial of service attacks.
Given a stream with $n$ updates, we obtain errors $e_1,e_2,\ldots,e_n$; the \emph{Mean Square Error} is defined as $MSE\triangleq n^{-1}\cdot\sum_i e_i^2$, the \emph{Root Mean Square Error} is then $RMSE\triangleq\sqrt{MSE}$, while the \emph{Normalized RMSE} is $\mathit{NRMSE}\triangleq n^{-1}\cdot {RMSE}$. Similar metrics are used, e.g., in~\cite{Compsketch,HeavyHitters,RAP,ConextPaper}.
Notice that NRMSE is a \mbox{unitless quantity in the interval $[0,1]$.}
For fairness, we also evaluate using the error metrics used in Pyramid and ABC: Average Absolute Error (AAE) and Average Relative Error (ARE).
AAE averages the error over all the elements with non-zero frequency i.e., $AAE\triangleq\frac{1}{|U_{> 0}|}\sum_{x\in U_{> 0}}|\widehat{f_x}-f_x|$, where $U_{> 0}\triangleq \set{x\in U:f_x>0}$. \mbox{Similarly, ARE is defined as $\frac{1}{|U_{> 0}|}\sum_{x\in U_{> 0}}\frac{|\widehat{f_x}-f_x|}{f_x}$.}

\ifdefined\fullversion
For tasks such as Count Distinct, Entropy, and Frequency Moments estimation,
\else
\rev{For tasks such as Entropy and Frequency Moments estimation,}
\fi
we use the Average Relative Error (ARE) metric that averages over the relative error of the ten runs.\footnote{This is different from the ARE used for frequency estimation where the averaging is done over all elements with positive frequency.}
For turnstile evaluation, we evaluate the capability of SALSA to improve sketches for the Change Detection task (e.g., see~\cite{univmon,krishnamurthy2003sketch}) in which we partition the workload into two equal-length parts $A$ and $B$, sketch each, and test the NRMSE of the estimates of the frequency changes between $A$ and $B$. 
Each data point is the result of ten trials; we report the mean and 95\% confidence \mbox{intervals according to Student's t-test~\cite{student1908probable}.}

\textbf{Implementation:}
We leverage existing CMS and CUS implementations from~\cite{Compsketch} and extend them to implement SALSA. We also extend these to create a Baseline and SALSA implementation of CS. 
We also used the authors' code for the Pyramid~\cite{PyramidSketch}, ABC~\cite{gong2017abc}, and Cold Filter~\cite{ColdFilter} algorithms. Particularly, for error measurements, we used the code as-is, while for speed measurements, we applied our optimizations for a fair comparison. 
All sketches use the same hash functions (BobHash) and index computation methods.
When evaluating against the AEE estimators~\cite{Compsketch}, we use the provided open-source code.
Similarly, we obtained the UnivMon code from~\cite{L2IQ} and replaced its CS sketches with SALSA CS to create 'SALSA UnivMon'. 

All speed measurements were performed using a single core on a PC with an Intel Core i7-7700 CPU @3.60GHz (256KB L1 cache, 1MB L2 cache, and 8MB L3 cache) and 32GB DDR3 2133MHz RAM.




\begin{figure}[t]
    \centering
    \subfloat[Error, Count Min Sketch (2MB)]
    { \includegraphics[width =0.5\columnwidth]
    {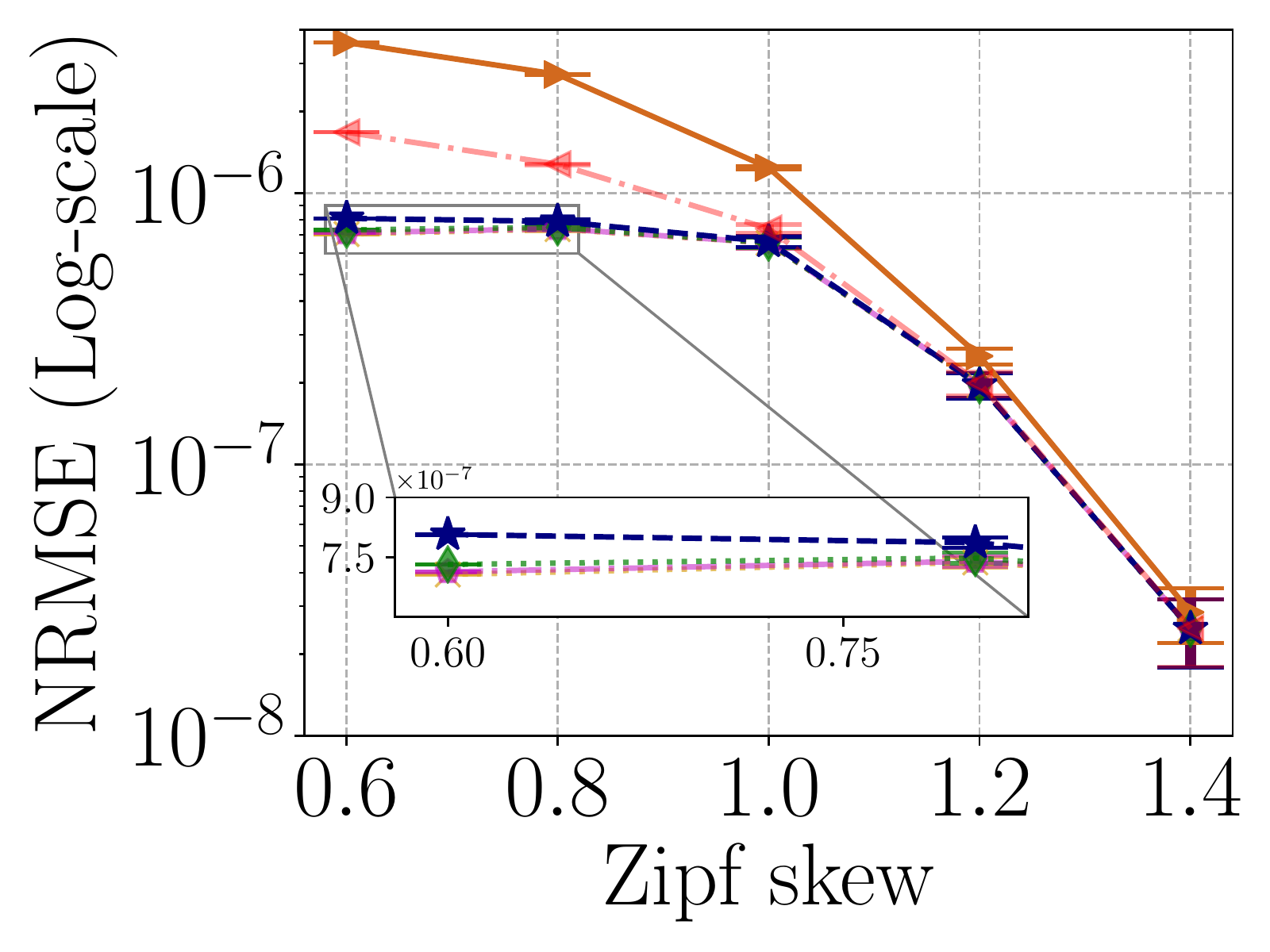}}
    \subfloat[Error, Count Sketch (2.5MB)]
    {\includegraphics[width =0.5\columnwidth]
    {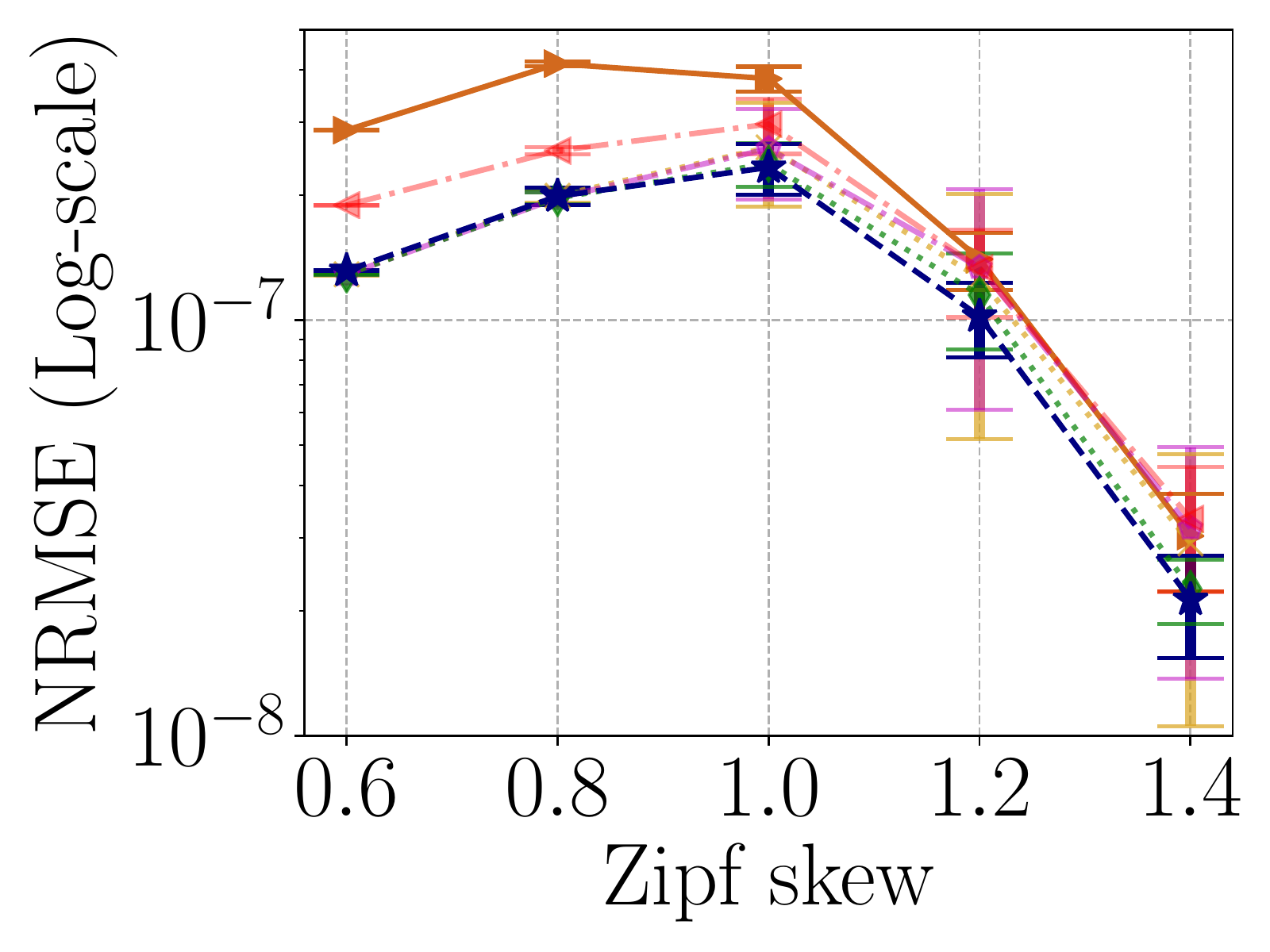}}\\
    {\includegraphics[width =1.00\columnwidth]
    {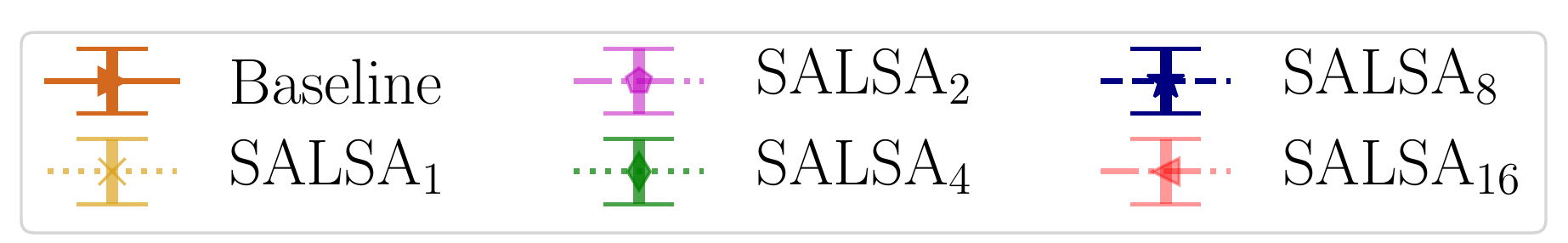}\ifdefined\sigmodSubmission\vspace*{-3mm}\fi}
    \ifdefined\sigmodSubmission\vspace*{-2mm}\fi
    \caption{\small Speed and accuracy of SALSA CMS and SALSA CS for the synthetic datasets. The Baseline uses $w=2^{17}$ counters in each row for a total of 2MB of space in CMS and 2.5MB in CS. Here, SALSA{\Large$_s$} is using $w=(2^{17}\cdot 32/s)$ sized rows for a total of $2(1+1/s)$MB space for CMS and $2.5(1+1/s)$MB for CS.
    }\label{fig:Zipf}
    \ifdefined\sigmodSubmission\vspace*{-3mm}\fi
\end{figure} 

\begin{figure}[t]
    \centering
    \ifdefined\sigmodSubmission\vspace*{-5mm}\fi
    \subfloat[Error, NY18]
    {\includegraphics[width =0.5\columnwidth]
    {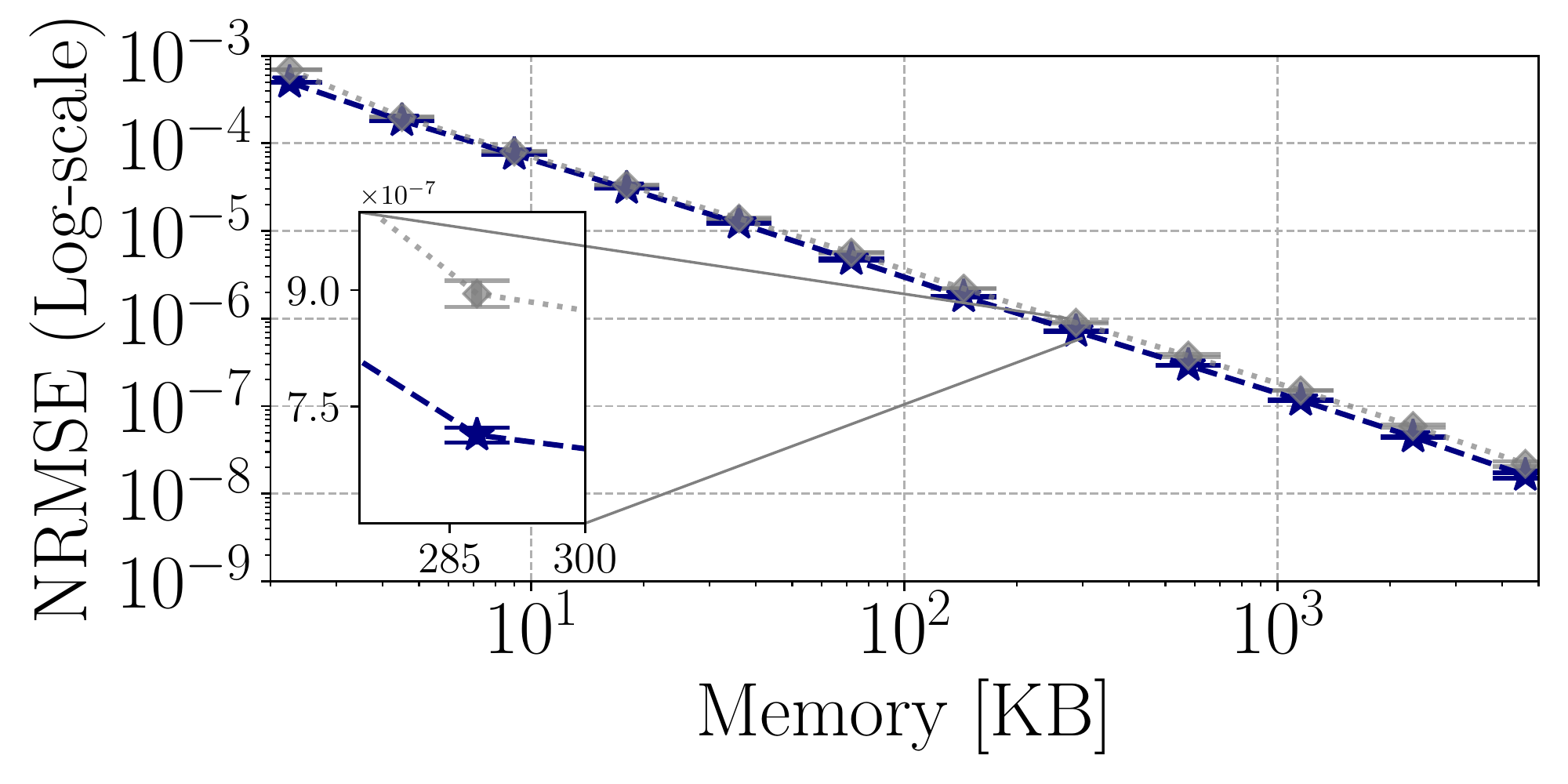}}
    \subfloat[Error, Zipf (2MB)]
    {\includegraphics[width =0.5\columnwidth]
    {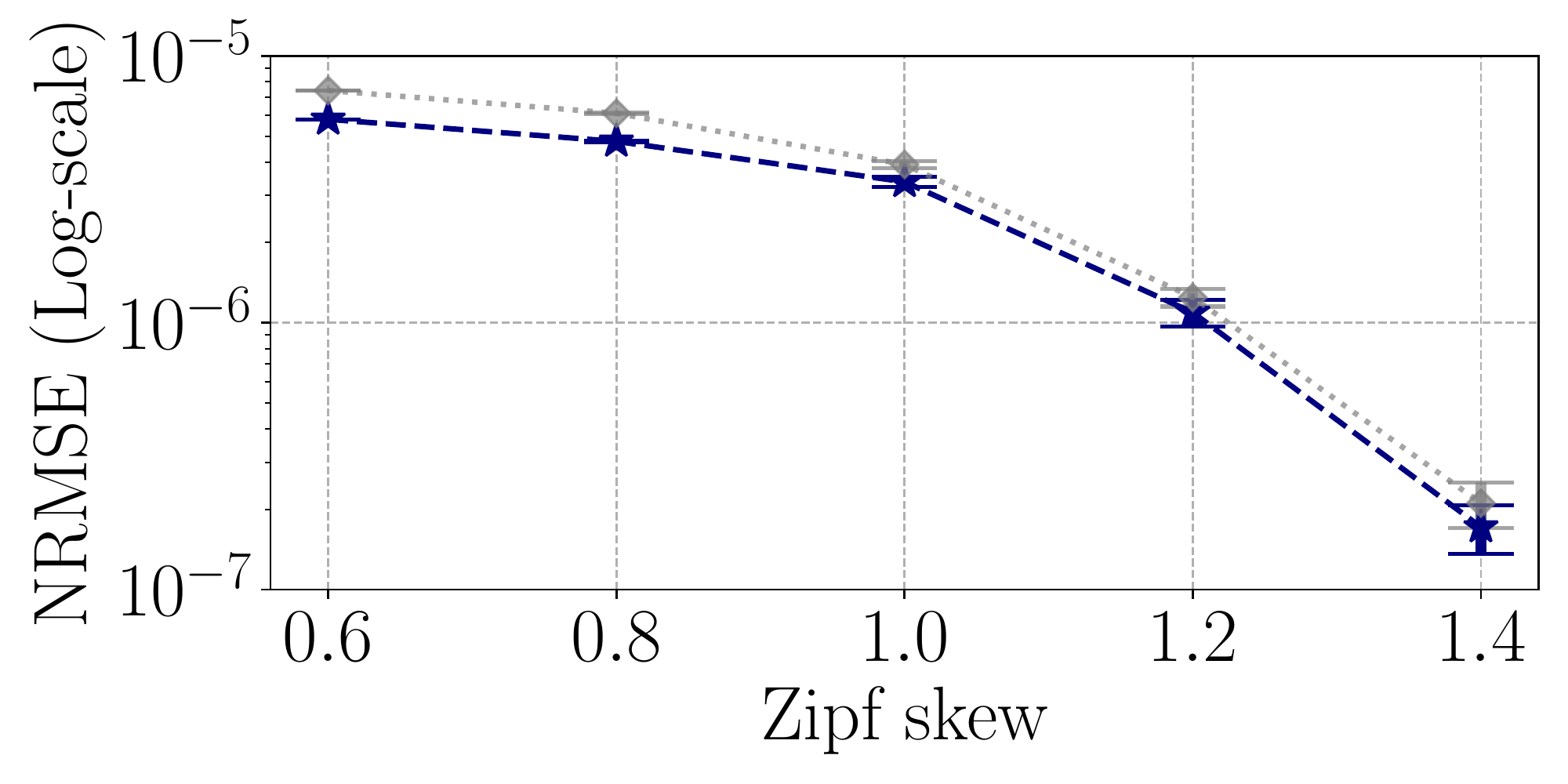}}   
    \hspace*{-1mm}\\
    {\includegraphics[width =0.6\columnwidth]
    {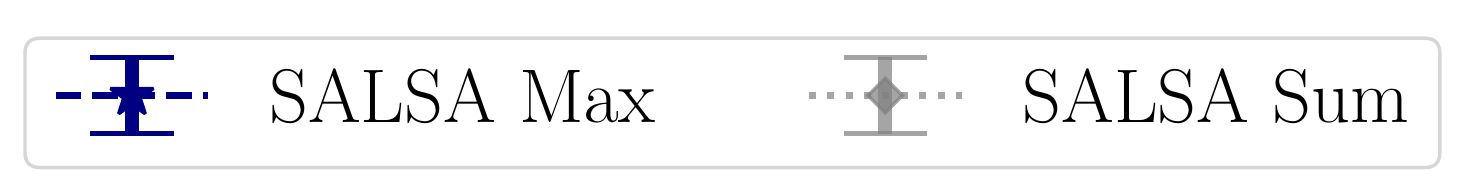}\ifdefined\sigmodSubmission\vspace*{-1mm}\fi}
    \caption{\small Accuracy of SALSA CMS with Sum merge vs. Max merge.
    }\label{fig:merging-eval}
    \ifdefined\sigmodSubmission\vspace*{-3mm}\fi
\end{figure}

\begin{figure}[t]
    \centering
    \hspace*{-2mm}
    
    \subfloat[\rev{Varying the threshold $\phi$}]
    {\label{fig:smallCountersVsPhi} \includegraphics[width =0.51\columnwidth]
    {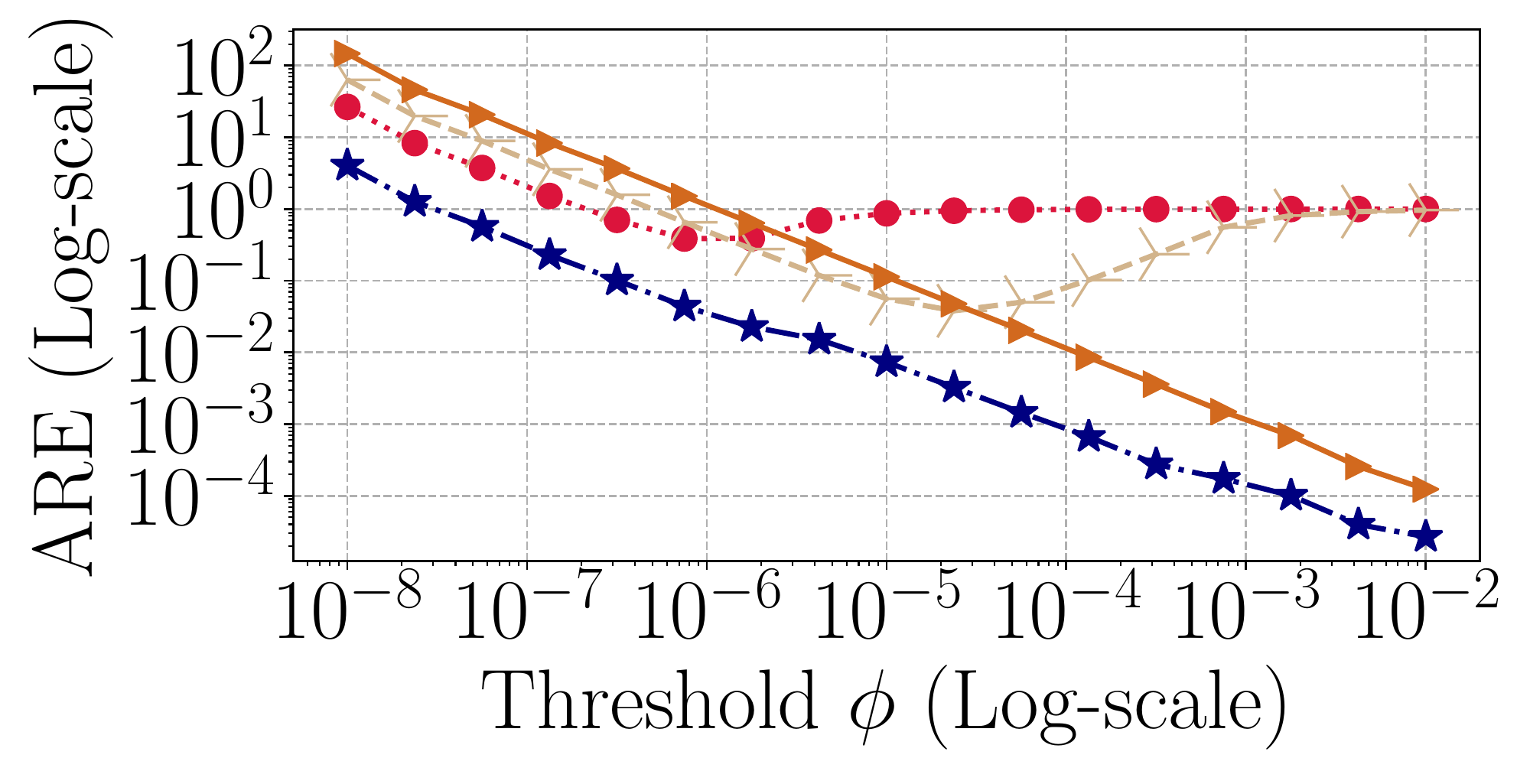}}
    \hspace*{-3mm}
    \subfloat[\rev{Varying stream length ($\phi=10^{-4})$}]
    {\label{fig:smallCountersVsLength} \includegraphics[width =0.51\columnwidth]
    {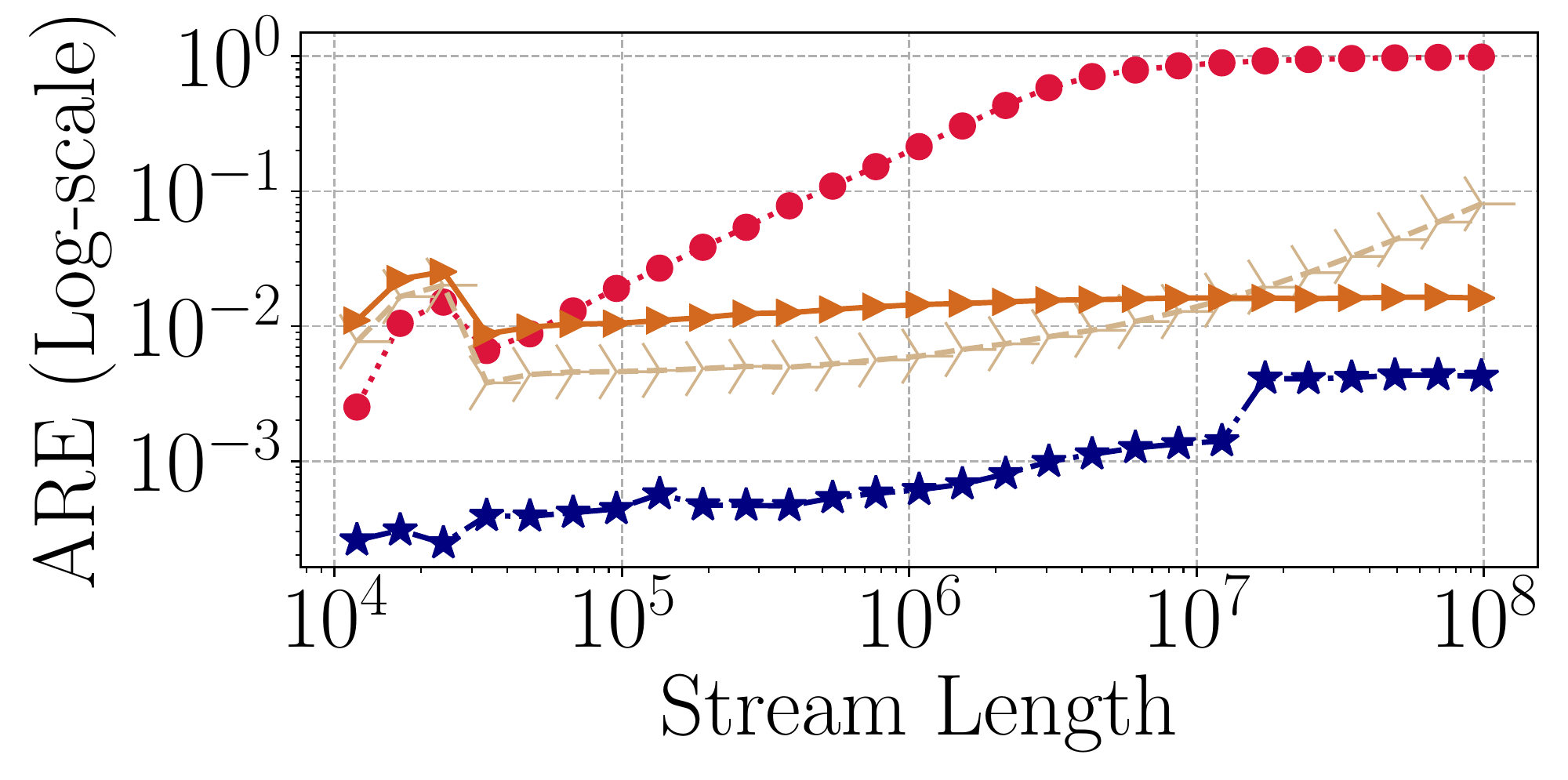}}   
    \hspace*{-1mm}\\
    {\includegraphics[width =0.96\columnwidth]
    {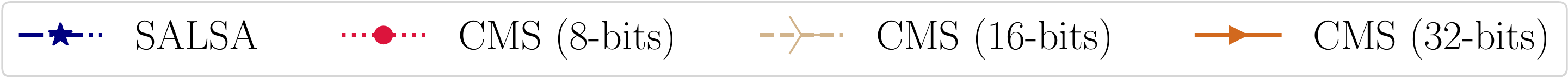}\ifdefined\sigmodSubmission\vspace*{-1mm}\fi}
    \hspace*{-1mm}
    \vspace*{-0mm}
    \caption{\mbox{\rev{SALSA CMS vs. CMS with small counters (2MB).}}
    }\label{fig:smallCounters}
    \ifdefined\sigmodSubmission\vspace*{-3mm}\fi
\end{figure}

\ifdefined\icdeSubmission
\else
\begin{figure}[]
    \centering
    \subfloat[Error, NY18]
    { \includegraphics[width =0.49\columnwidth]
    {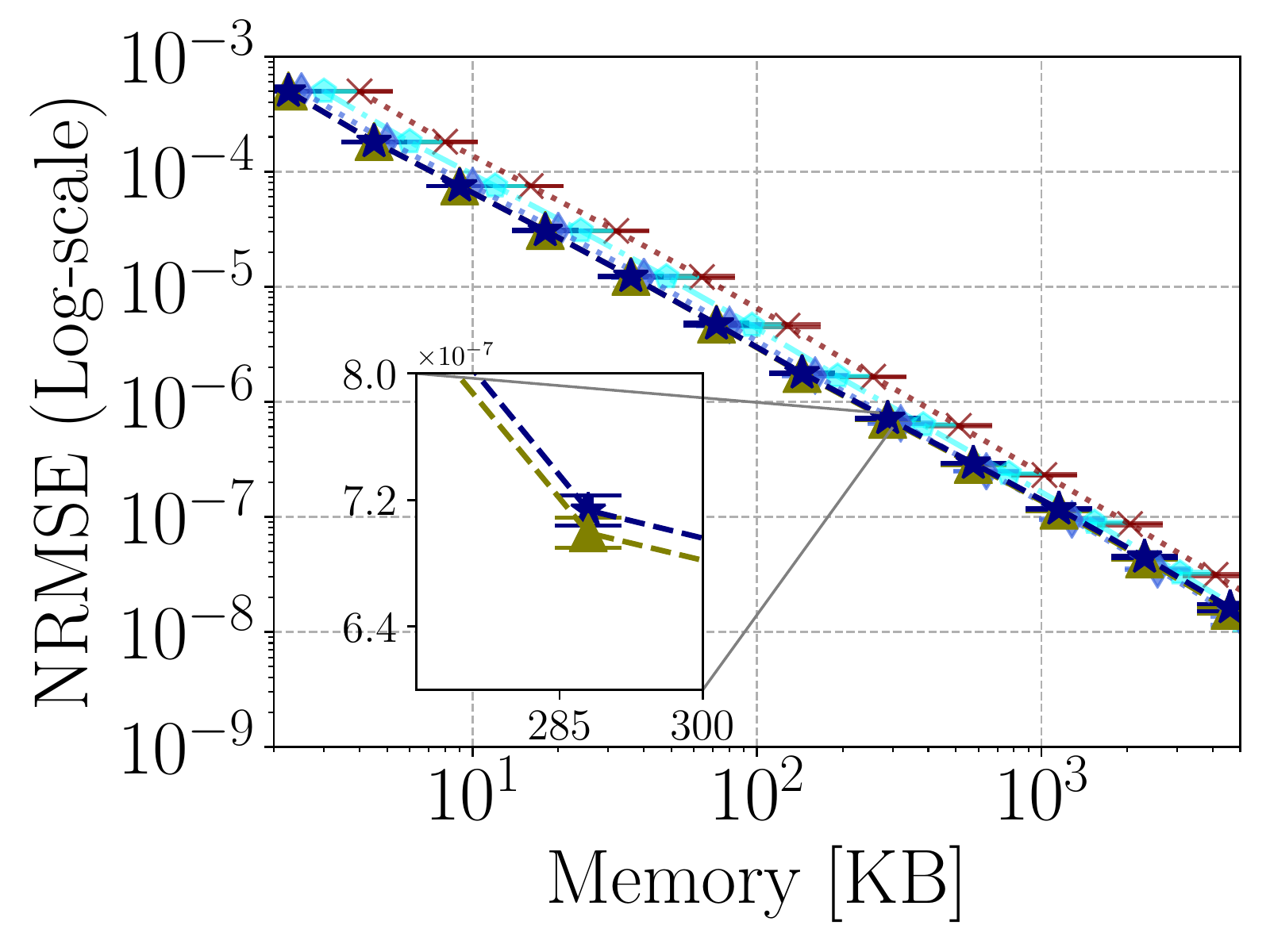}}
    \subfloat[Error, Zipf]
    { \includegraphics[width =0.49\columnwidth]
    {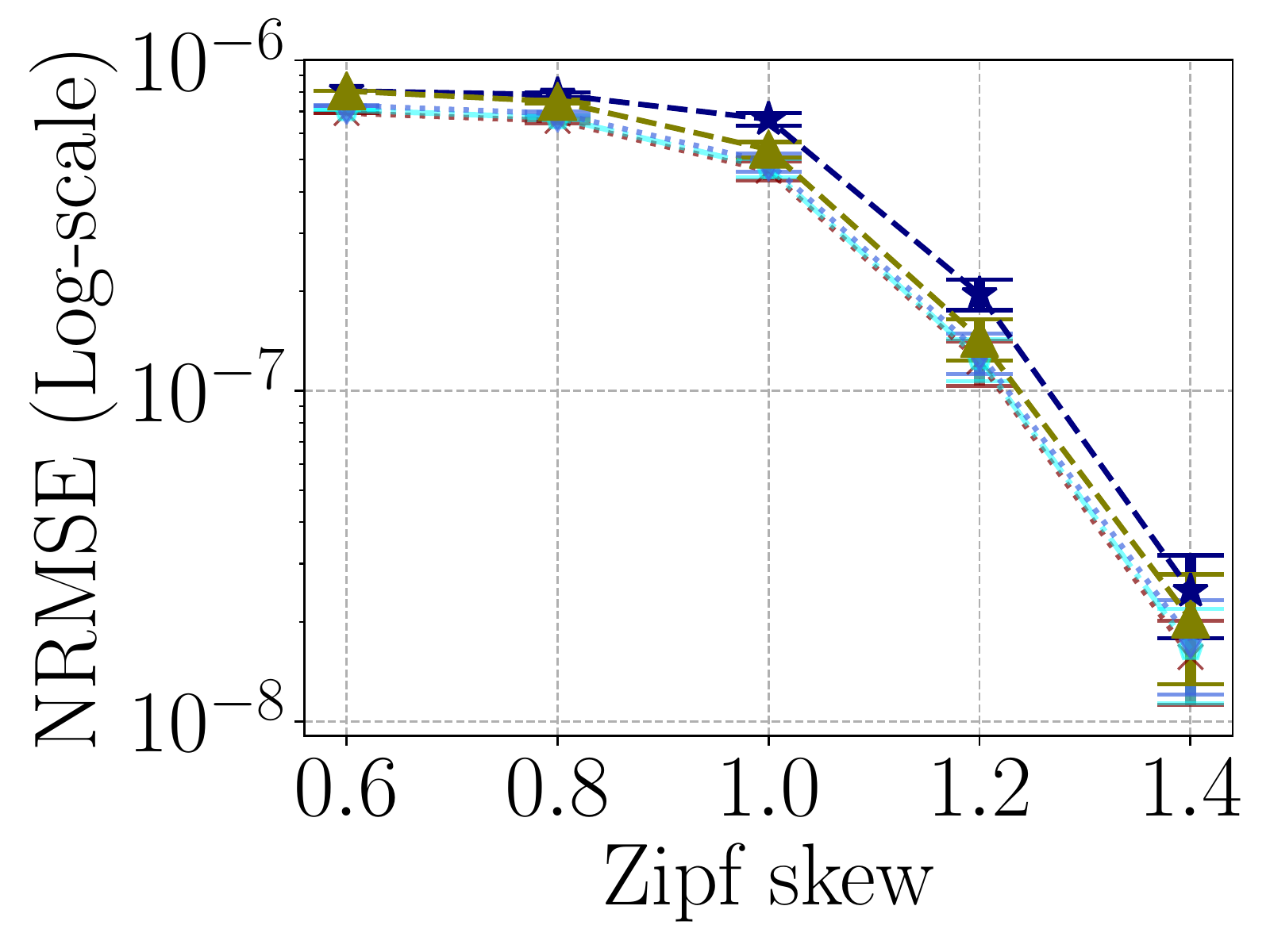}}   
    \\
    {\includegraphics[width =1.00\columnwidth]
    {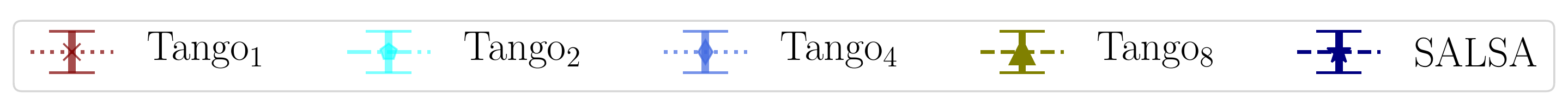}\ifdefined\sigmodSubmission\vspace*{-1mm}\fi}
    \caption{\small Accuracy of SALSA CMS (with $s=8$ bits) vs. Tango CMS. In (b), Tango{\Large$_s$} is allocated with $2(1+1/s)$MB of space while SALSA uses $2(1+1/8)=2.25$MB.
    }\label{fig:tango}
    \ifdefined\sigmodSubmission\vspace*{-3mm}\fi
\end{figure}
\fi

\begin{figure*}[t]
    \centering
     \hspace*{-2mm}
    \subfloat[Speed, NY18\label{fig:pspeed1}]
    { \includegraphics[width =0.5\columnwidth]
    {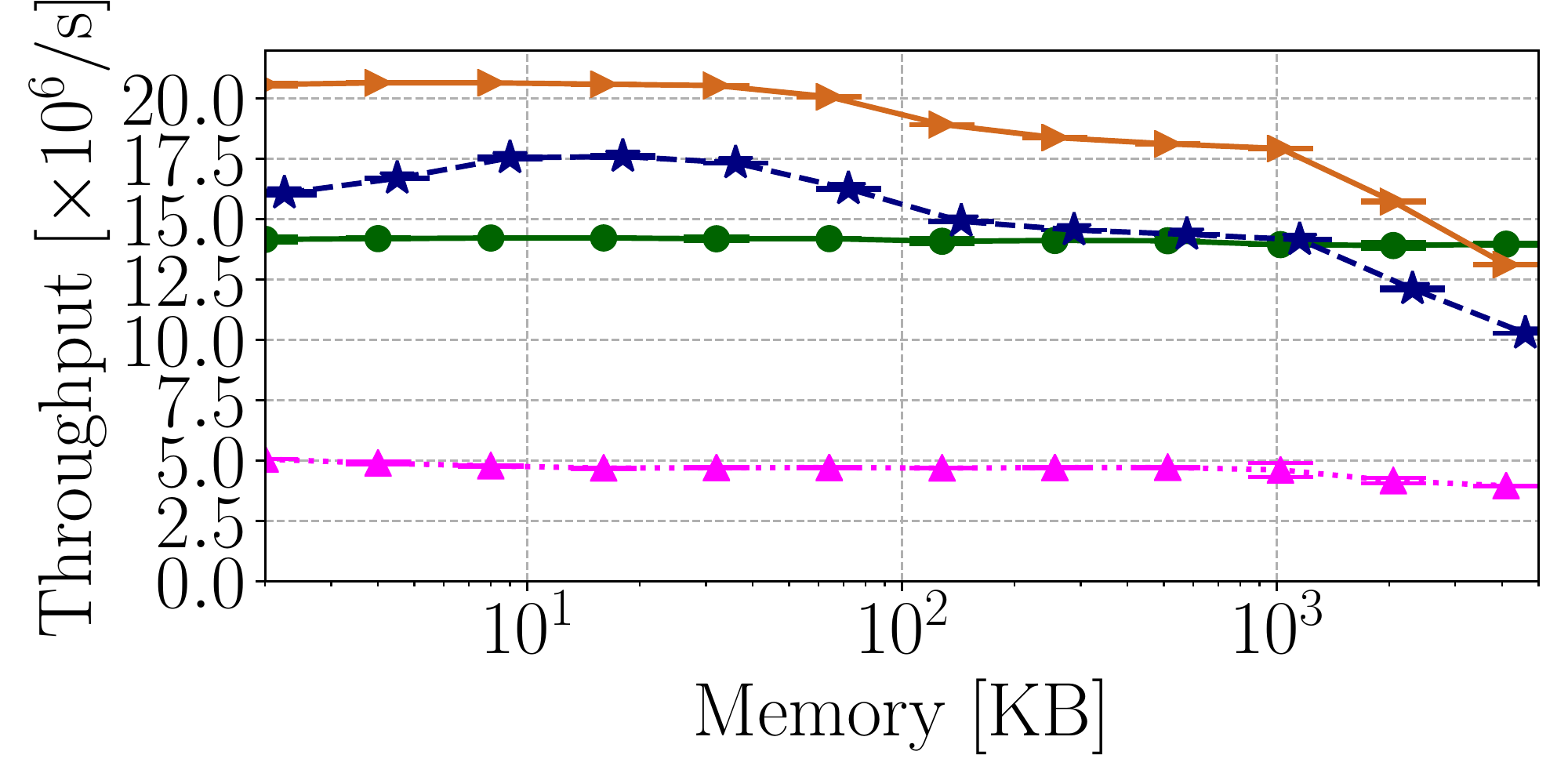}}
    \hspace*{-2mm}
    \subfloat[Speed, CH16\label{fig:pspeed2}]
    { \includegraphics[width =0.5\columnwidth]
    {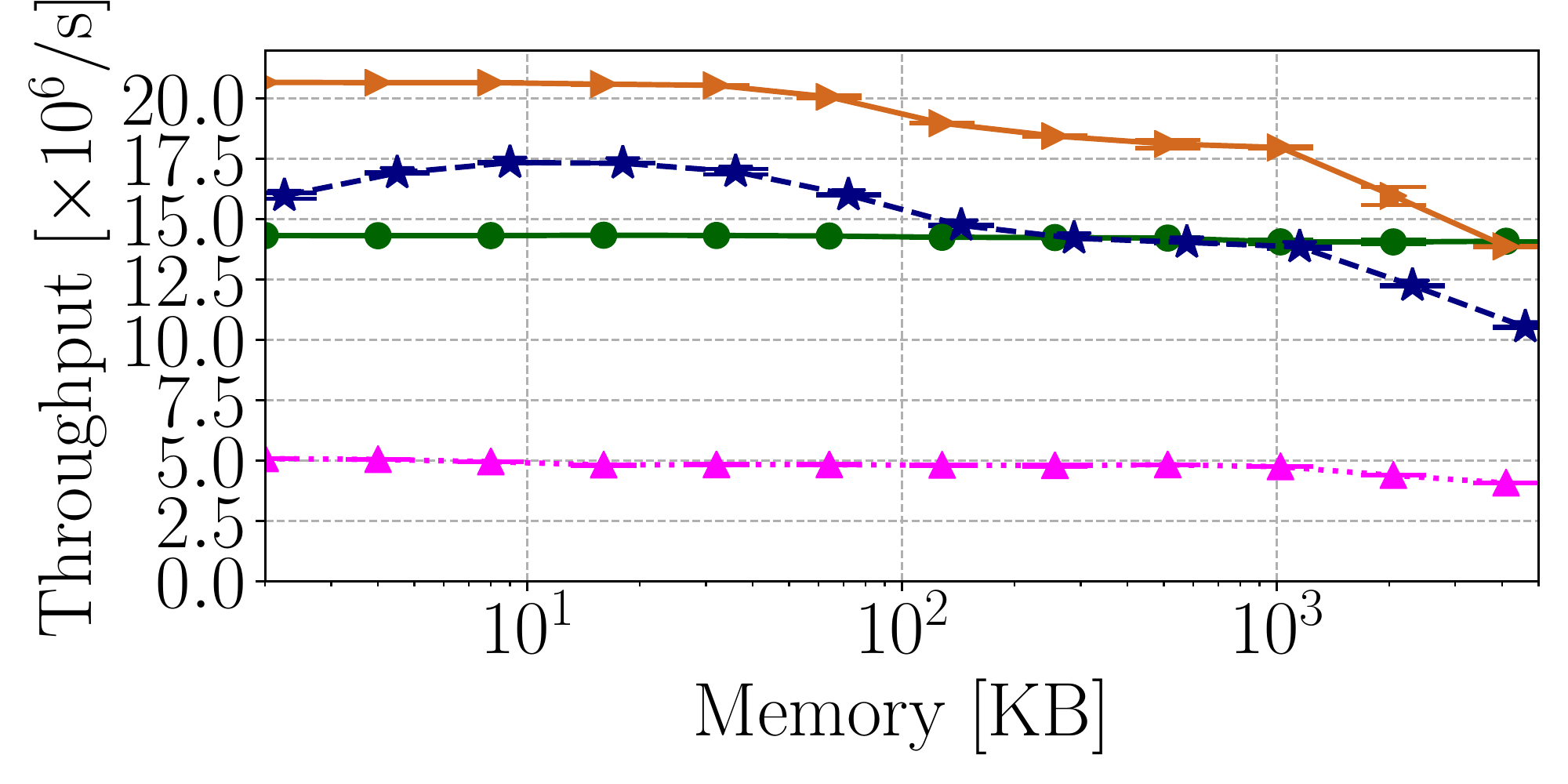}}    
    \hspace*{-2mm}
    \subfloat[NRMSE Error, NY18\label{fig:nrmsep1}]
    { \includegraphics[width =0.5\columnwidth]
    {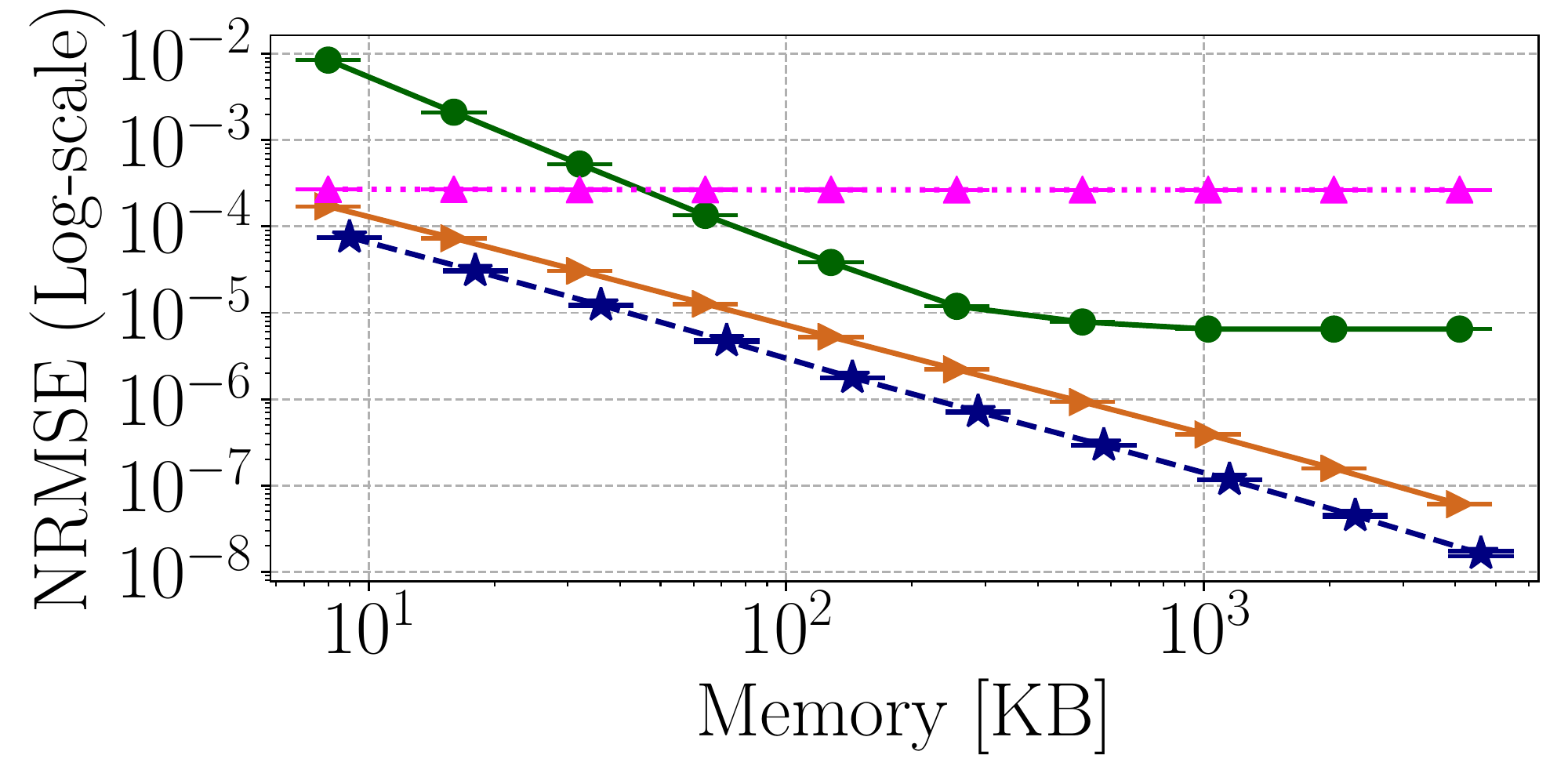}}
    \hspace*{-2mm}
    \subfloat[NRMSE Error, CH16\label{fig:nrmsep2}]
    { \includegraphics[width =0.5\columnwidth]
    {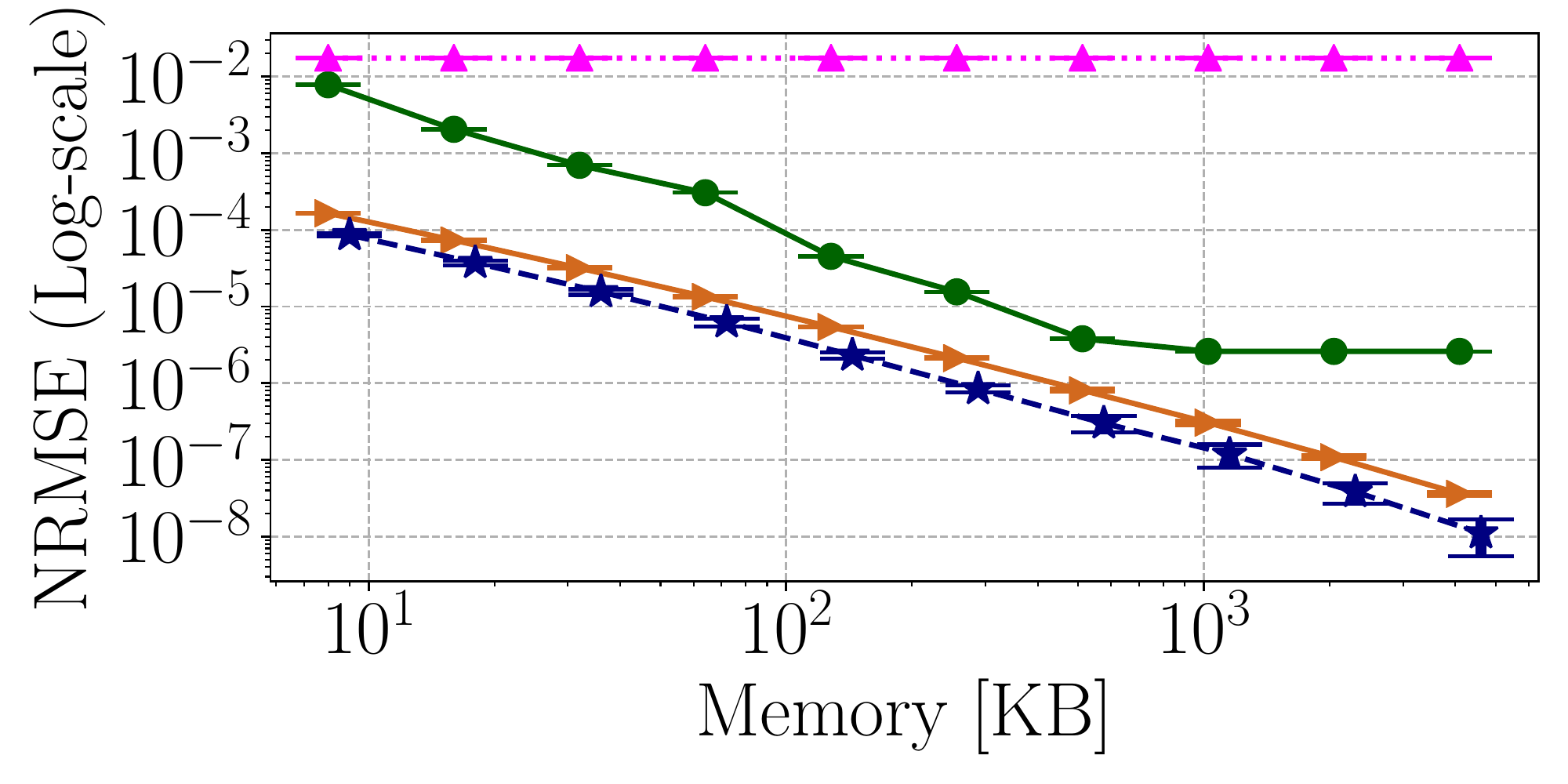}}
    \hspace*{-2mm}    
    \\
    {\includegraphics[width =0.95\columnwidth]
    {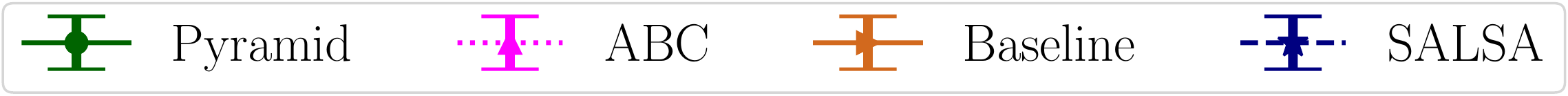}\ifdefined\sigmodSubmission\vspace*{-2mm}\fi}
    \\\ifdefined\sigmodSubmission\vspace*{-1mm}\fi 
    \subfloat[AAE Error, NY18\label{fig:aaeNY18}]
    { \includegraphics[width =0.5\columnwidth]
    {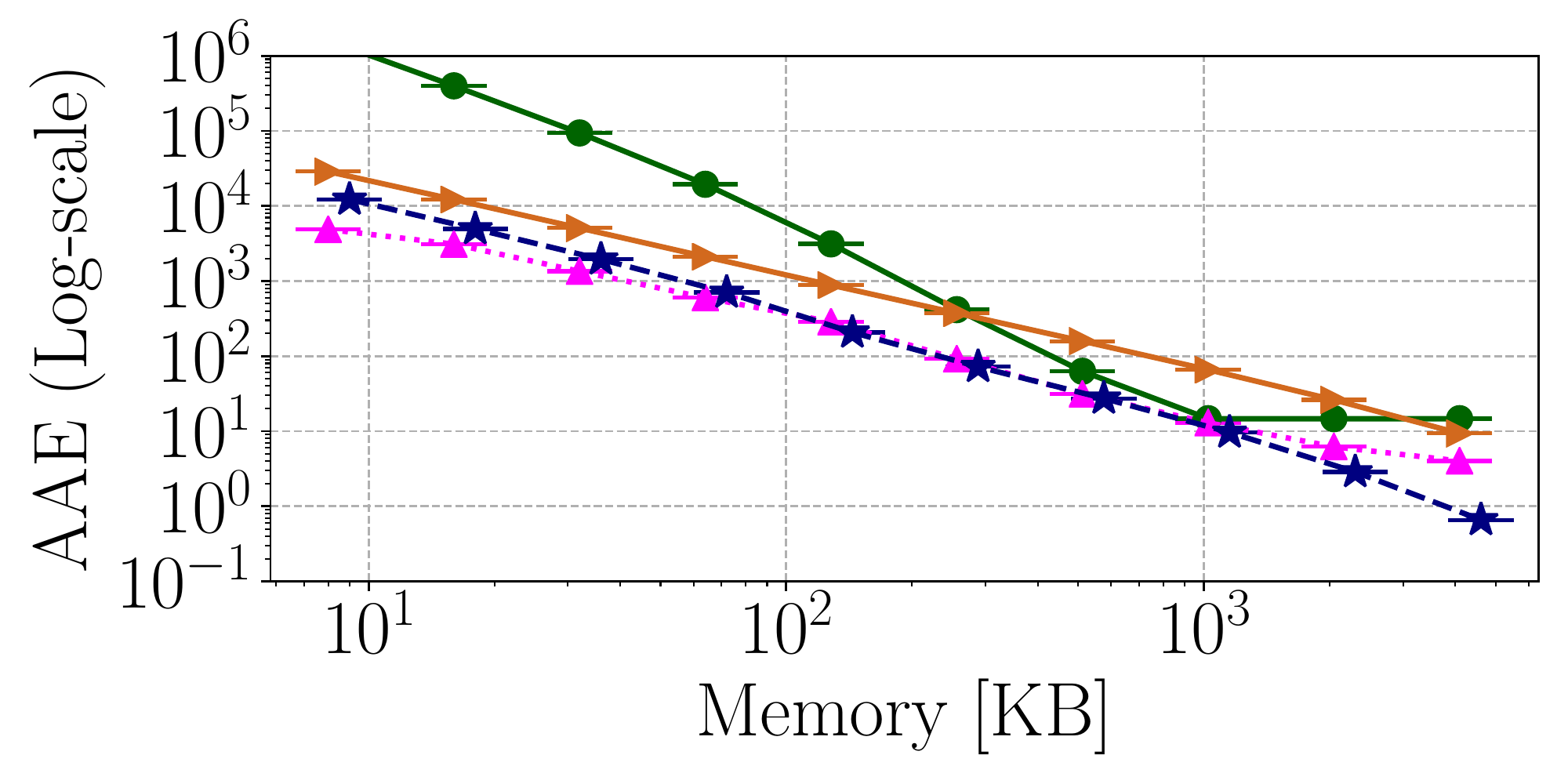}}
    \hspace*{-2mm}
    \subfloat[AAE Error, CH16\label{fig:aaeCH16}]
    { \includegraphics[width =0.5\columnwidth]
    {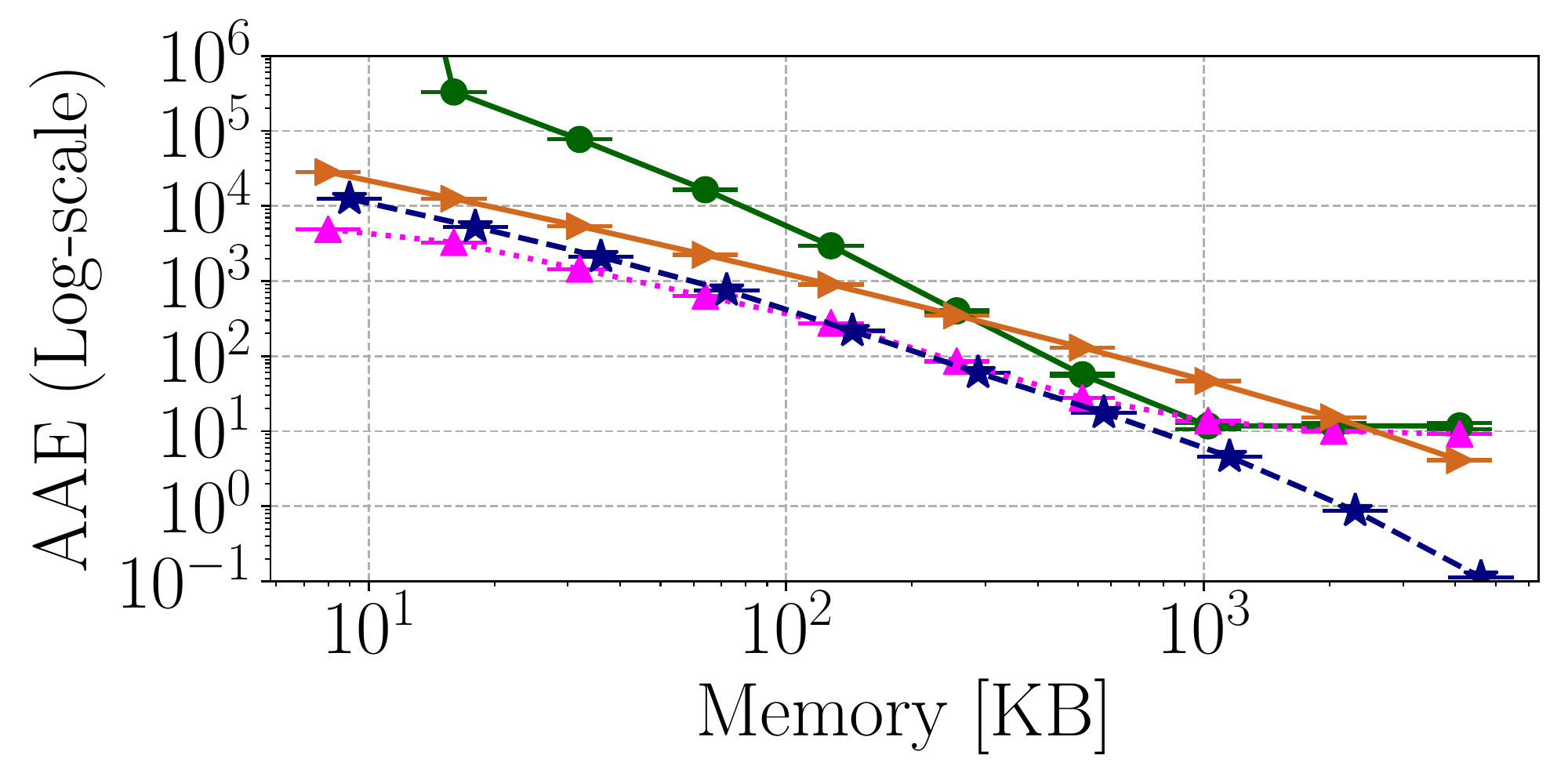}}
    \hspace*{-2mm}     
    \subfloat[ARE Error, NY18\label{fig:areNY18}]
    { \includegraphics[width =0.5\columnwidth]
    {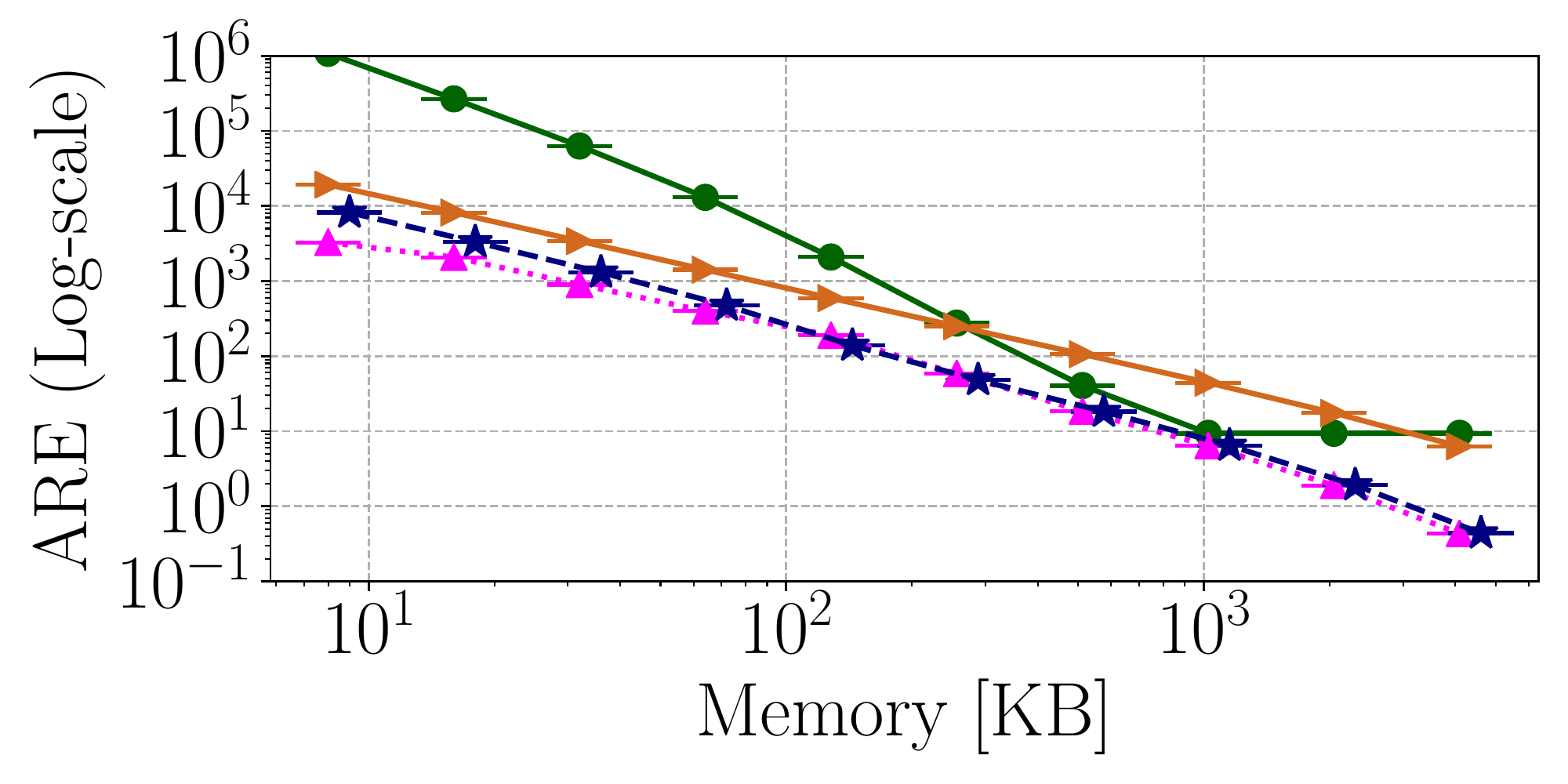}}
    \hspace*{-2mm}
    \subfloat[ARE Error, CH16\label{fig:areCH16}]
    { \includegraphics[width =0.5\columnwidth]
    {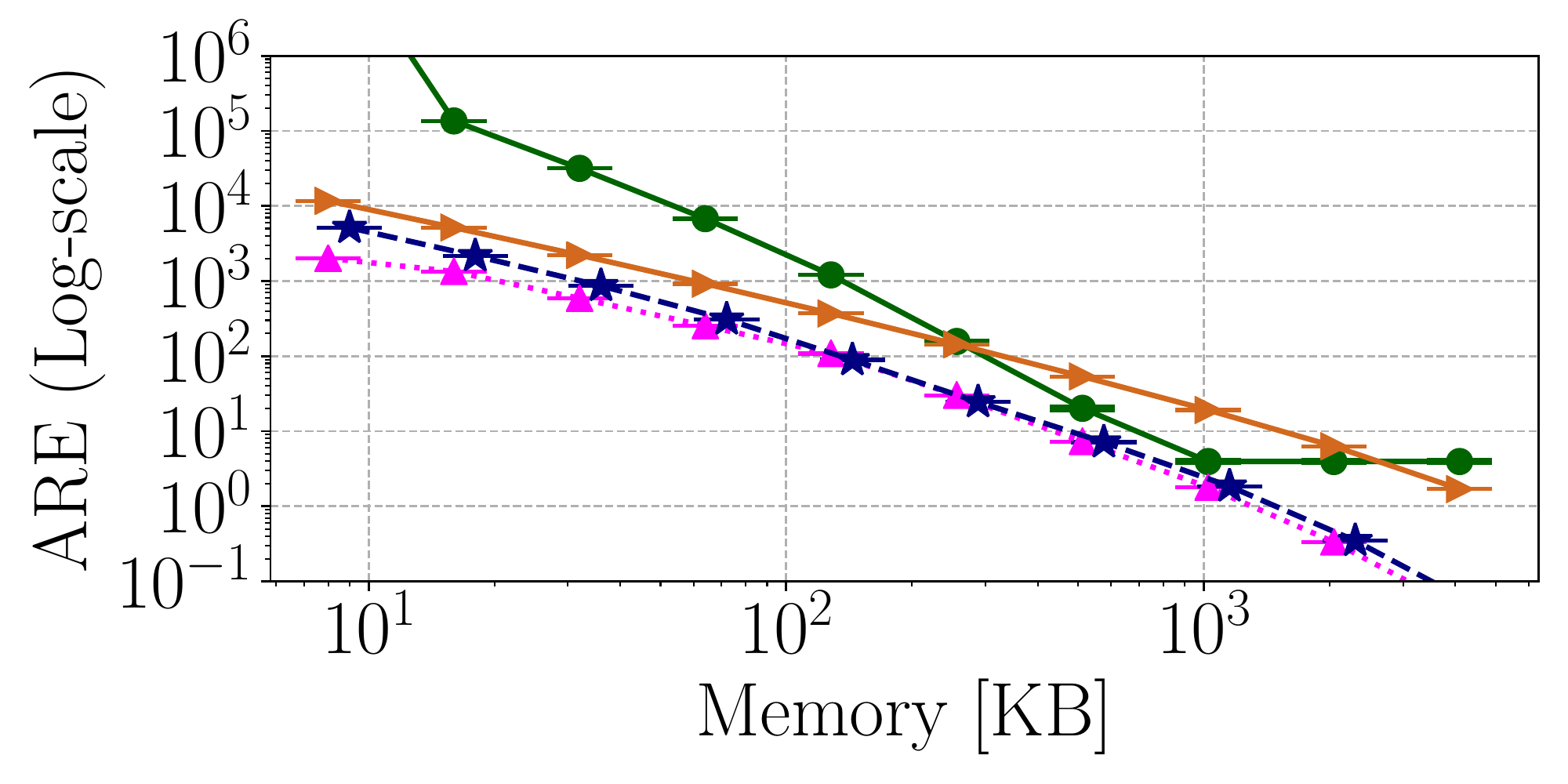}}
    \hspace*{-2mm}         
    \ifdefined\sigmodSubmission\vspace*{-2mm}\fi
    \caption{\small \mbox{Comparing the performance of the SALSA, Pyramid~\cite{PyramidSketch}, ABC~\cite{gong2017abc}, and Baseline versions of CMS.
    }\label{fig:Pyramid}}
    \ifdefined\sigmodSubmission\vspace*{-7mm}\fi 
\end{figure*}
\begin{figure}[t]
    \centering
    \ifdefined\sigmodSubmission
    \ifdefined\sigmodSubmission\vspace*{-4mm}\fi
    \fi
    \hspace*{-2mm}
    
    \subfloat[NY18]
    { \includegraphics[width =0.505\columnwidth]
    {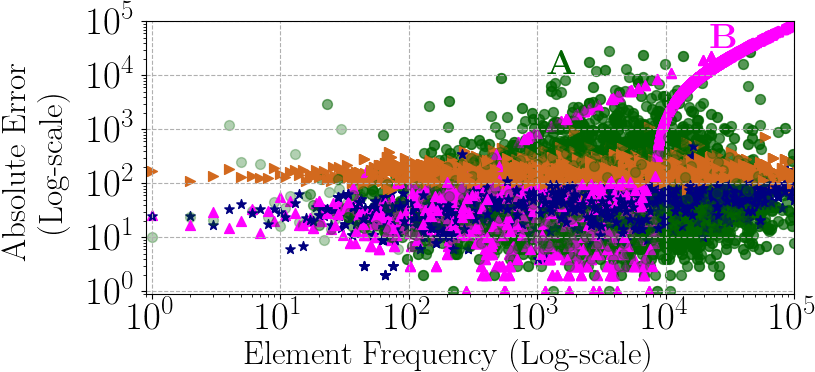}}
    \hspace*{-2mm}
    \subfloat[CH16
    ]
    { \includegraphics[width =0.505\columnwidth]
    {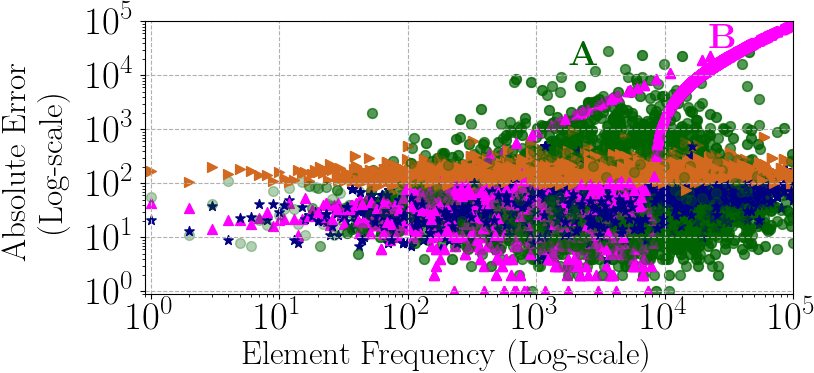}}   
    \hspace*{-2mm}\\
    {\includegraphics[width =0.96\columnwidth]
    {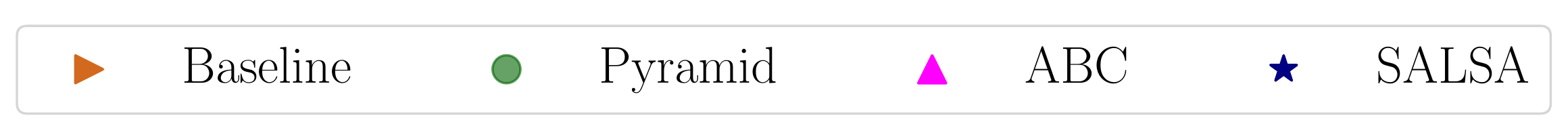}\ifdefined\sigmodSubmission\vspace*{-1mm}\fi}
    \hspace*{-1mm}
    \vspace*{-0mm}
    \caption{\mbox{The error distribution of the algorithms (2MB).}
    }\label{fig:scatter}
    \ifdefined\sigmodSubmission\vspace*{-2mm}\fi
\end{figure}

\begin{figure*}[]
    \centering
    \hspace*{-3mm}
    \subfloat[Error, NY18]
    {\label{3a}\includegraphics[width =0.5\columnwidth]
    {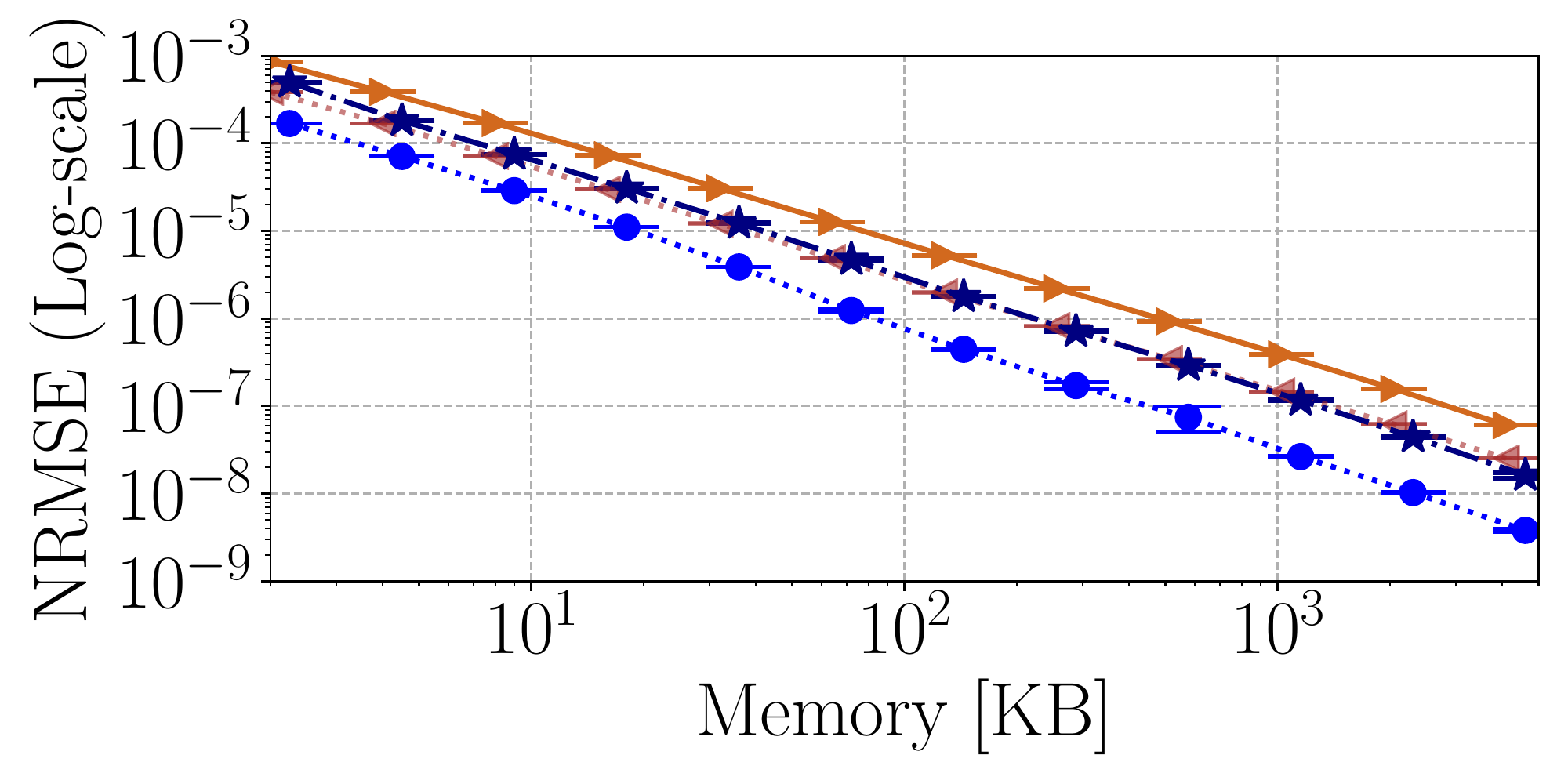}}
    \subfloat[Error, CH16]
    {\label{3b}\includegraphics[width =0.5\columnwidth]
    {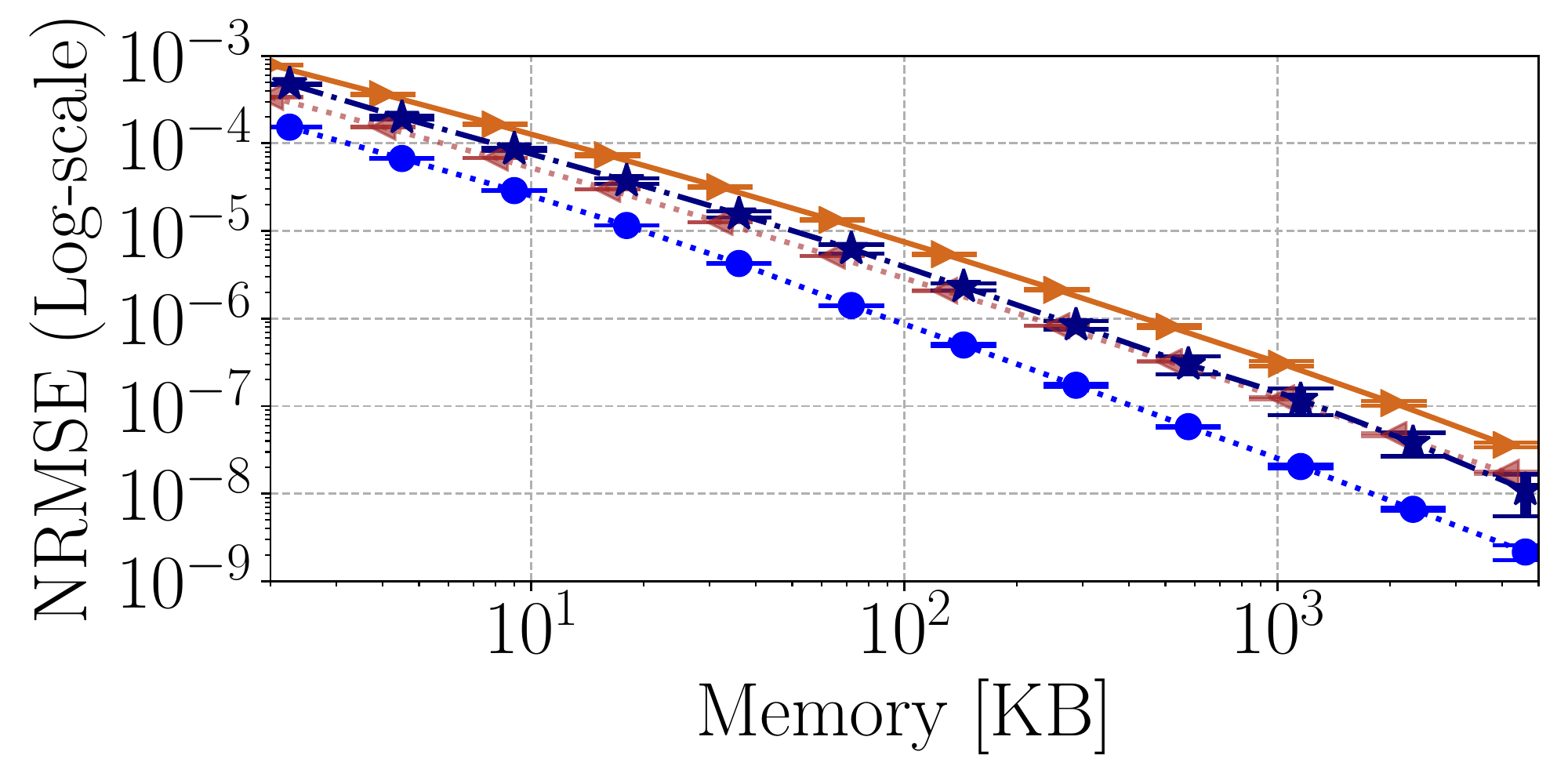}}
    \subfloat[Error, Univ2]
    {\label{3c}\includegraphics[width =0.5\columnwidth]
    {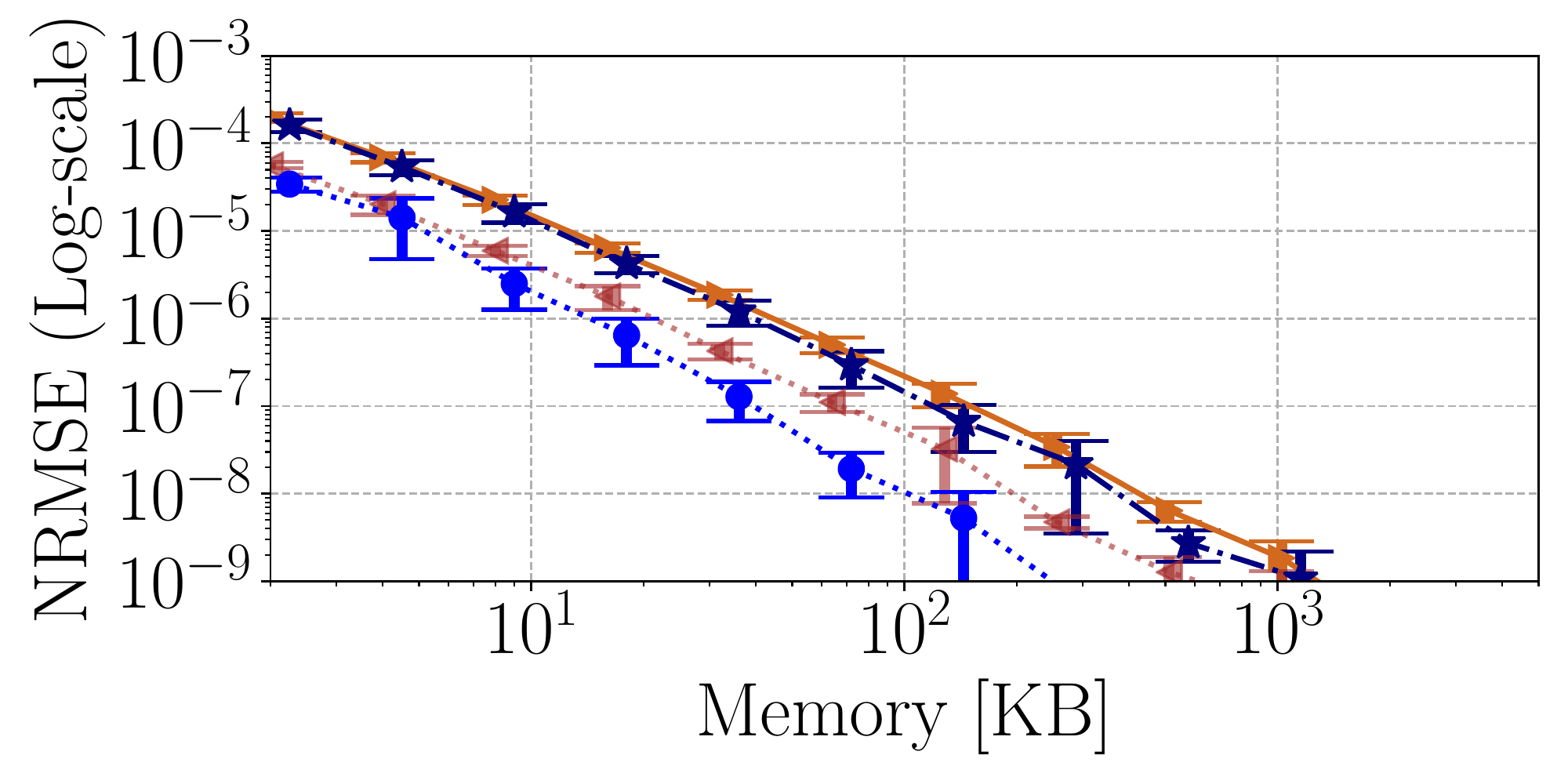}}
    \subfloat[Error, YouTube]
    {\label{3d}\includegraphics[width =0.5\columnwidth]
    {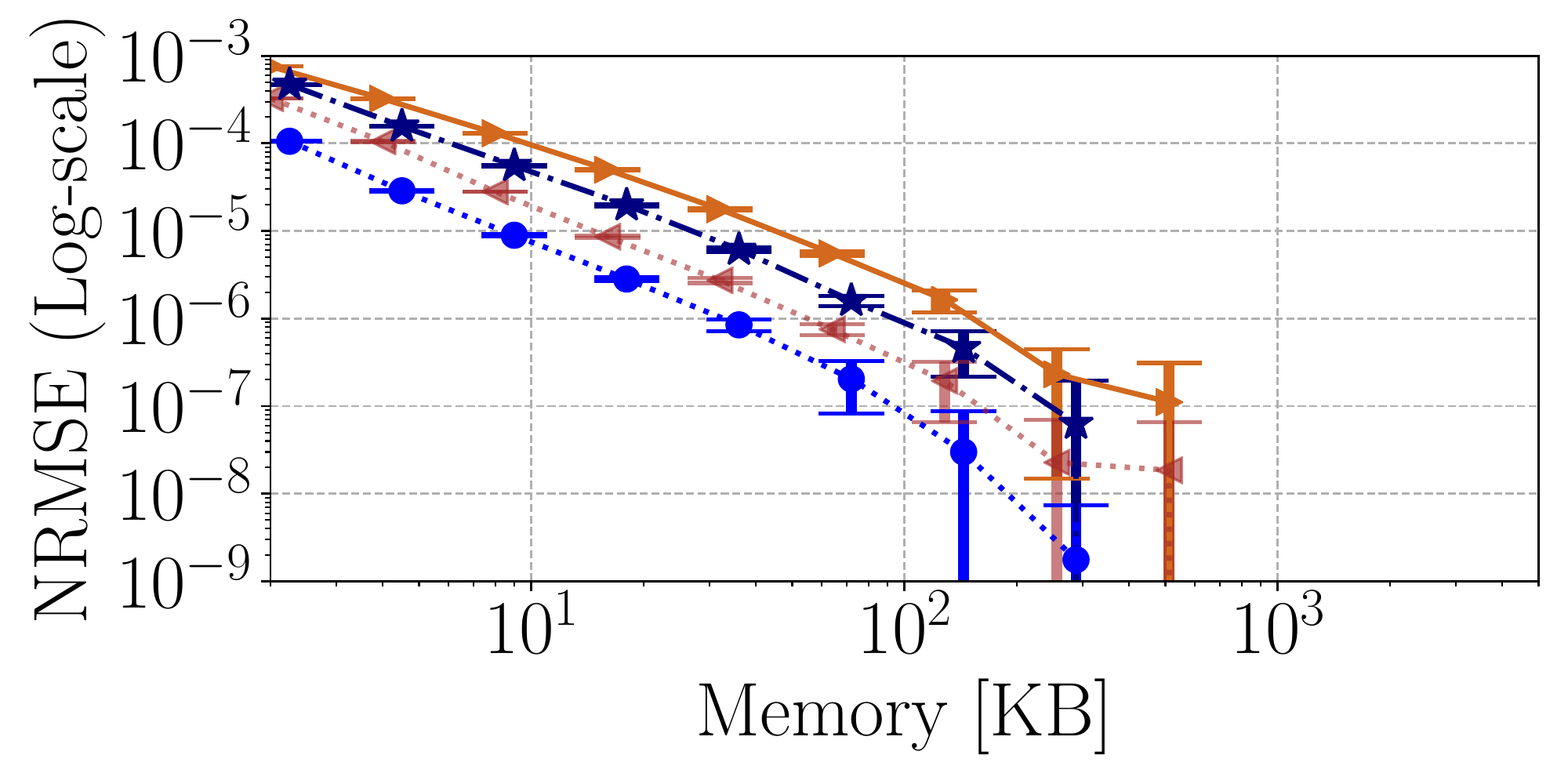}}    \\ 
    {\includegraphics[width=1.42\columnwidth]
    {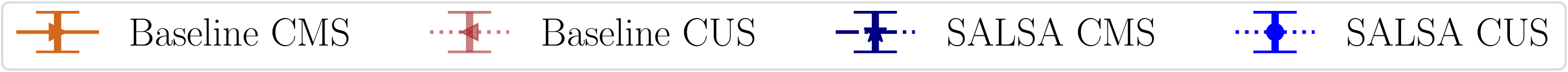}\ifdefined\sigmodSubmission\vspace*{-3mm}\fi}\\
    \hspace*{-3mm}
    \subfloat[Speed, NY18]
    {\label{3e}\includegraphics[width =0.5\columnwidth]
    {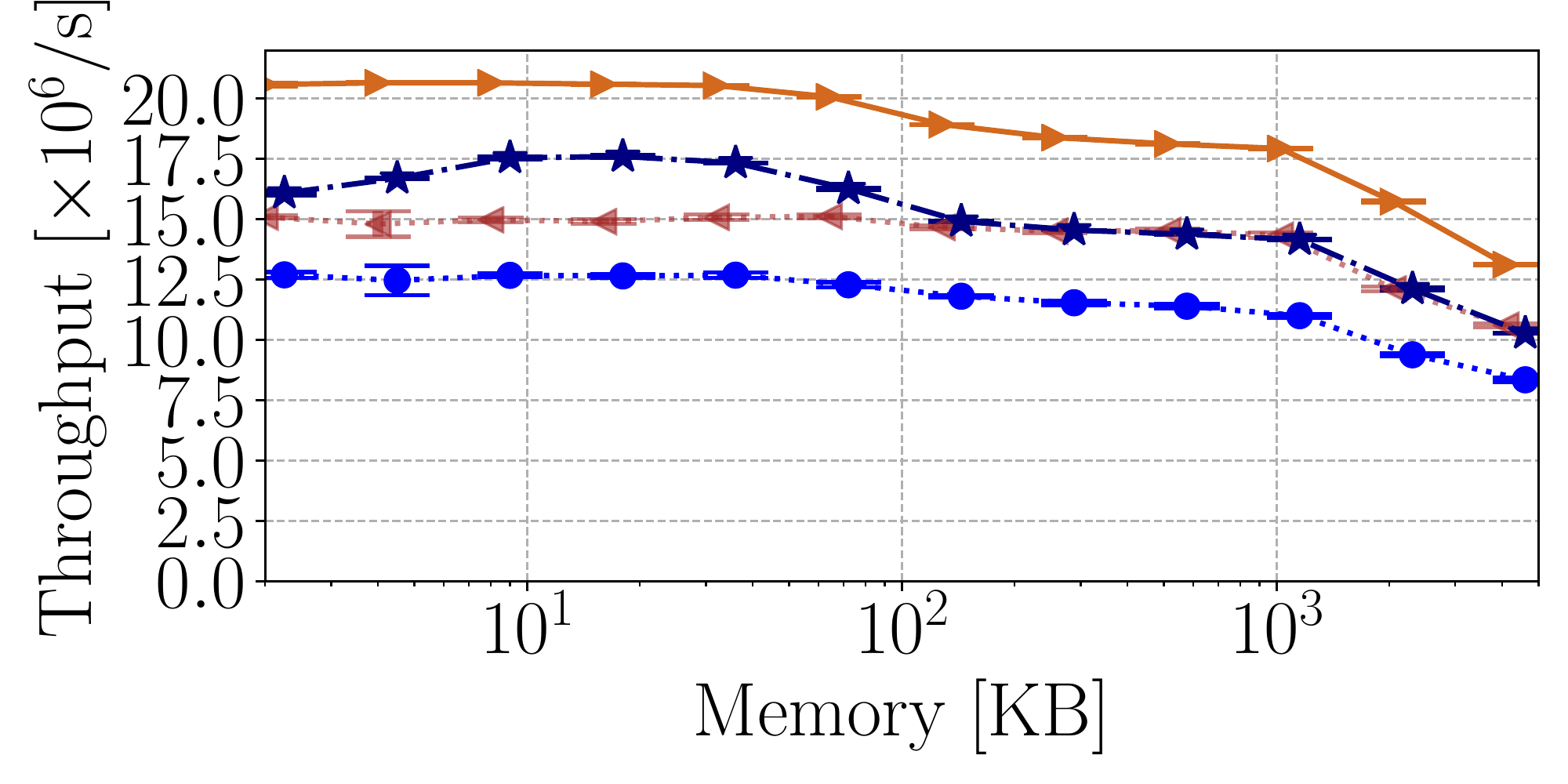}}
    \subfloat[Speed, CH16]
    {\label{3f}\includegraphics[width =0.5\columnwidth]
    {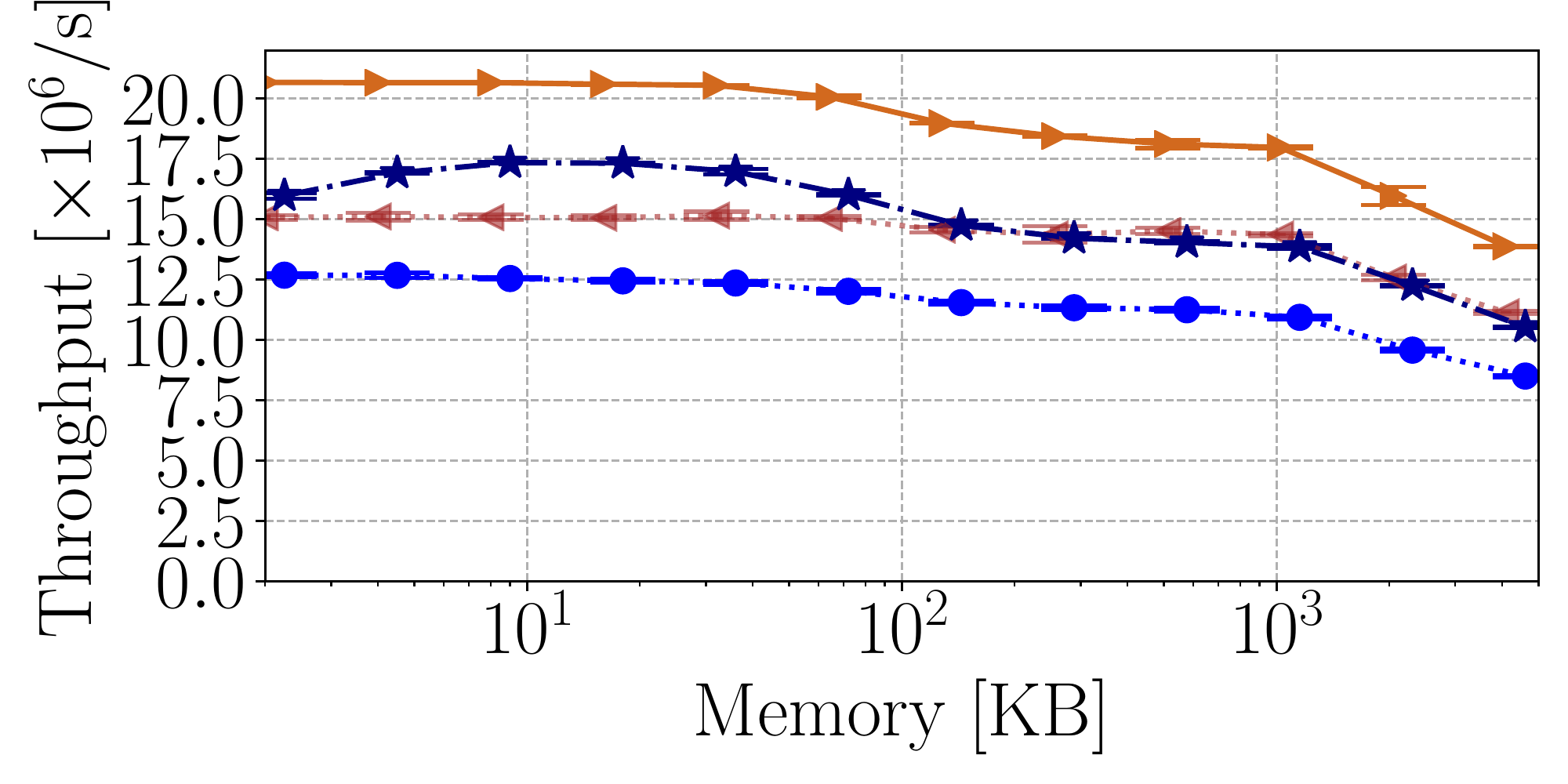}}
    \subfloat[Speed, Univ2]
    {\label{3g}\includegraphics[width =0.5\columnwidth]
    {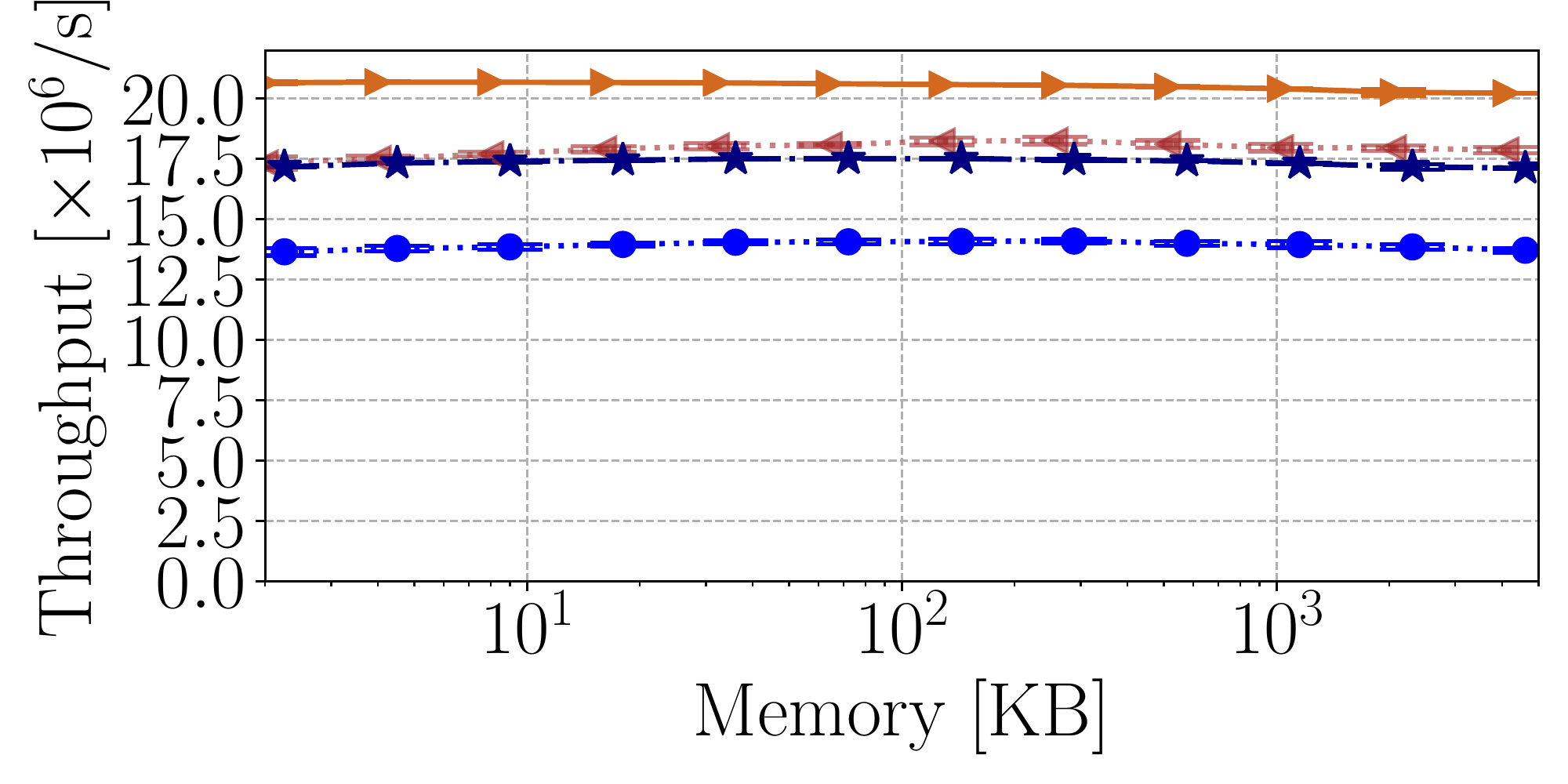}}
    \subfloat[Speed, YouTube]
    {\label{3h}\includegraphics[width =0.5\columnwidth]
    {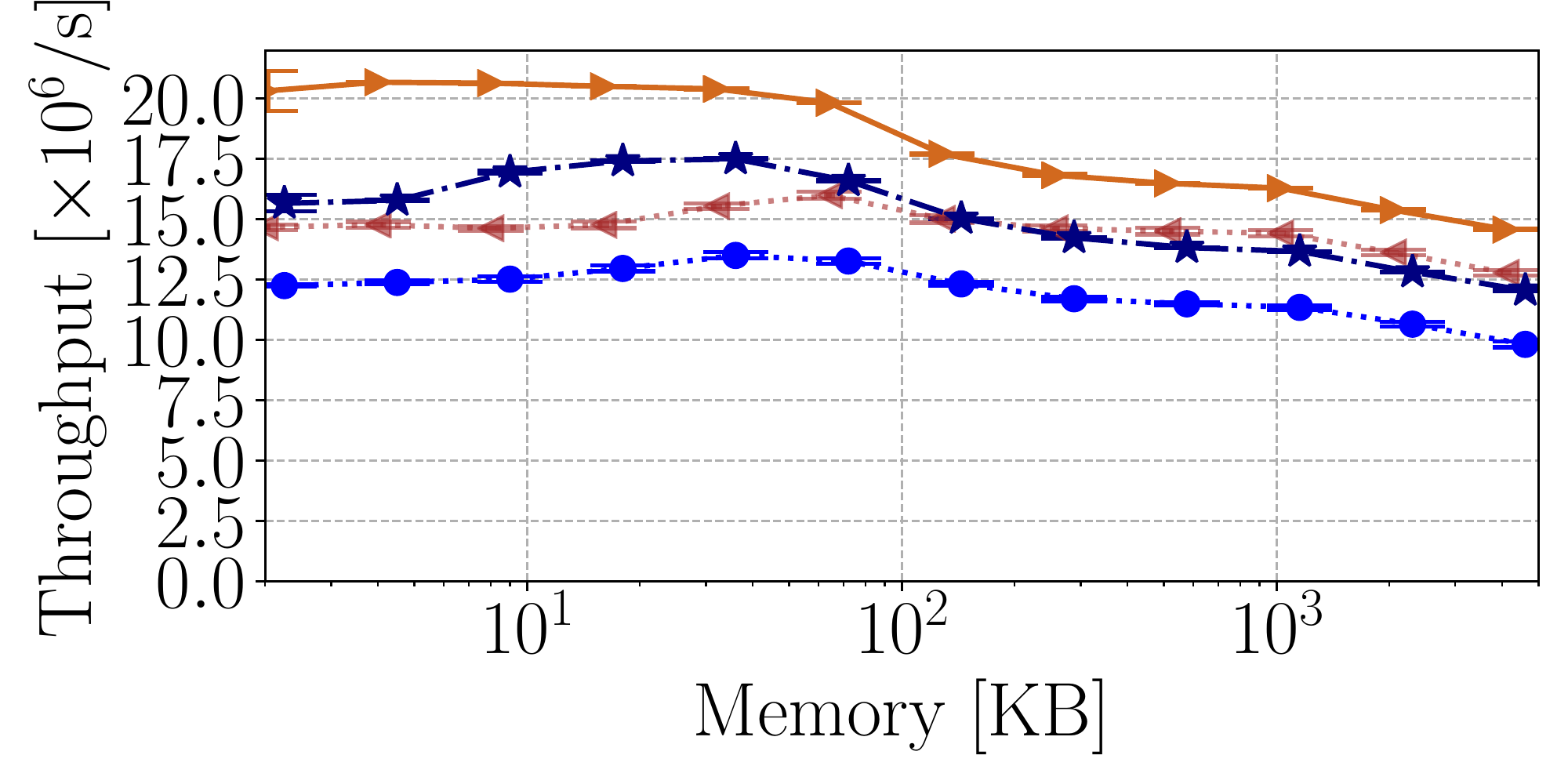}} 
    \ifdefined\sigmodSubmission\vspace*{-1mm}\fi
    \caption{Speed and accuracy of SALSA CMS and SALSA CUS for the real datasets. Notice the log-scale of the error plots. \label{fig:CMSandCUS} }
    \ifdefined\sigmodSubmission\vspace*{-7mm}\fi
\end{figure*}

\textbf{How to Configure SALSA?} 
We perform preliminary experiments to determine the default SALSA configuration.  

\textbf{How Large Should Counters Be?}
We first determine the most effective minimal counter size ($s$) for SALSA. Intuitively, for a fixed row width $w$, smaller $s$ results in lower error but also in larger encoding overheads.  With this tradeoff, it may not be profitable to reduce $s$. 
In this experiment, we fixed the memory of the counters and deliberately ignored the encoding overheads for SALSA. The goal is to quantify the attainable improvement from using smaller counters.
We used synthetic Zipfian trace with skews varying from 0.6 to 1.4.

As shown in Figure~\ref{fig:Zipf}, most of the improvement comes from reducing the counter sizes from $32$ bits to $8$ bits. These results were consistent across different memory footprints.  This is not surprising as almost all counters merge up to at least $8$ bits, but then many do not overflow further. 
We also observe that SALSA offers more gains for low-skew traces and that CS is better suited for lower skews, while CMS offers comparable accuracy with less space for high skew. 
Hereafter, we use $s=8$ bits as the default SALSA configuration. While SALSA with $s=4$ bits is slightly more accurate for low-skew workloads and high memory footprints, its encoding overhead of about 25\% of the sketch size (compared to 12.5\% for $s=8$) is too \mbox{large to justify the benefits.}

\textbf{Which Merging Should We Use?}
As mentioned above, SALSA CS must use sum-merging, and so does SALSA CMS for Strict Turnstile streams. Similarly, in SALSA CUS, we need to use max-merging. This leaves only the choice for SALSA CMS in Cash Register streams, where we can use either sum-merging or max-merging. We quantify the difference in accuracy in Figure~\ref{fig:merging-eval}.
As shown, max-merging is slightly more accurate, especially for low-skew workloads. We conclude that if one only targets Cash Register streams, it is better to use max-merging, but the accuracy of \mbox{sum-merging is not far behind.}

\ifdefined\icdeSubmission
\else
\textbf{Is Fine-grained Merging Worth It?}
To understand the accuracy improvement attainable by fine-grained merging (as opposed to SALSA's approach of doubling the counter size at each overflow), we compare SALSA with Tango. 
As the results in Figure~\ref{fig:tango} indicate, Tango also offers the best accuracy-space tradeoff when starting with $s=8$ bits (Tango$_{16}$ is equivalent to SALSA$_{16}$ and is omitted). However, while it is slightly more accurate, the gains seem marginal considering the computationally expensive operations of determining the counter's size and offset. Further, Tango has an overhead of $1$ bit per counter and does not obviously allow an efficient encoding \mbox{like SALSA does (Section~\ref{sec:smartEncoding}).}
\fi

\begin{figure}[t]
    \centering
    \hspace*{-2mm}
    \subfloat[Error, NY18]
    { \includegraphics[width =0.5\columnwidth]
    {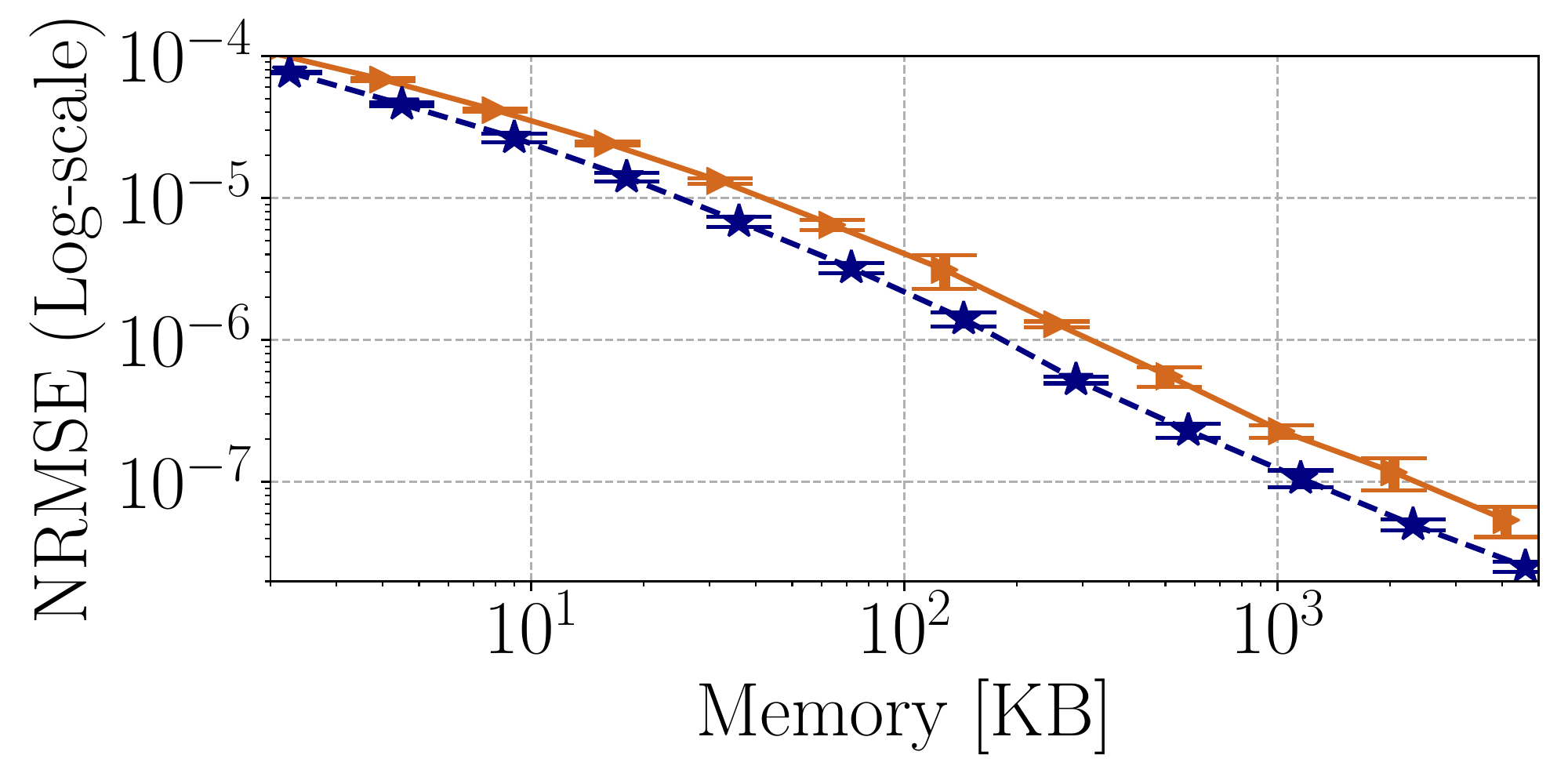}}
    \hspace*{-2mm}
    \subfloat[Error, CH16]
    { \includegraphics[width =0.5\columnwidth]
    {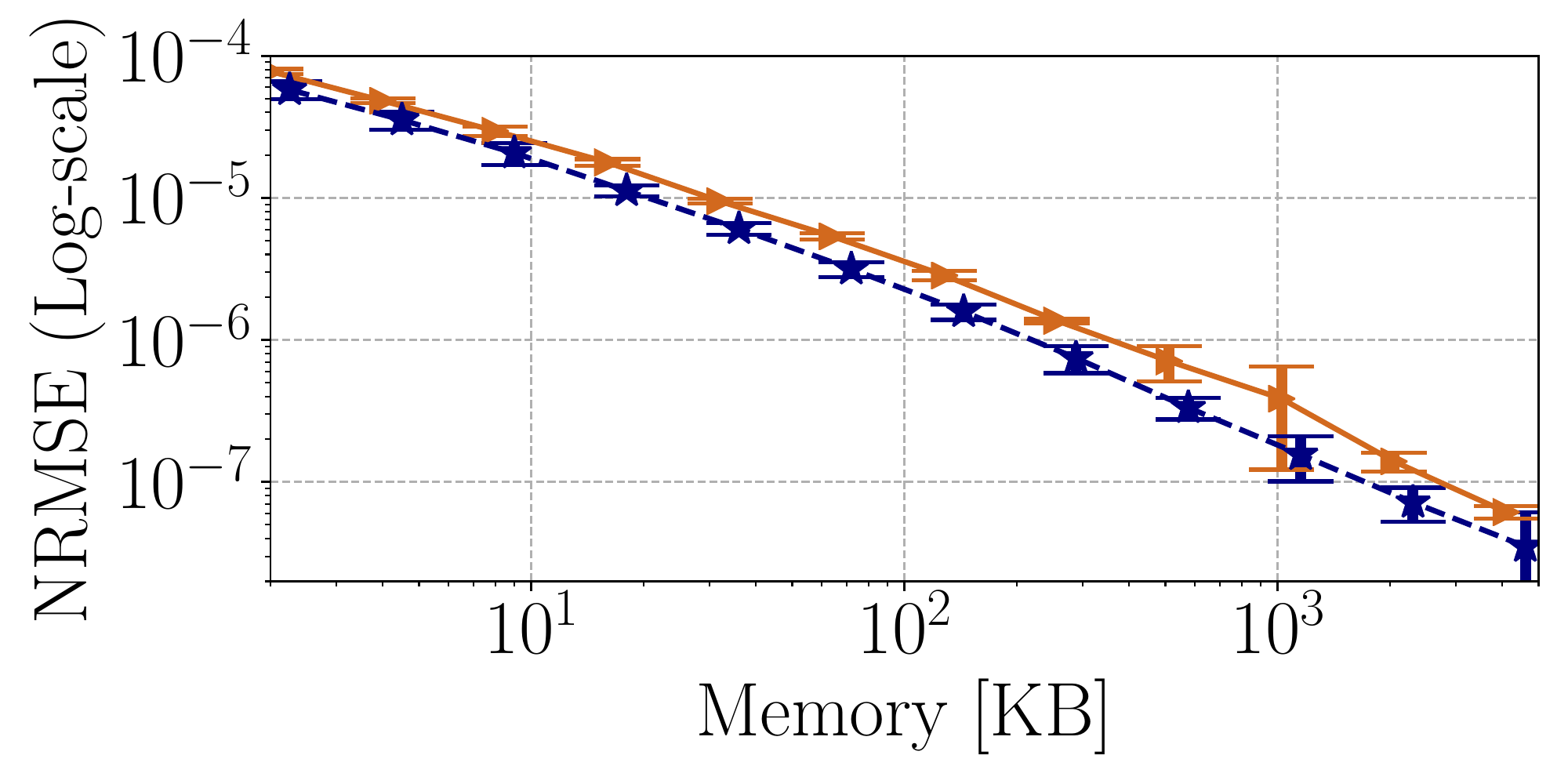}}   
    \hspace*{-1mm}\\
    {\includegraphics[width =0.6\columnwidth]
    {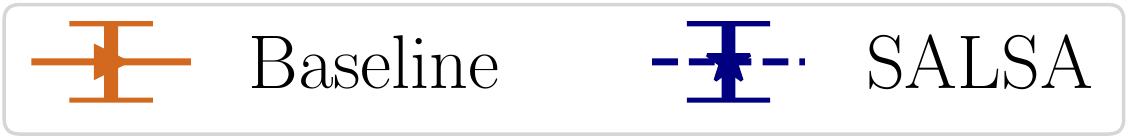}\ifdefined\sigmodSubmission\vspace*{-3mm}\fi}\\
        \subfloat[Error, 
        Univ2]
    {\includegraphics[width =0.5\columnwidth]
    {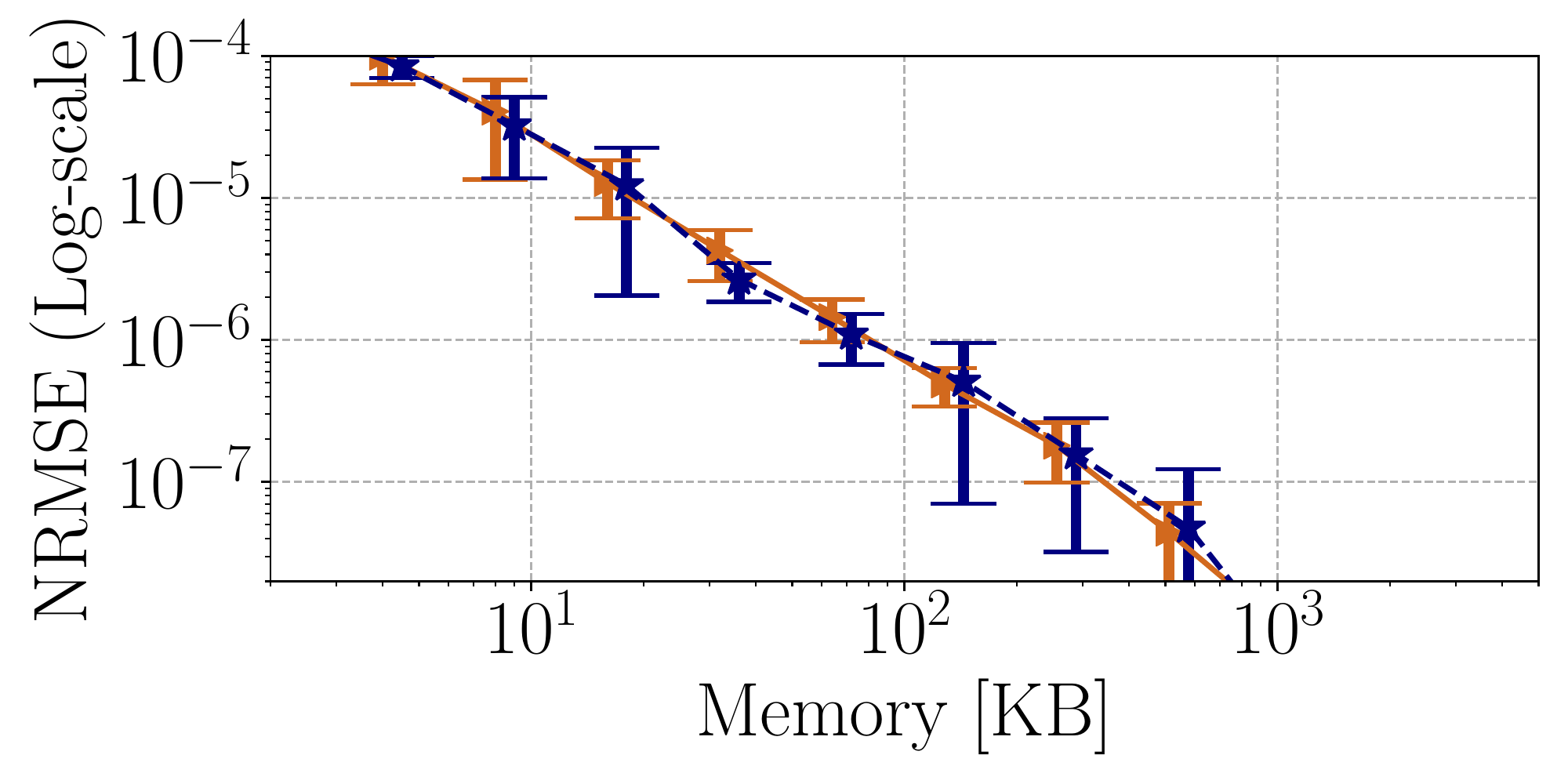}}    
    \subfloat[Error, YouTube]
    {\includegraphics[width =0.5\columnwidth]
    {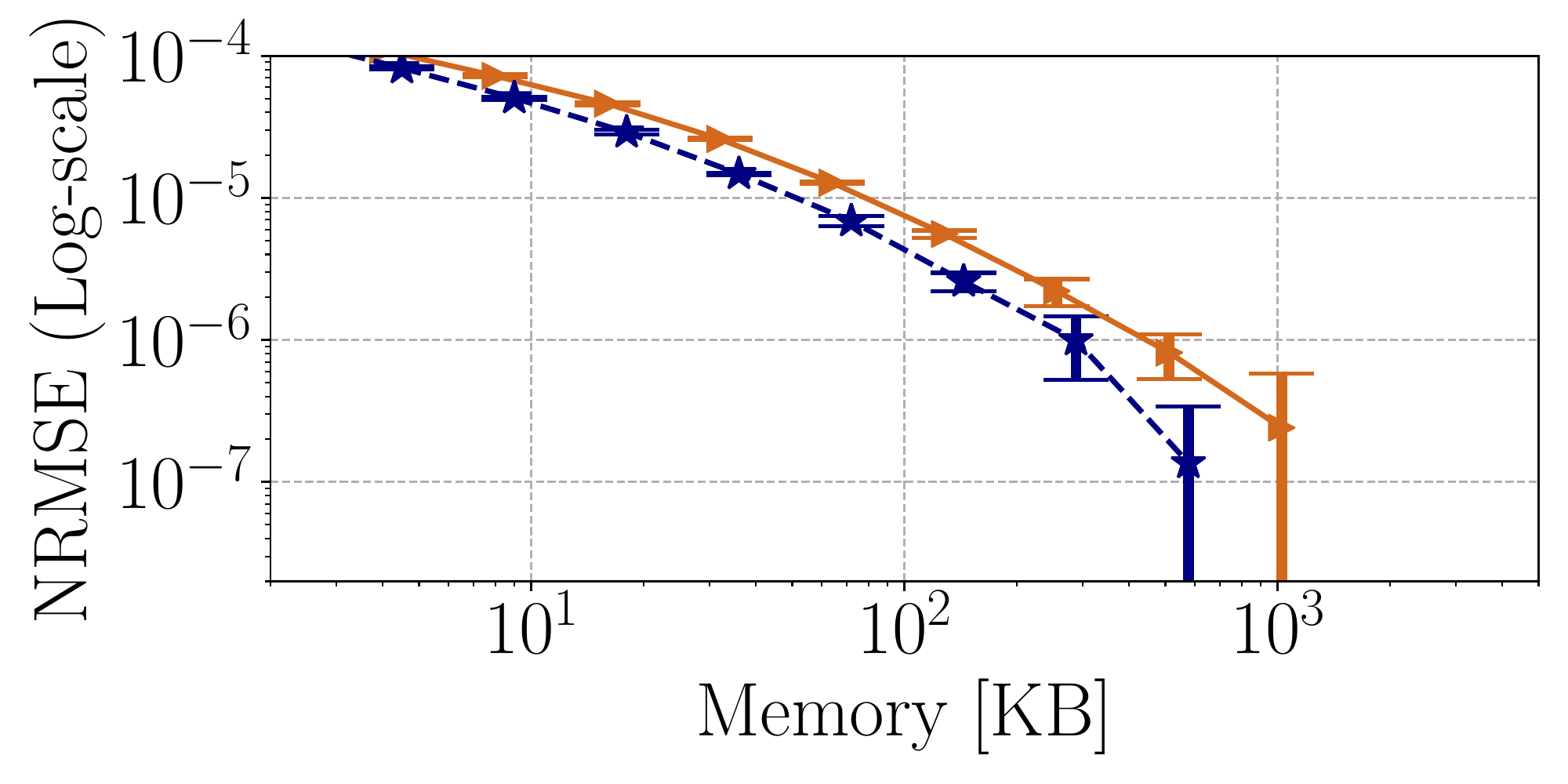}}
    \ifdefined\sigmodSubmission\vspace*{-1mm}\fi
    \caption{\small Accuracy of SALSA CS for the real datasets.
    }\label{fig:cs}
    \ifdefined\sigmodSubmission\vspace*{-4mm}\fi
\end{figure} 

\rev{
\textbf{Can one simply use small counters?}
In our evaluation, SALSA starts from $s=8$ bit counters. We now compare SALSA with a baseline sketch that uses $8$-bit or $16$-bit counters (as proposed, e.g., in~\cite{qi2019cuckoo}). In such a sketch, the counter is only incremented if it does not overflow (i.e., its value is bounded by $2^{b}-1$ for $b$-bit counters). 
We show that such solutions cannot capture the sizes of the heavy hitters -- elements whose frequency is at least a $\phi$ fraction -- which are often considered the most important elements~\cite{SpaceSavingIsTheBest2010}.
First, we show (Figure~\ref{fig:smallCountersVsPhi}) that when estimating heavy hitters, even with a loose definition of $\phi=10^{-4}$, it is best to use $32$-bit counters for CMS. Similarly, as shown in Figure~\ref{fig:smallCountersVsLength}, when the measurement is longer than 10M elements, the $16$-bit variant becomes less accurate. Figure~\ref{fig:smallCounters} is depicted for Zipfian trace with skew=1, and we observed similar behavior for other traces, thresholds, and memory footprints.
}

\textbf{Comparison with Pyramid Sketch and ABC: }
\CRdel{We continue our evaluation by comparing SALSA with two state-of-the-art solutions. 
Pyramid Sketch~\cite{PyramidSketch} extends overflowing counters into a hierarchical sketch-based structure, and ABC~\cite{gong2017abc} copes with overflowing counters by stealing bits from their neighbors. }
We used the authors' original implementations for both Pyramid Sketch and ABC.
We present results for CMS on the NY18 and CH16 datasets; similar results are obtained for  \mbox{additional sketches and workloads.}

As shown in Figure~\ref{fig:pspeed1}, and Figure~\ref{fig:pspeed2}, Pyramid Sketch and SALSA are about 20\% slower than the baseline, while ABC is about 75\% slower. Intuitively, the slowdown is expected, as all these algorithms bring additional complexity to the baseline. ABC is significantly slower due is additional encoding overheads as its bit-borrowing technique does not allow byte-alignment for counters, forcing it to make additional bitwise operations for reading and updating counters.
 
{ In terms of the NRMSE metric (\ref{fig:nrmsep1} and \ref{fig:nrmsep2}), SALSA achieves the best results. Next is the baseline following by Pyramid Sketch, and ABC. 
The on-arrival NRMSE metric gives more weight to the frequent elements, and is more sensitive to larger errors than AAE and ARE. 
 %
Our results indicate that SALSA is also more accurate than Pyramid Sketch and the baseline in terms of AAE and ARE for the entire memory range. Note that Pyramid Sketch is better than the baseline in the memory range 0.5MB-2MB, which is the range it is optimized for according to the paper. 
ABC is slightly more accurate than SALSA for small memory sizes but less accurate than SALSA for large memory sizes, and is comparable in between. Our conclusion is that SALSA is the best in the NRMSE metric and is competitive in the AEE, and ARE, and speed metrics. }

\definecolor{darkgreen}{rgb}{0.1, 0.3, 0.23}
\textbf{Understanding the differences:} 
\CR{Our first observation, is that the AAE and ARE metrics are not suitable when estimating the size of the heavy hitters, which are often considered the most important elements~\cite{SpaceSavingIsTheBest2010}. This is because both metrics give equal weight to all items, making impact of the error of the largest ones vanish due to averaging. 
This is evident from Figure~\ref{fig:smallCounters}, in which the leftmost point ($\phi=10^{-8}$) corresponds to the AAE and ARE metrics (as all items are accounted for). As shown, in such a case, using $8$-bit counters yields lower error rates. Nevertheless, such a solution cannot count beyond the value of $255$, which results in excessive error for the heavy hitters (e.g., $\phi=10^{-3}$).
\ifdefined\sigmodSubmission
In fact, our full version~\cite{} shows that in these metrics, for CMS, and this dataset, it is better to estimate all sizes as $0$ without performing any measurement.}
\else
In fact, as we show in Appendix~\ref{app:stupidMetrics}, for CMS, and this dataset, it is better to estimate all sizes as $0$ without performing any measurement.
\fi
To illustrate the differences that make Pyramid Sketch and ABC competitive in the AAE/ARE metrics, but not in NRMSE, we visualize the errors of estimating individual element frequencies. 
We sampled one random element from each possible frequency to reduce clutter.
The results, showing in Figure~\ref{fig:scatter}, demonstrate the differences between the algorithms.
SALSA has a low error-variance and is consistently more accurate than the Baseline.
In contrast, Pyramid Sketch (as shown in region \textcolor{darkgreen}{\textbf{A}}) has much higher variance, as elements whose counters overflow share the \emph{most significant} bits with other elements. 
ABC, as evident in region \textcolor{magenta}{\textbf{B}}, has a high error on heavy hitters as its counters can at most double in size by combining with their neighbors. We configured ABC to start with $8$-bit counters as suggested by the authors, limiting its estimation to at most $2^{13}-1$ (as three bits are spent on overhead). While one could use larger counters, it would decrease their number and diminish the benefit over the baseline sketch.
To conclude, both ABC and Pyramid Sketch have elements with high estimations errors, making them less attractive for (Mean Square Error)-like metrics.

\textbf{L1 Sketches:}
We proceed by testing the impact that SALSA (with $s=8$ bit counters) has on the accuracy and speed of L1 sketches, such as CMS and CUS. The results, depicted in Figure~\ref{fig:CMSandCUS}, show that SALSA CMS is substantially more accurate (roughly requiring half the space for the same error) than the Baseline for the NY18, CH16, and YouTube datasets. For Univ2, SALSA's improvement is less noticeable, and due to its encoding overheads, the tradeoff is not statistically significant. SALSA CUS is better than the Baseline on all traces, and often requires half the space for a given error.

SALSA's accuracy comes at the cost of additional operations that are required to maintain the counter layout. We measured SALSA to be 17\%-23\% slower than the corresponding Baseline variants, but can nonetheless handle 10-17.5 million elements per second on a single core, which is sufficient to support the high link rate forwarding at modern large-scale clusters, such as Google, which is estimated at 9M packets per second (see ~\cite[Sec. 3.2]{eisenbud2016maglev}) . 
We note that by combining SALSA with estimators (Section~\ref{sec:estimatorsEval}), we can make faster counter sketches.
We conclude that SALSA offers an \mbox{appealing accuracy to space tradeoff.}

\textbf{Count Sketch:}
Next, we evaluate SALSA for Count Sketch, whose L2 guarantee is important for low-skew workloads and more complex algorithms such as UnivMon.
As shown in Figure~\ref{fig:cs}, SALSA offers statistically significant improvement for the NY18, CH16, and YouTube datasets. For Univ2, the accuracy improvement is offset by the encoding overhead, and it is not clear which variant is better.

\textbf{UnivMon: }
We use SALSA CS to extend the Universal Monitoring (UnivMon) sketch that supports estimating a wide variety of functions of the frequency vector. 
Our experiment includes estimating the element size entropy and $F_p$ moments, for $0\le p \le 2$.  
The result in Figure~\ref{fig:univmon} indicates that SALSA improves the accuracy of both tasks. Interestingly, for entropy estimation, we observe that SALSA's accuracy (and variance) improve when using smaller ($s=2$ or $s=4$ bit) counters. When using a large amount of memory, SALSA has roughly the same accuracy as the baseline, as both hit a bottleneck in the size of the sketches' heaps (set to $100$ elements, as in \mbox{the implementation of~\cite{Nitro}).}

For estimating $F_p$ moments, we measure similar accuracy for small values of $p$ while SALSA improves the accuracy for large $p$ values. To explain this, notice that the element size estimates mainly affect the $F_p$ for large $p$, while for $p\approx 0$, the value is determined primarily by the cardinality.

\begin{figure}[t]
    \centering
    \hspace*{-2mm}
    \subfloat[Entropy Estimation]
    { \includegraphics[width =0.5\columnwidth]
    {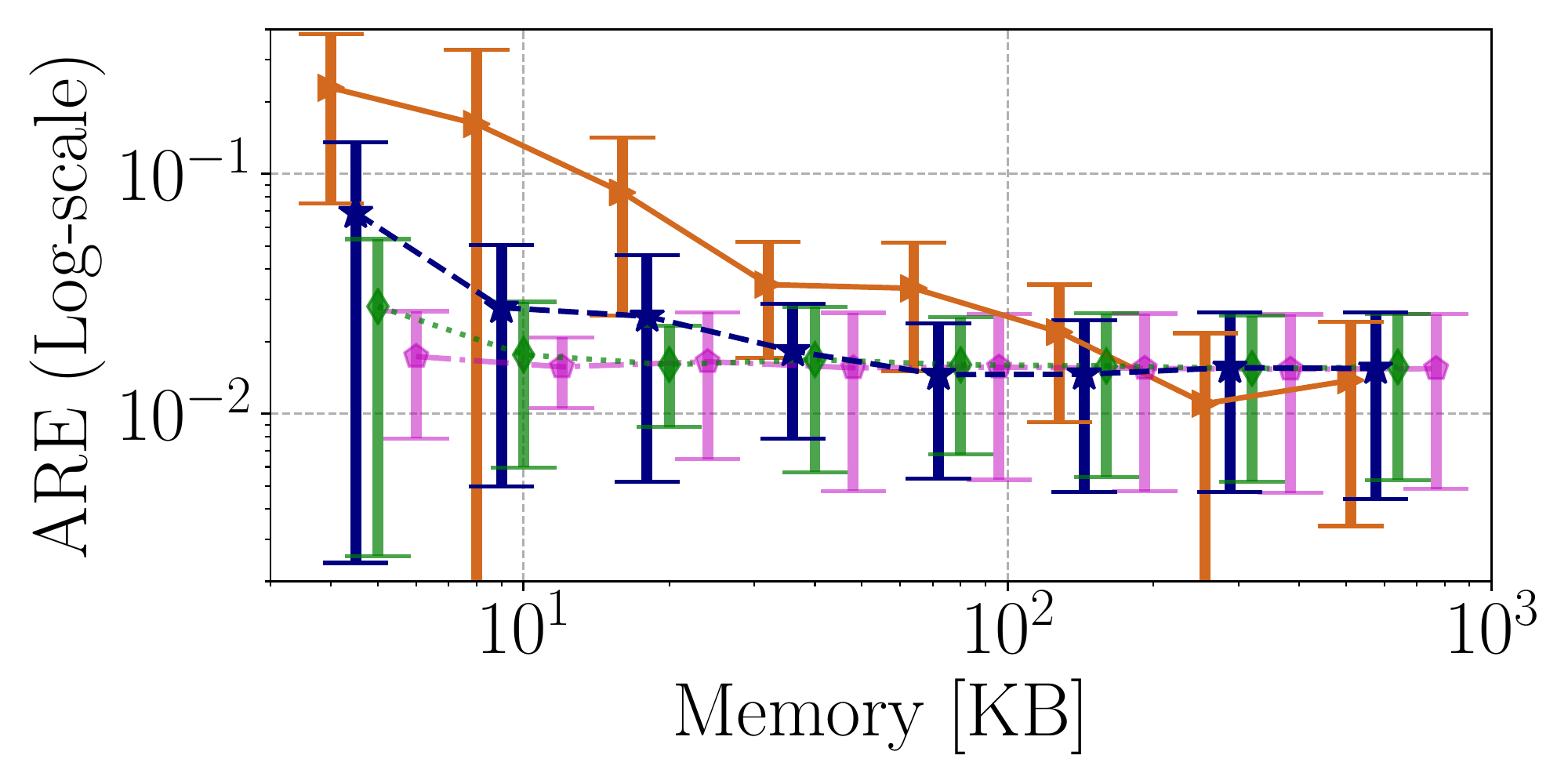}}
    \hspace*{-2mm}
    \subfloat[{\scriptsize Frequency Moment Estimation (400KB)}]
    { \includegraphics[width =0.5\columnwidth]
    {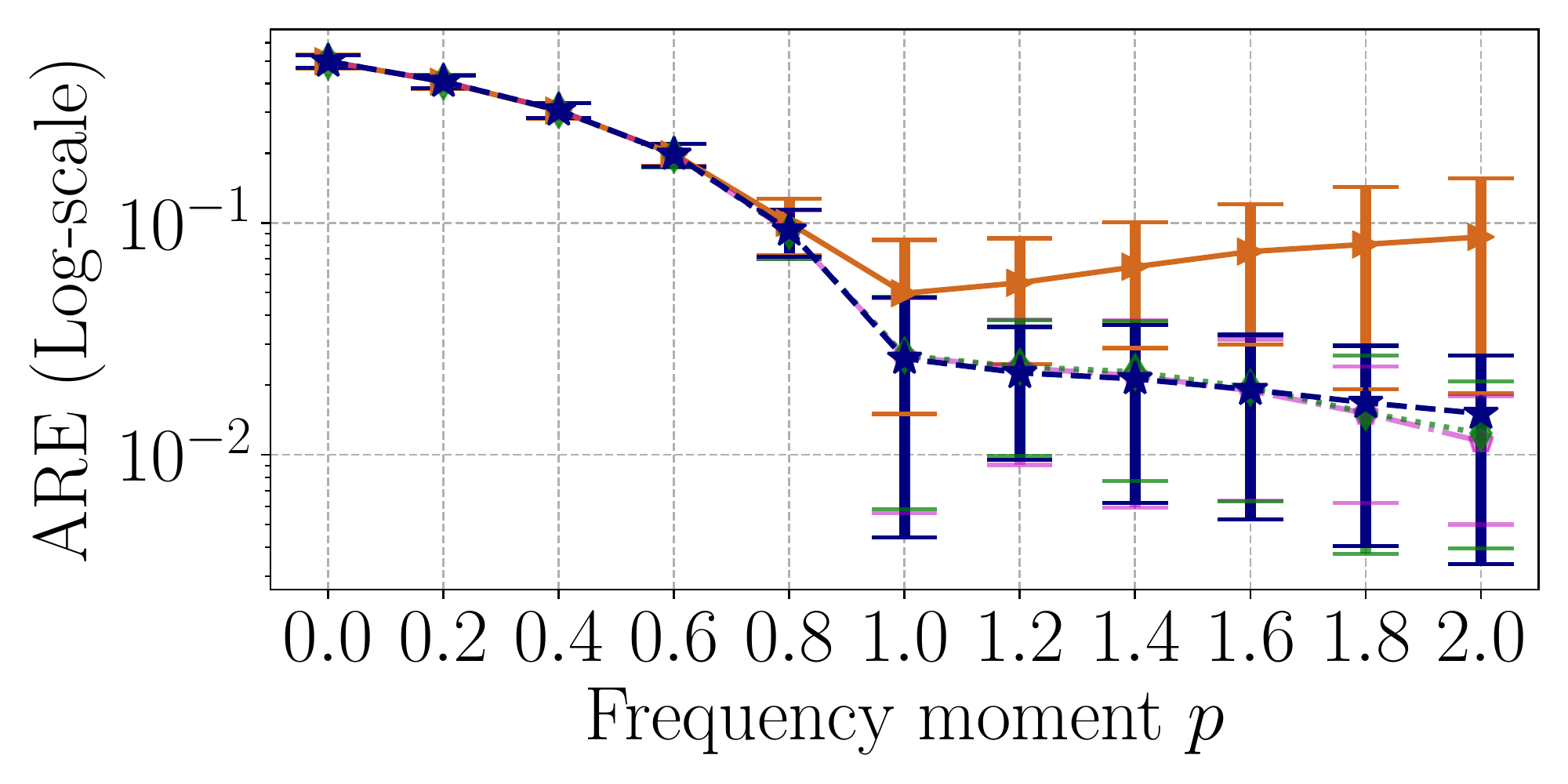}}\\
    {\vspace*{-0mm}\includegraphics[width =1.04\columnwidth]
    {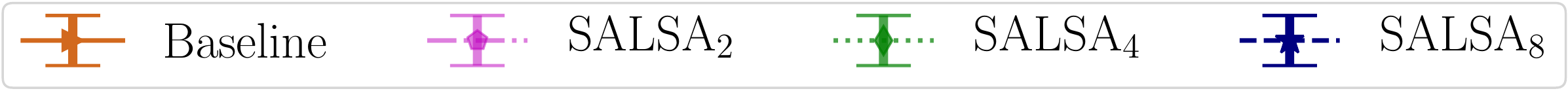}}
    \ifdefined\sigmodSubmission\vspace*{-5mm}\fi
    \caption{\small Accuracy of SALSA UnivMon for the NY18 dataset.
    }\label{fig:univmon}
    \ifdefined\sigmodSubmission\vspace*{-3mm}\fi
\end{figure} 

\textbf{Cold Filter: }
We extend Cold Filter by replacing the second-stage CUS (denoted CM-CU in the original paper) algorithm with our SALSA variant. The results in Figure~\ref{fig:coldfilter} use the AAE and ARE metrics suggested by its authors~\cite{ColdFilter}.
The results show that SALSA saves up to 50\% of the space for a similar error.  However, the improvement is more evident when the allocated memory is small, as in these cases the second stage algorithm plays a significant role. When the memory size is large (compared to the measurement length), the first-stage algorithm handles most of the flows, and improving the second-stage CUS algorithm yields marginal benefits. 
We observed negligible differences in the processing speed, which is expected as many elements only touch the first stage and do not reach the second. 
We also tested Cold Filter versus its SALSA variant using the NRMSE metric; there, SALSA yields even larger accuracy gains. However, Cold Filter's aggregation buffer needs to be drained upon query, which negates its speedup potential \mbox{in the on-arrival model.
}

\begin{figure}[]
    \centering
    \hspace*{-2mm}
    { \includegraphics[width =0.49\columnwidth]
    {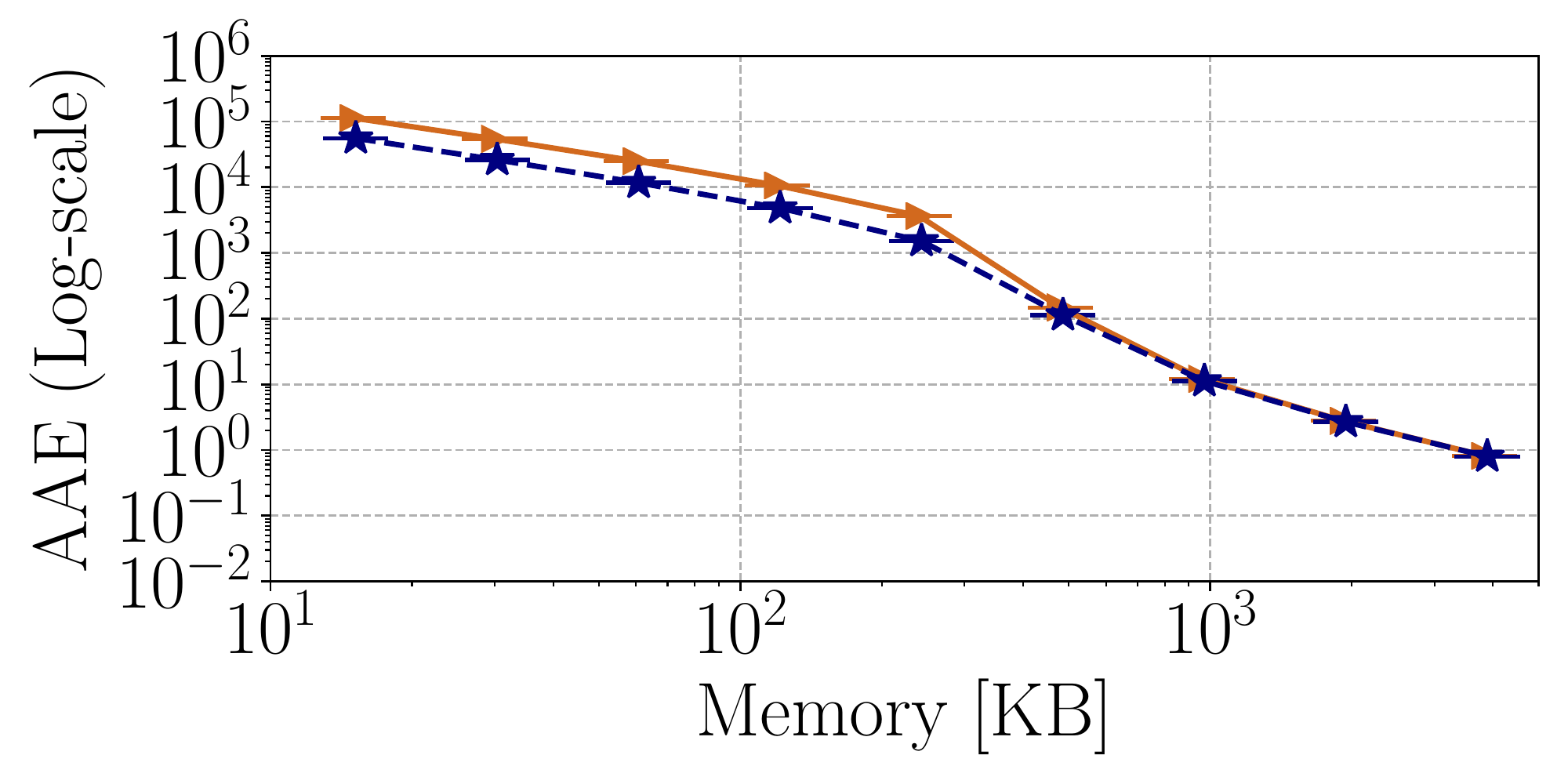}}
    \hspace*{-2mm}
    { \includegraphics[width =0.49\columnwidth]
    {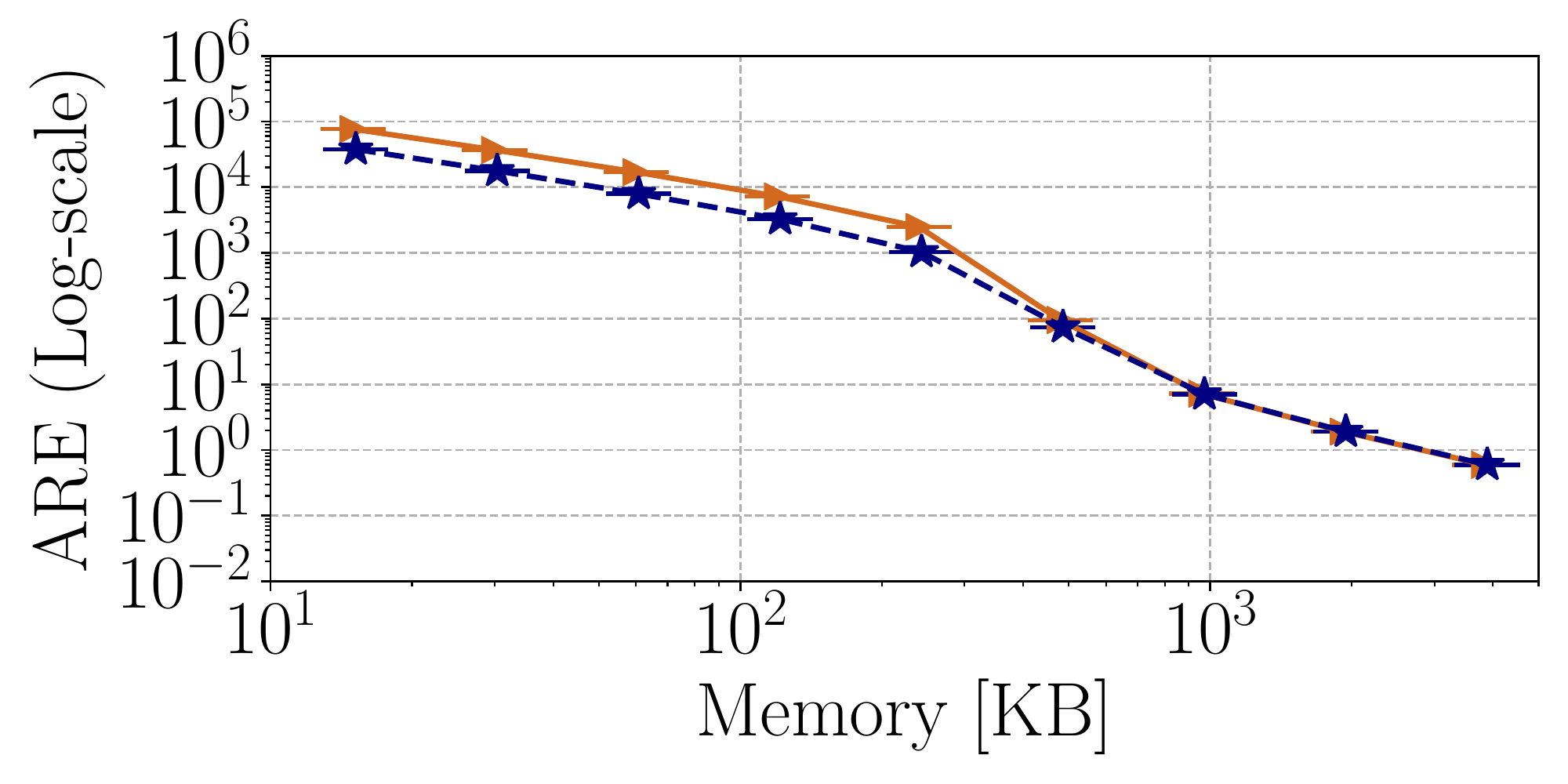}}\\
    {\ifdefined\sigmodSubmission\vspace*{-1mm}\fi\includegraphics[width =.6\columnwidth]
    {figs/Salsa/cs/cs_legend_cropped}}
    \ifdefined\sigmodSubmission\vspace*{-1mm}\fi
    \caption{Accuracy of SALSA Cold Filter for the NY18 dataset.
    }\label{fig:coldfilter}
    \vspace*{-5mm}
\end{figure} 
\countdistinct{\textbf{Count Distinct and Heavy Hitters using Count Min:}
We evaluate the performance of SALSA CMS on additional applications such as counting distinct elements and estimating the size of the heavy hitters.
As shown in the count, distinct results (Figures~\ref{fig:cmsCDandHH}(a)-(c)), neither SALSA CMS nor the Baseline are effective with low memory footprints. This is because no counters remain zero-valued, and the Linear Counting estimator fails. 
Nevertheless, SALSA CMS can work with less memory (4.5MB for NY18 and 1.125MB for CH16) and reduce the estimation error when the Baseline does produce estimates. Intuitively, Linear Counting with $w$ buckets can count up to $w\ln w$ elements, so the number of elements in the datasets (6.5M for NY18 and 2.5M for CH16) \mbox{imposes a lower bound on the amount of space needed.}}
\countdistinctapp{\textbf{Heavy Hitters using Count Min:}}
We evaluate the accuracy for estimating the \emph{heavy hitters} (elements with frequency of at least $\phi\cdot N$) frequencies, while varying $\phi$ between $10^{-4}$ and $10^{-2}$ as in~\cite{SpaceSavingIsTheBest2010}.
SALSA CMS is more accurate, especially for small values of $\phi$. The lower improvement for large $\phi$ values can be explained as $\phi\cdot N\ge 2^{16}$ for $\phi\ge \frac{2^{16}}{98\cdot 10^6}\approx 7\cdot 10^{-4}$, which means that all such heavy hitters cause their counters to merge to $32$ bits (the same as the Baseline). The plot of Figure~\ref{fig:hhNY18} stops around $\phi\approx 3.16\cdot 10^{-4}$ as no element in the NY18 dataset has frequency larger \mbox{than $5.62\cdot 10^{-4}\cdot N\approx 551K$  packets.}
 
\begin{figure}[t]
    \centering
    \hspace*{-2mm}
\ifdefined\icdeSubmission
\else    
    \subfloat[NY18 Dataset]
    { \includegraphics[width =0.16\textwidth]
    {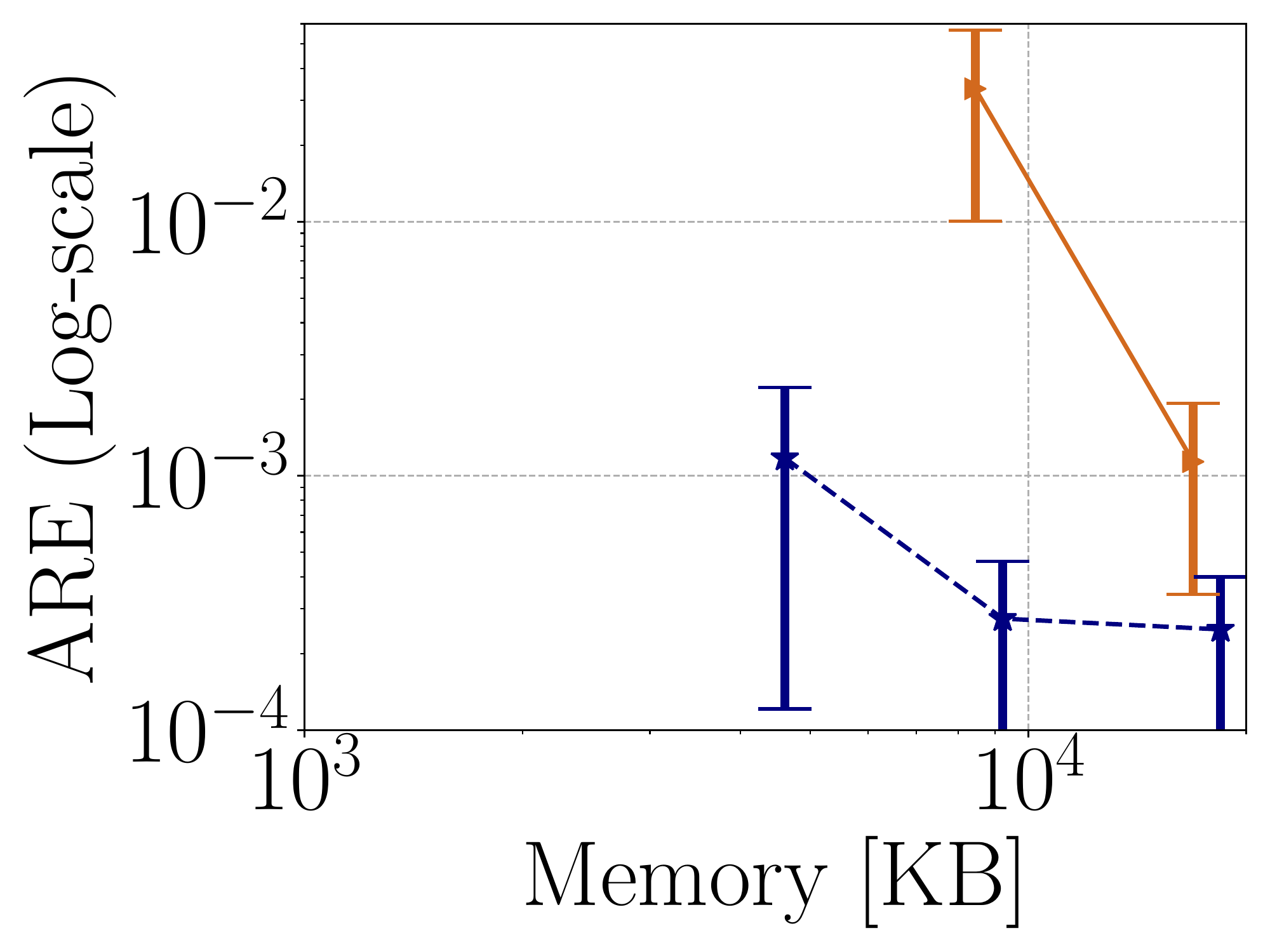}}
    \hspace*{-2mm}
    \subfloat[CH16 Dataset]
    { \includegraphics[width =0.16\textwidth]
    {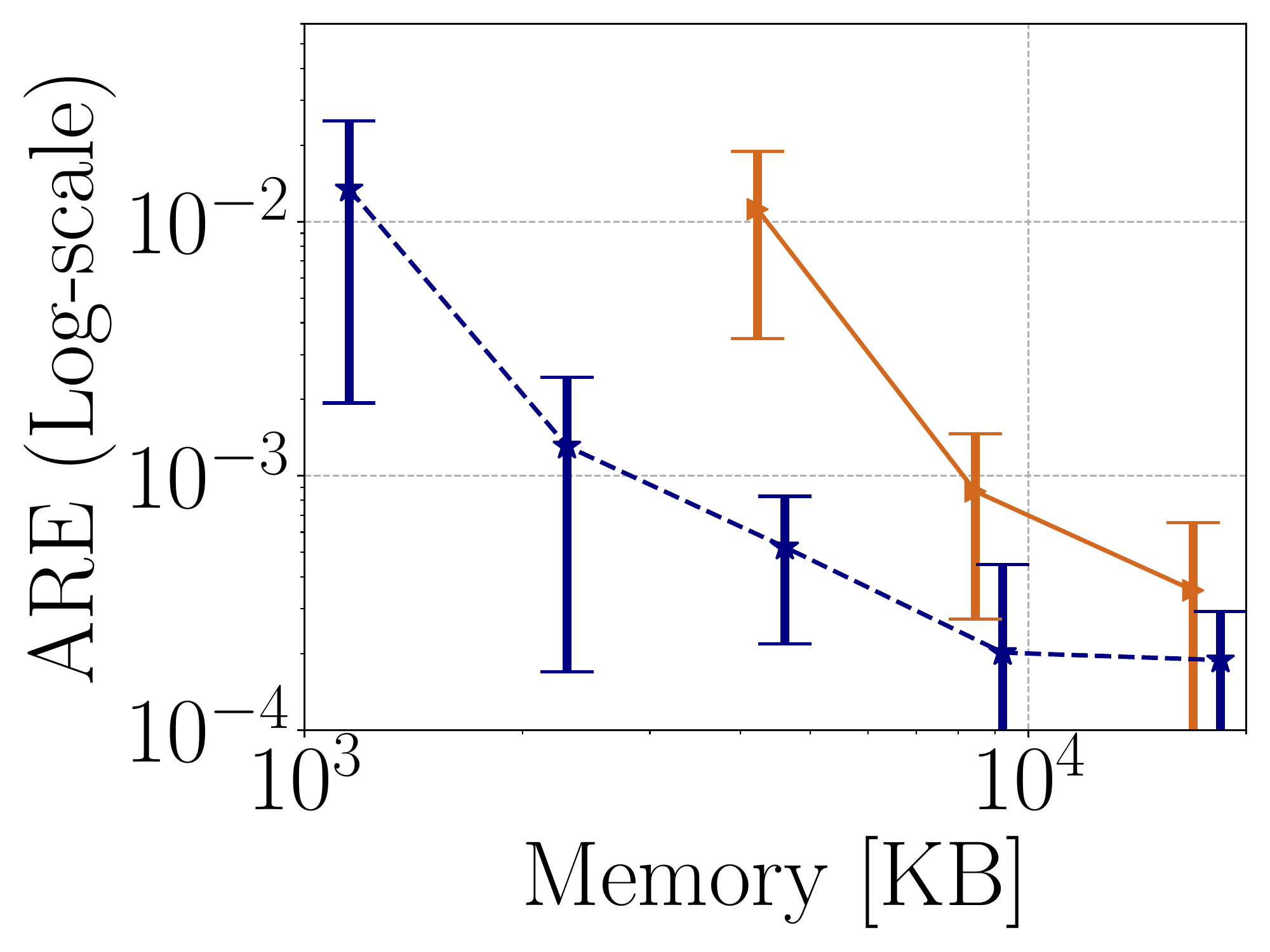}}
    \subfloat[Zipf (8MB)]
    { \includegraphics[width =0.16\textwidth]
    {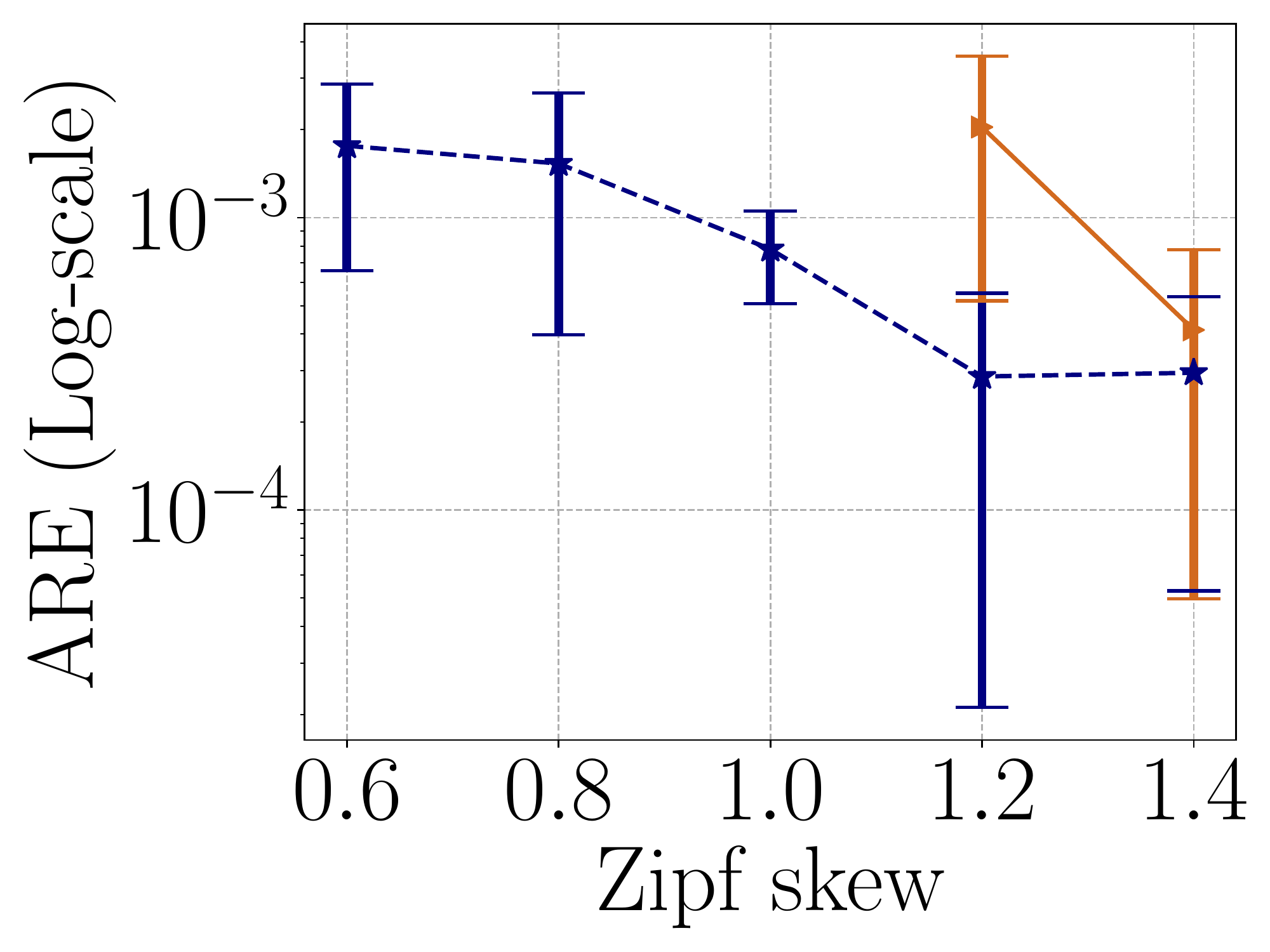}}    \\
    {\vspace*{-0mm}\includegraphics[width =.6\columnwidth]
    {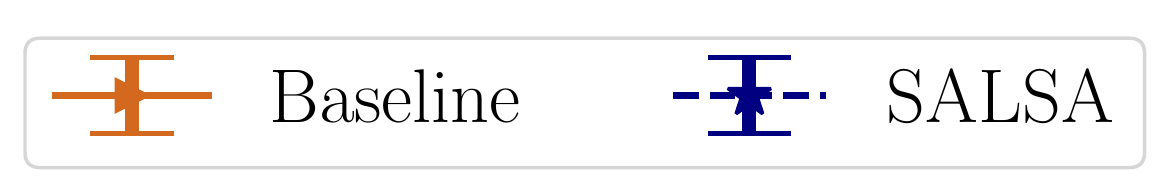}}\\
\fi    

    \ifdefined\sigmodSubmission\vspace*{-3mm}\fi
    \subfloat[NY18 Dataset]
    { \includegraphics[width =0.16\textwidth]
    {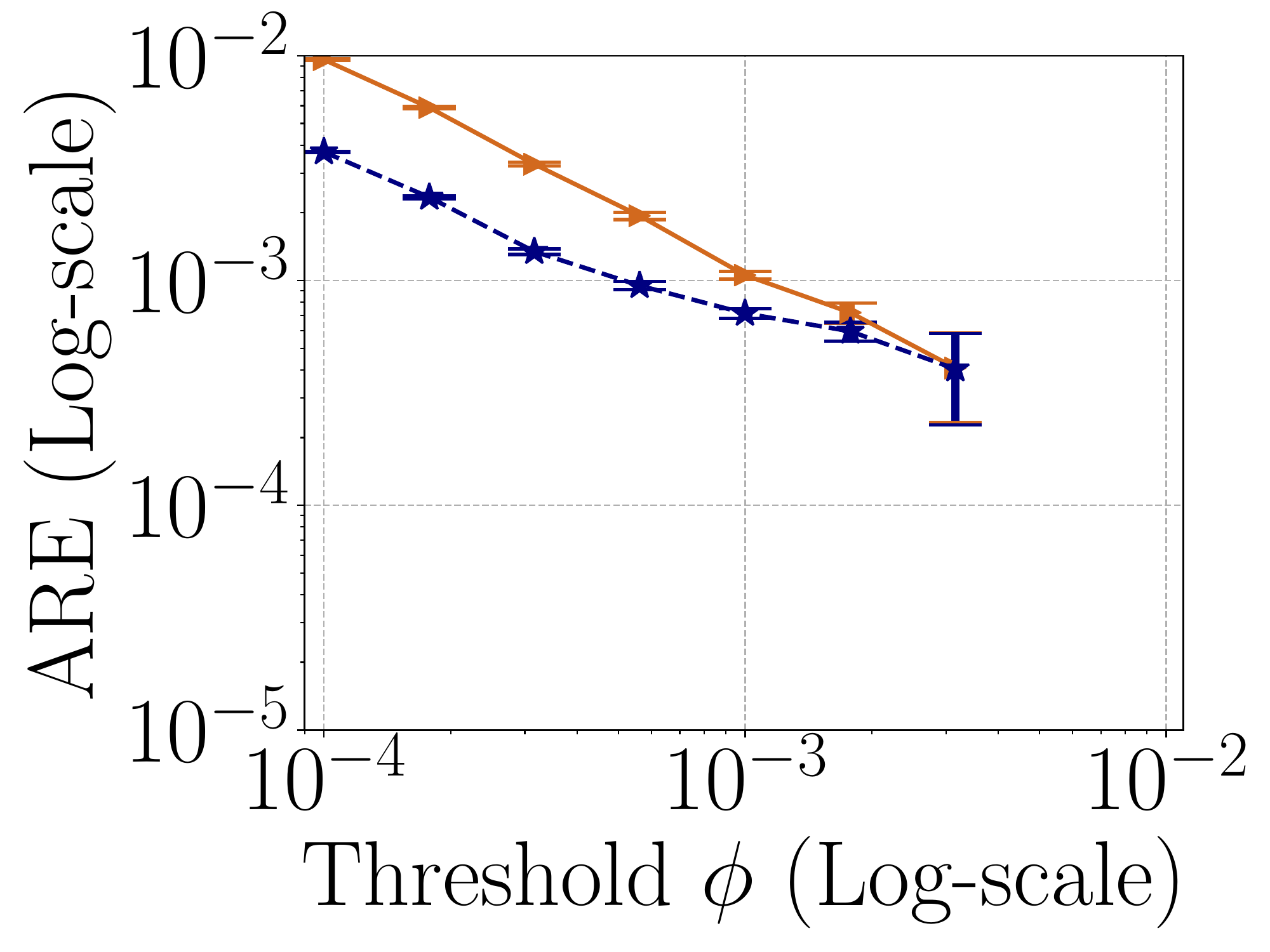}\label{fig:hhNY18}}
    \hspace*{-2mm}
    \subfloat[CH16 Dataset]
    { \includegraphics[width =0.16\textwidth]
    {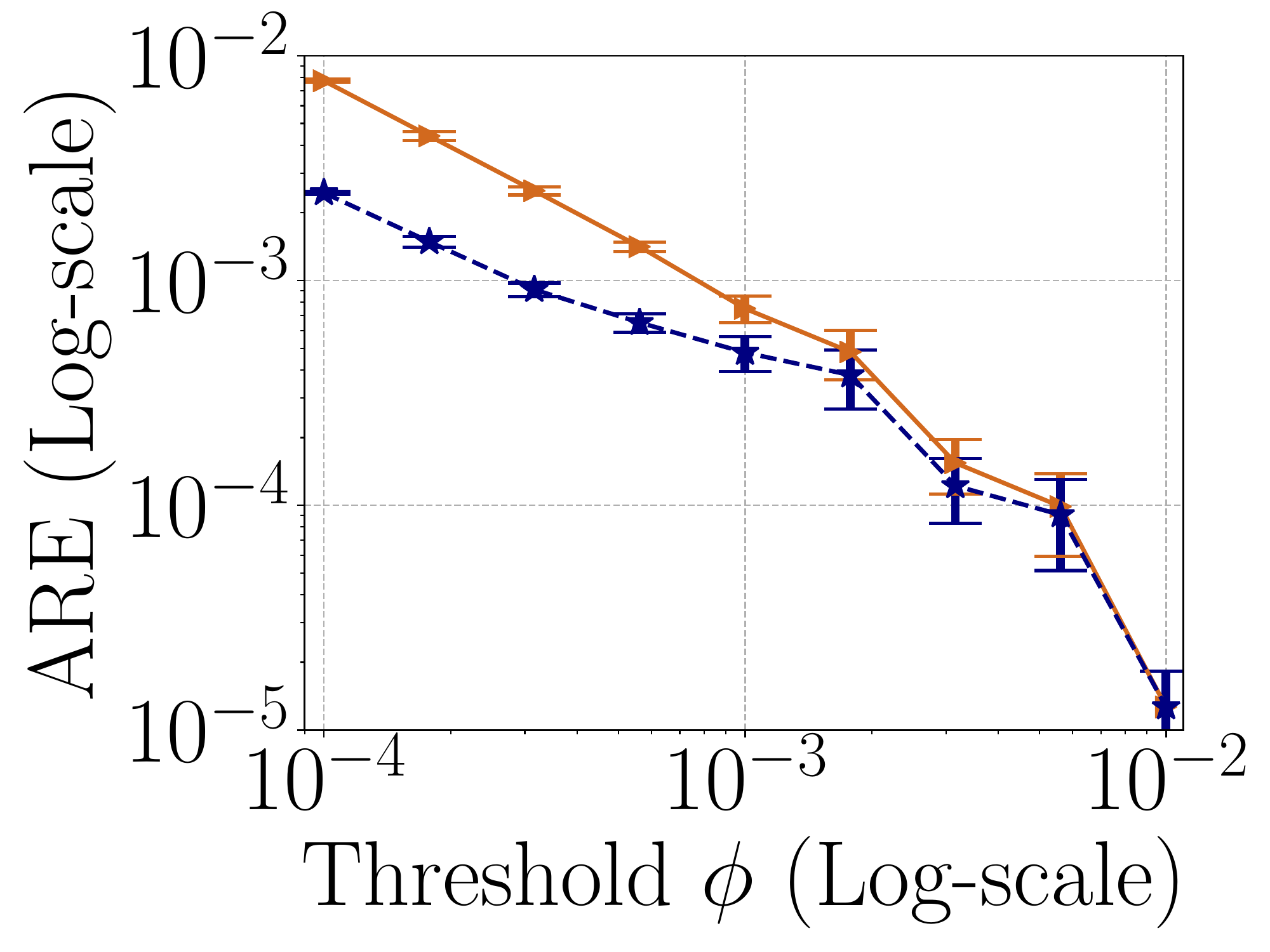}}
    \subfloat[Zipf ($\phi=10^{-4}$)]
    { \includegraphics[width =0.16\textwidth]
    {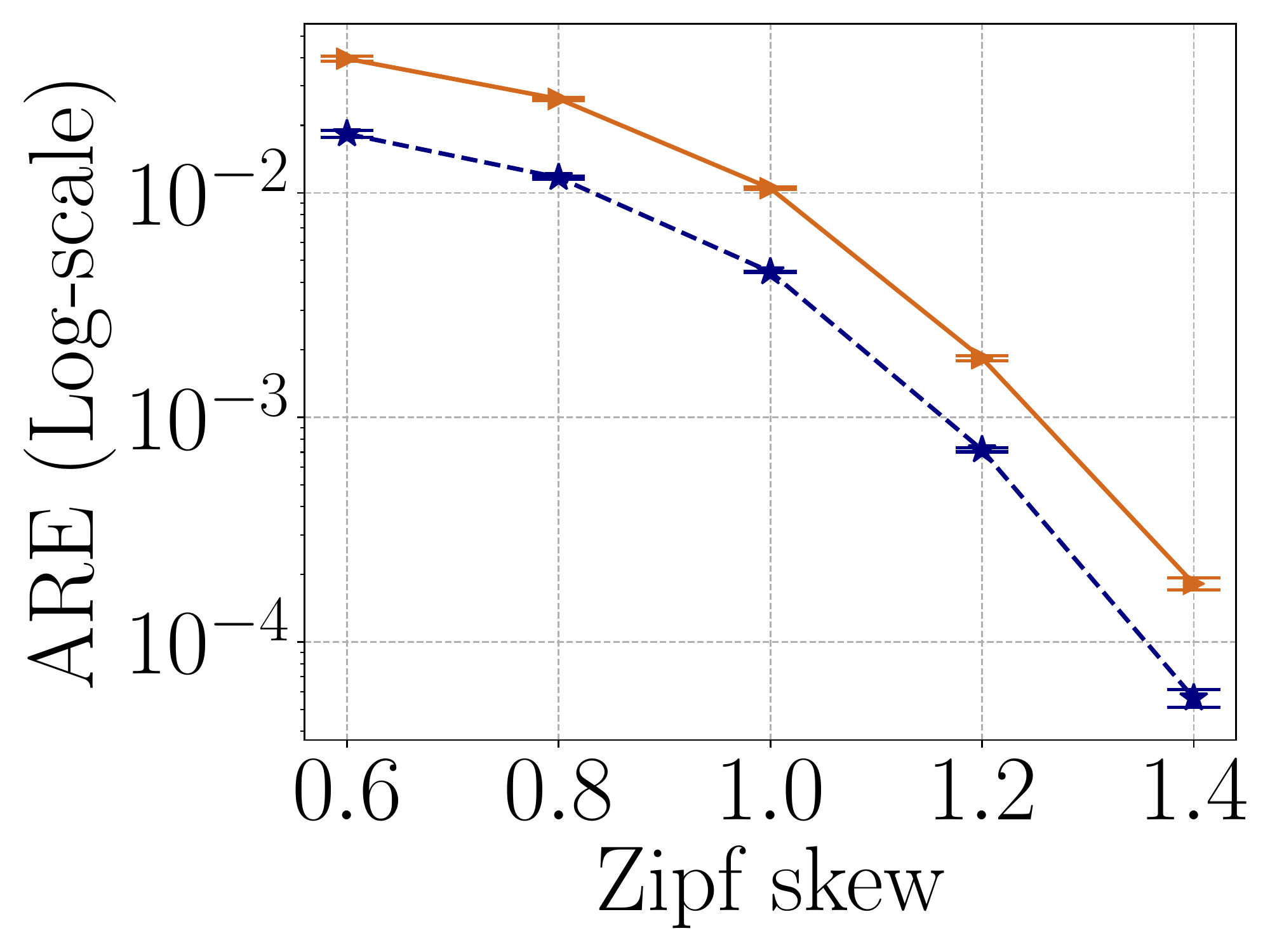}}    
\ifdefined\icdeSubmission
\\
    {\vspace*{-0mm}\includegraphics[width =.6\columnwidth]
    {figs/Salsa/cs/cs_legend_cropped}}
    \ifdefined\sigmodSubmission\vspace*{-2mm}\fi
    \caption{\small Accuracy of SALSA CMS on estimating the size of heavy hitters with 2MB.}
\else     
    \ifdefined\sigmodSubmission\vspace*{-2mm}\fi
    \caption{\small Accuracy of SALSA CMS on ((a)-(c)) counting distinct \mbox{elements and ((d)-(f)) estimating the size of heavy hitters with 2MB.}}
\fi    
    \label{fig:cmsCDandHH}
    \vspace*{-3mm}
\end{figure} 

\begin{figure}[t]
\ifdefined\sigmodSubmission\vspace*{-3mm}\fi
    \centering
    \hspace*{-2mm}
    
    \subfloat[Top-$k$, NY18 (640KB)]
    { \includegraphics[width =0.5\columnwidth]
    {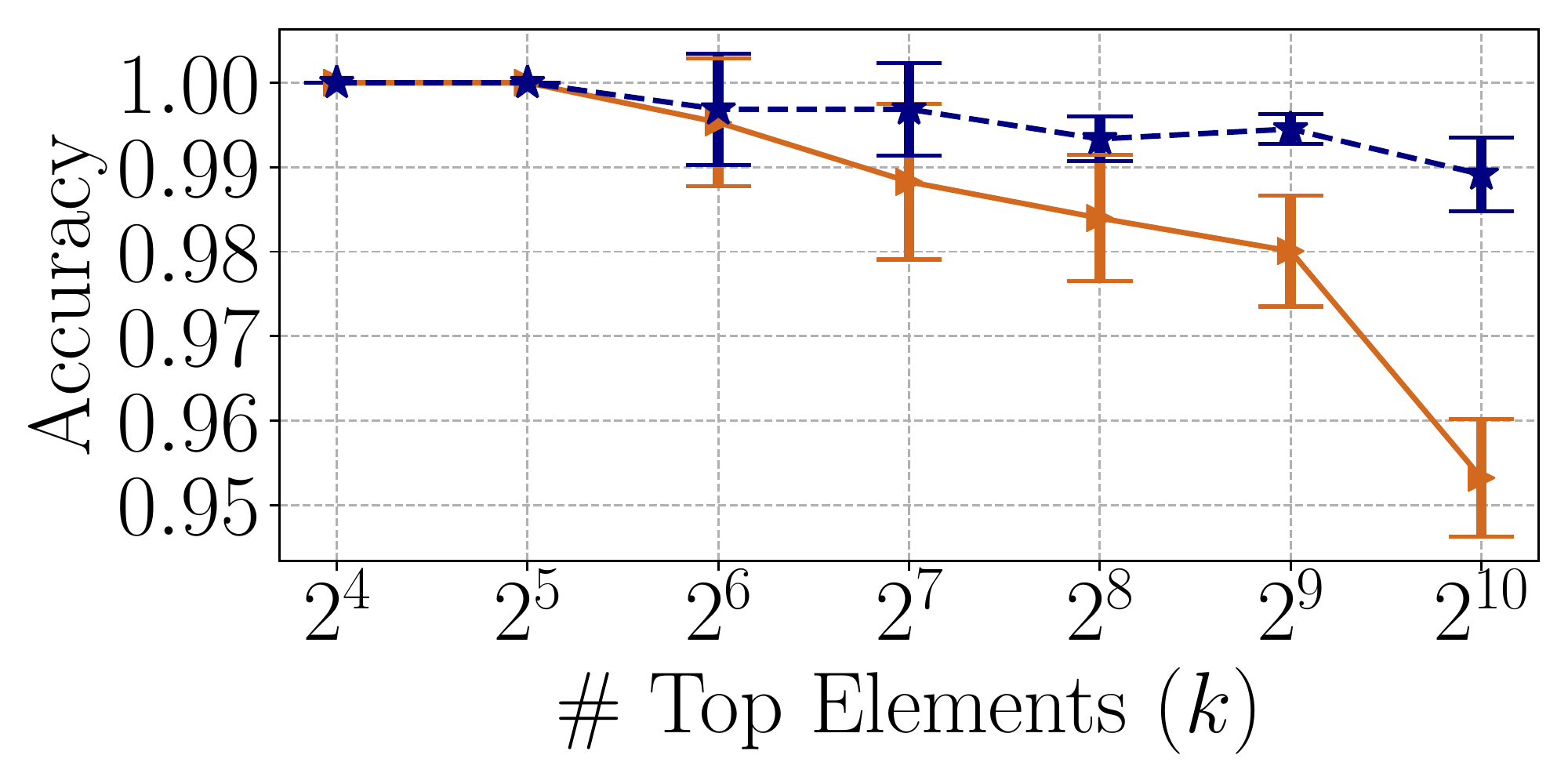}}
    \hspace*{-2mm}
    \subfloat[Top-1024, Zipf (640KB)
    ]
    { \includegraphics[width =0.5\columnwidth]
    {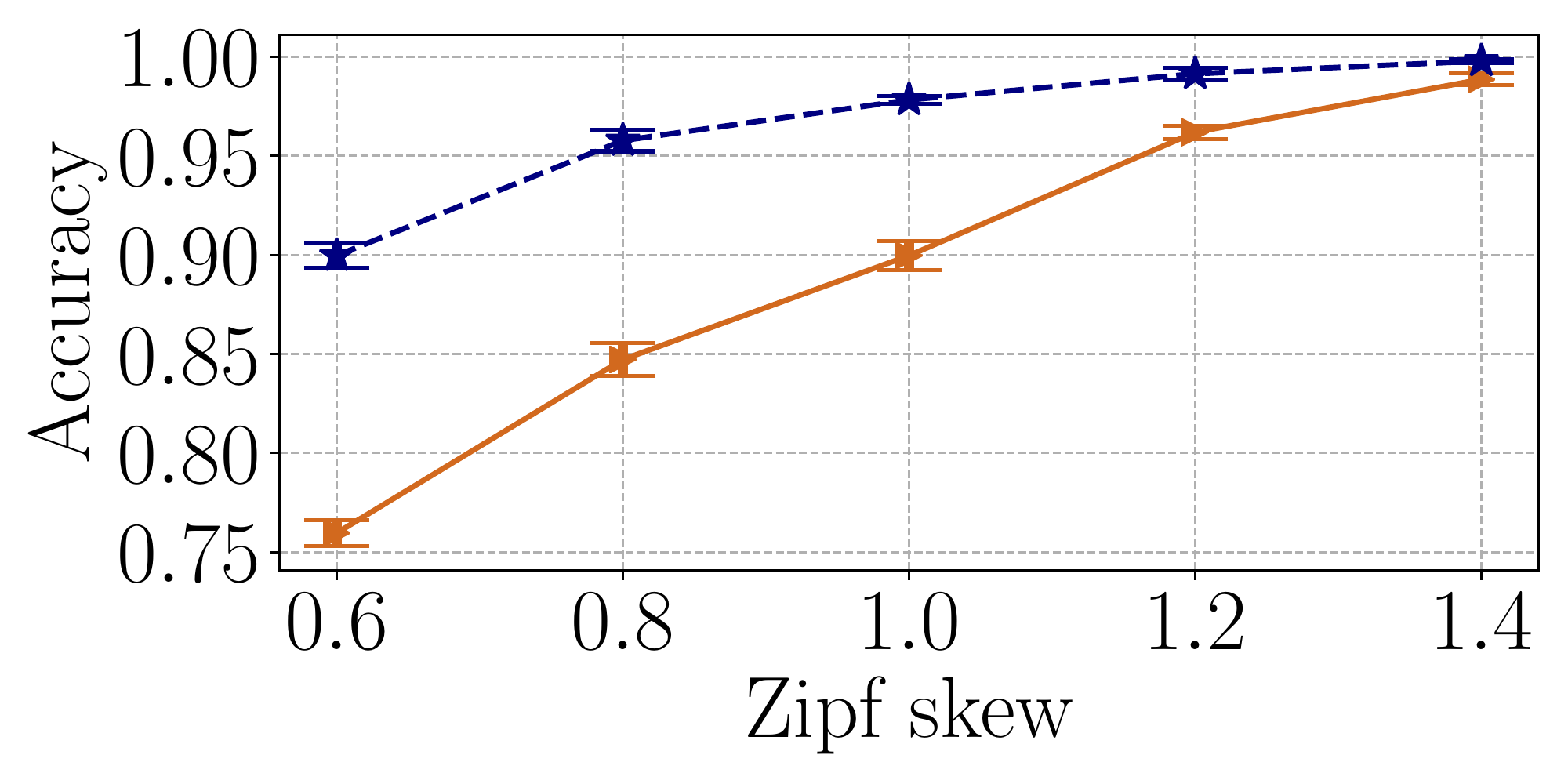}}   
    \hspace*{-1mm}\\
    {\includegraphics[width =0.6\columnwidth]
    {figs/Salsa/cs/cs_legend_cropped}}\\\ifdefined\sigmodSubmission\vspace*{-3mm}\fi
    \subfloat[Change Detection, NY18]
    { \includegraphics[width =0.5\columnwidth]
    {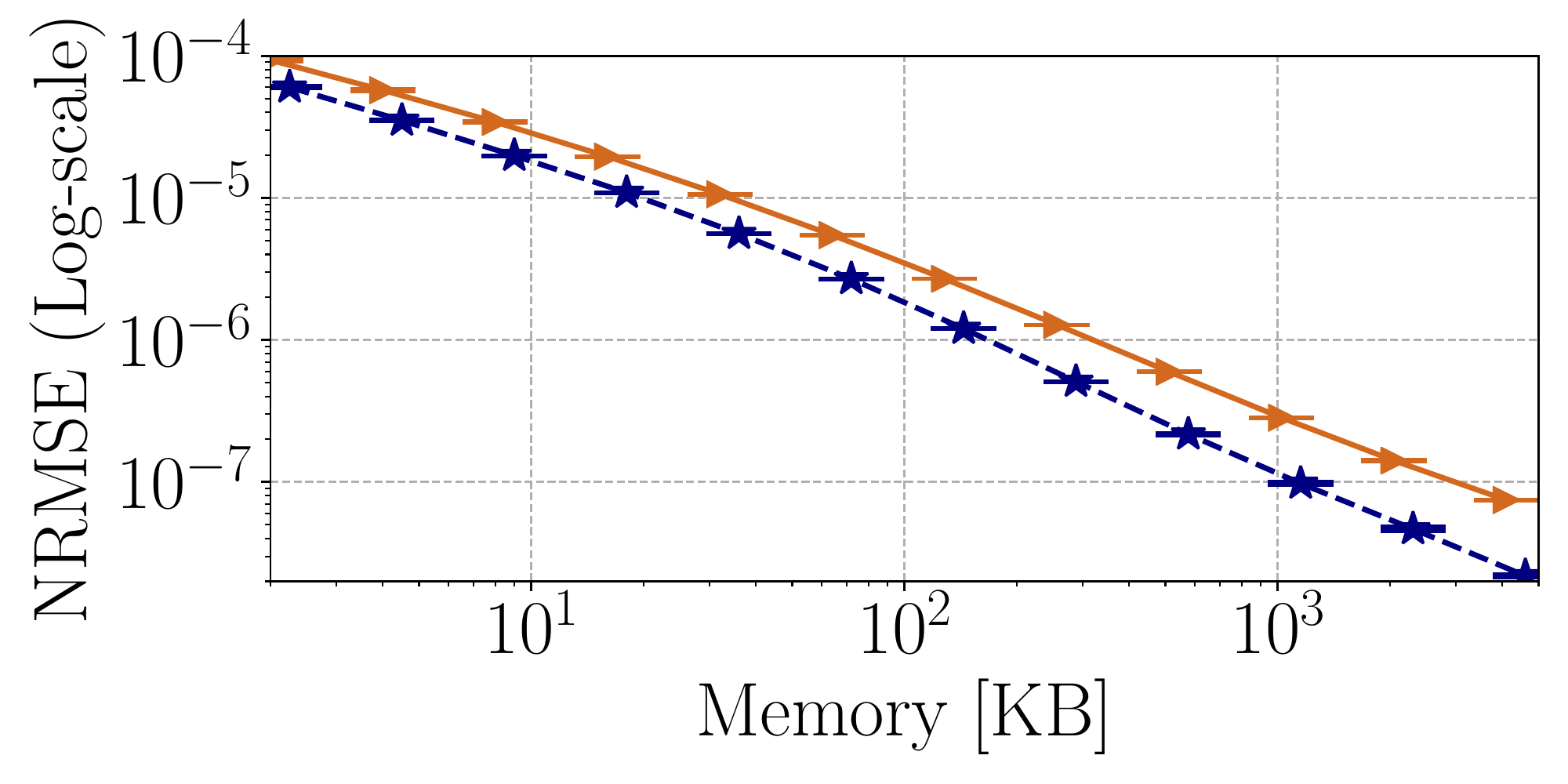}}
    \hspace*{-2mm}
    \subfloat[Change Detection, Zipf (2.5MB)]
    { \includegraphics[width =0.5\columnwidth]
    {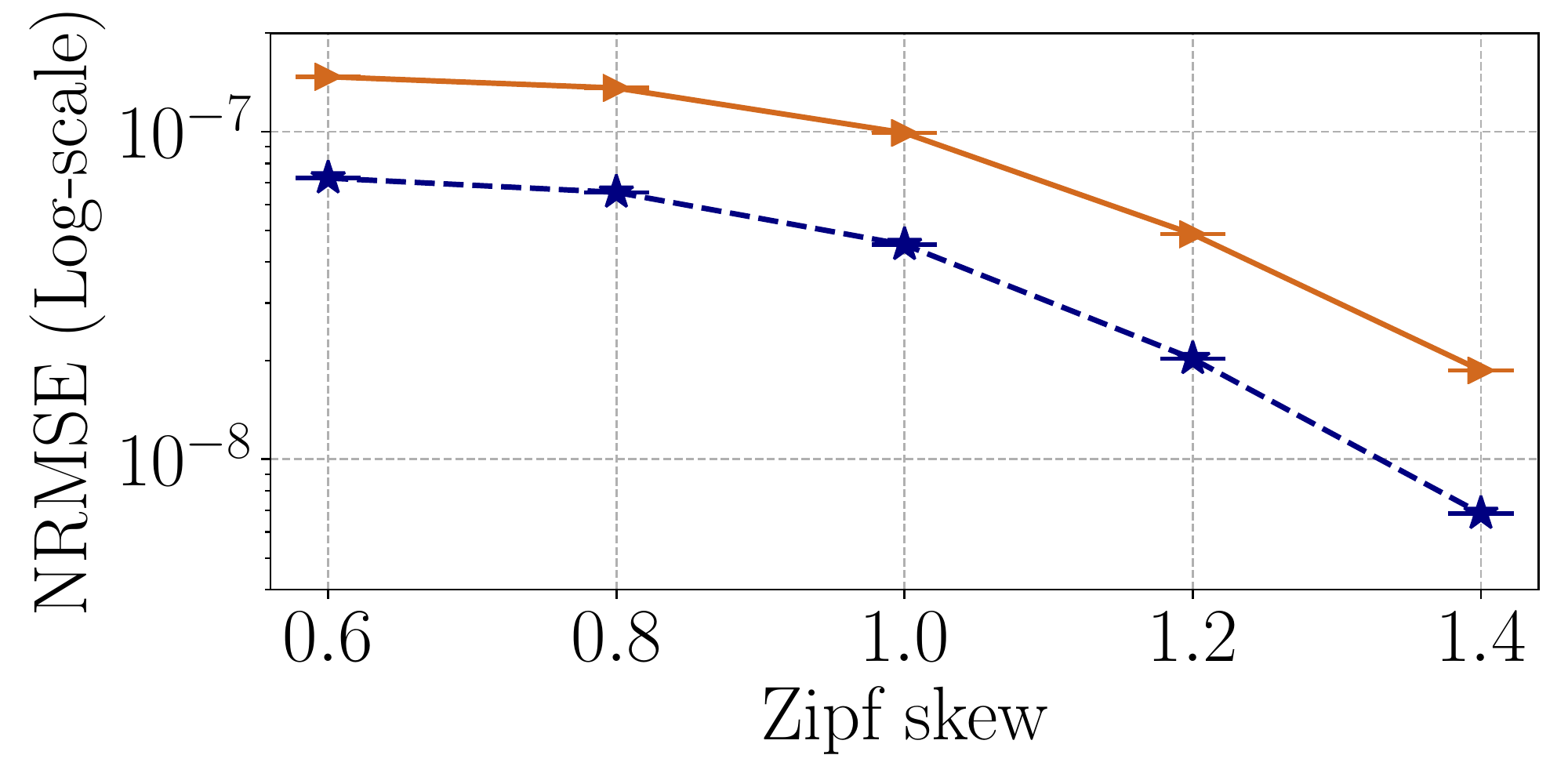}}   
    \hspace*{-1mm}
    \vspace*{-1mm}
    \caption{Accuracy of SALSA CS for the Top-$k$ ((a) and (b)) and Change Detection ((c) and (d)) tasks.
    }\label{fig:csApps}
    \vspace*{-1mm}
\end{figure} 

\textbf{Top-$k$ and Change Detection using Count Sketch:}\label{par:changeDetectionEval}
We also examine SALSA's effect on other \rev{uses of CS} such as Top-$k$ and Change Detection (which requires Turnstile support). For Top-$k$, our experiments indicate that using sufficient memory (e.g., 2MB), the Baseline CS detects the largest elements accurately for reasonable $k$ values. Therefore, we focus on a constrained memory setting (640KB). As shown in Figure~\ref{fig:csApps} (a) and (b), SALSA detects the top-$k$ accurately, especially for large values of $k$ \mbox{and low-skew workloads.}



\rev{We also evaluate SALSA CS on a Change Detection task.} Here, we split the input into two equal-length intervals $A$ and $B$, the algorithm needs to estimate the change in the frequency of an element $x$ between the first and second halves. To that end, we create sketches $s(A)$ and $s(B)$ and the difference sketch $s(A\setminus B)$ as described in Section~\ref{sec:merging}. Intuitively, the frequency difference can be small compared with the frequencies of each interval, and thus directly subtracting the \rev{estimates of $s(A)$ and $s(B)$ could yield a poor result compared to taking the difference sketch} (as the desired error is a fraction of the $L_2$ norm of the frequency difference).
In Figure~\ref{fig:csApps} (c) and (d), we compute the NRMSE error\footnote{Note that this is not on-arrival computation and the results are not comparable with those obtained in figures~\ref{fig:CMSandCUS} and~\ref{fig:cs}.} over the set of elements that appear in either $A$ or $B$. As shown, SALSA provides a statistically significant accuracy improvement \mbox{in all tested memory allocations and dataset skews.}

\begin{figure}[t]
    \centering
    \ifdefined\sigmodSubmission\vspace*{-2mm}\fi
    \hspace*{-2mm}
    \subfloat[Error, NY18]
    { 
    \includegraphics[width =0.5\columnwidth]
    {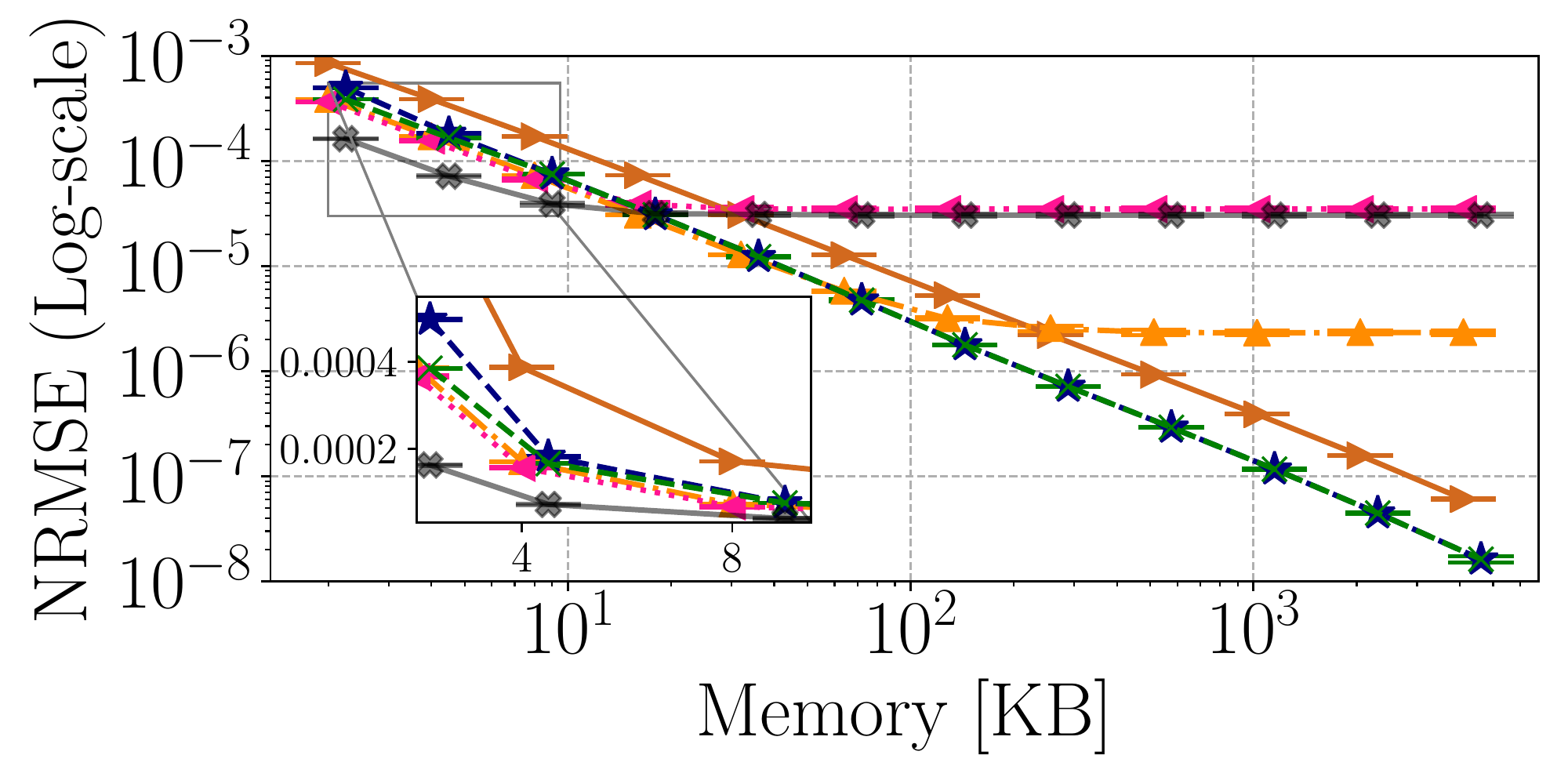}}
    \hspace*{-2mm}
    \subfloat[Error, CH16]
    { \includegraphics[width =0.5\columnwidth]
    {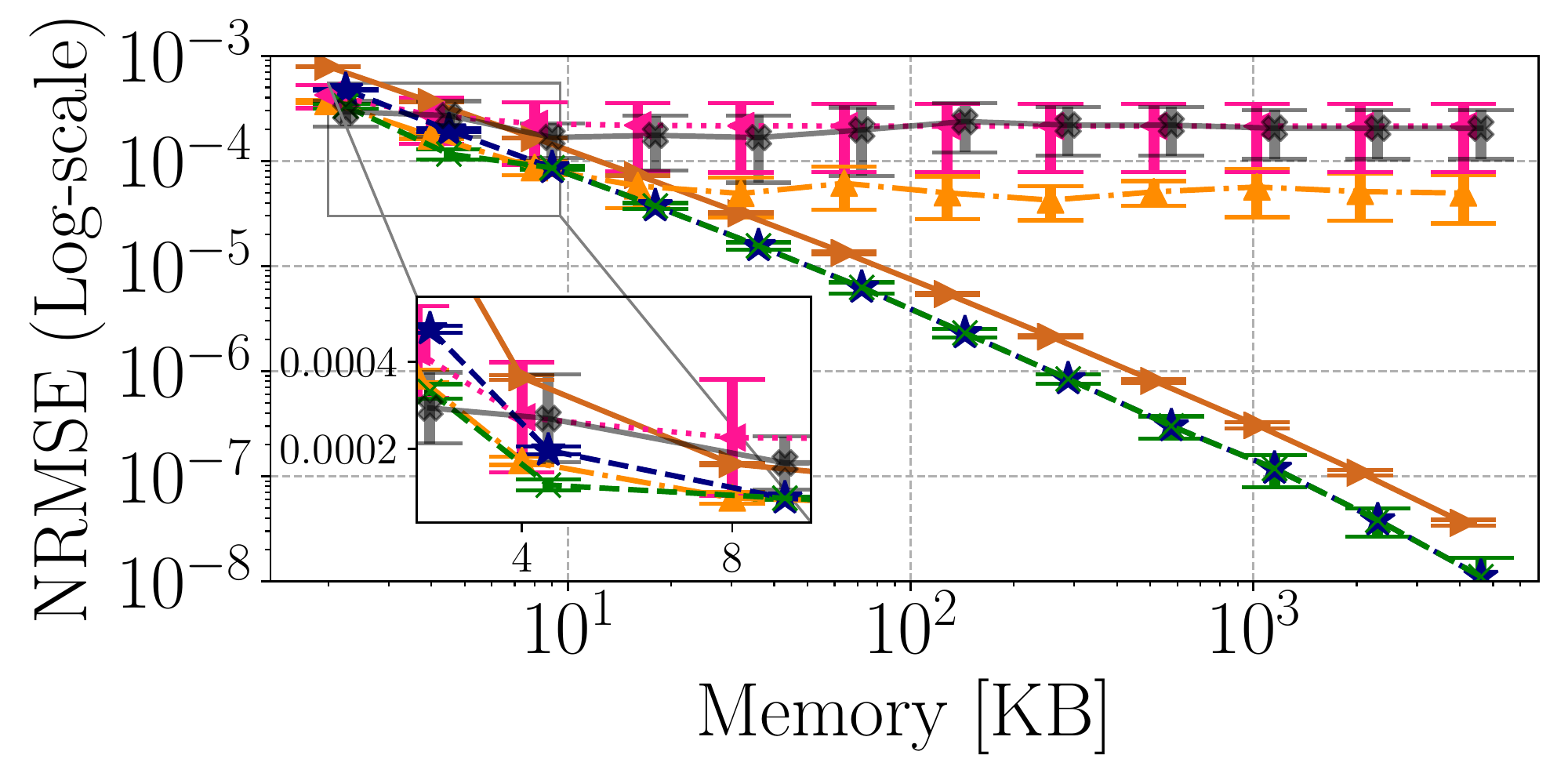}}   
    \hspace*{-1mm}\vspace*{-0mm}\\\ifdefined\fullversion\vspace{3mm}\fi
    {\includegraphics[width =1.04\columnwidth]
    {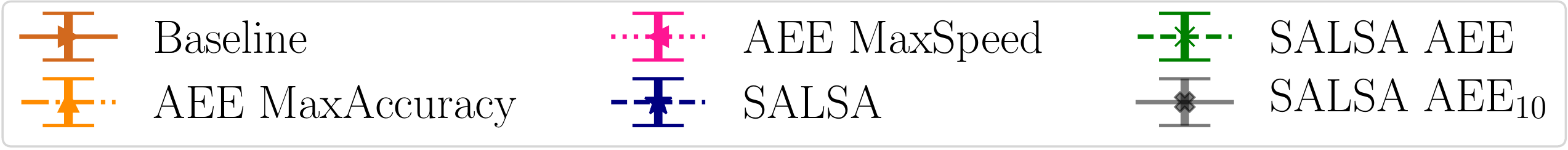}}\\\vspace*{-2.5mm}
    \subfloat[Speed, NY18]
    { \includegraphics[width =0.5\columnwidth]
    {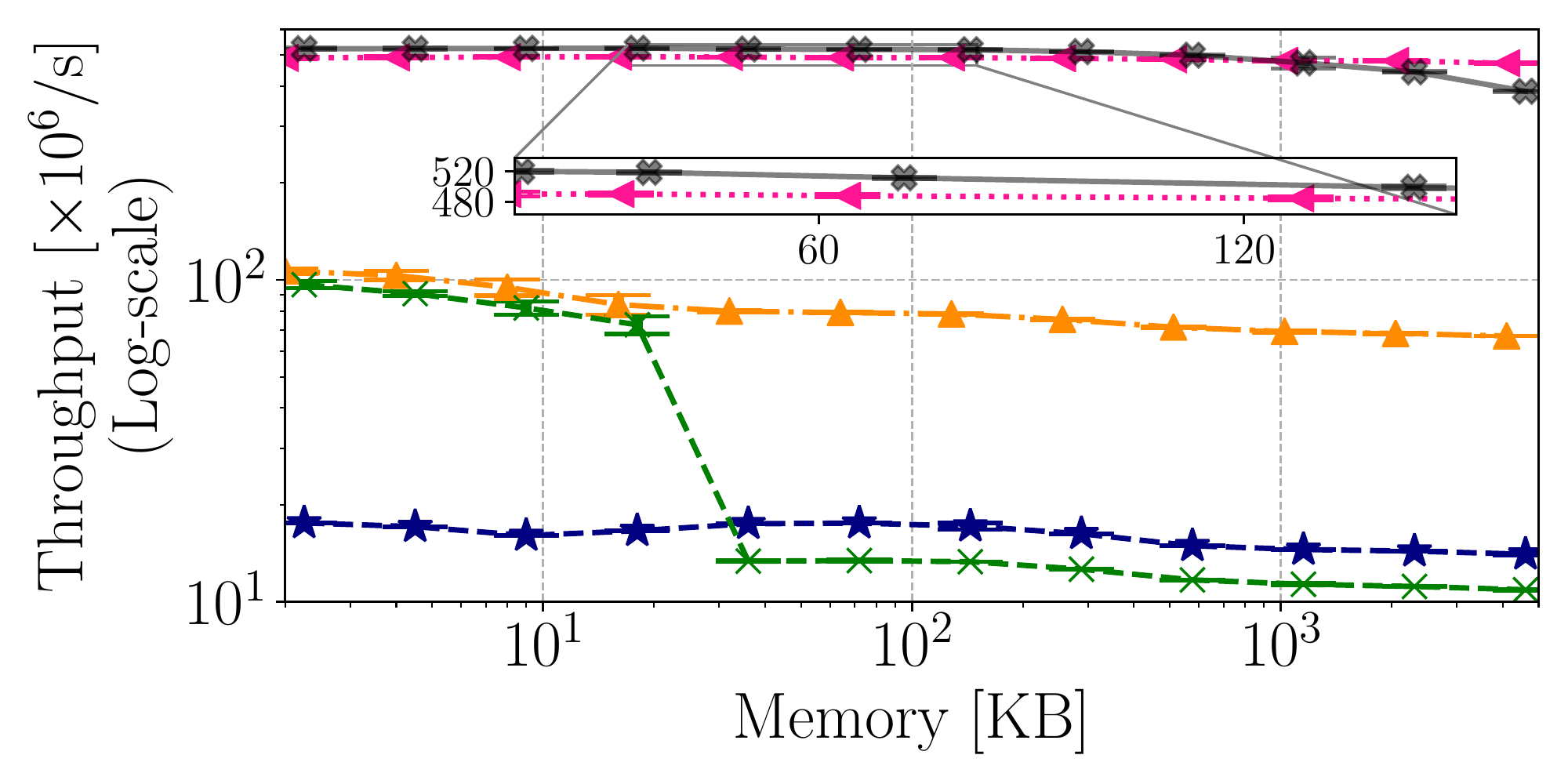}}
    \hspace*{-2mm}
    \subfloat[Speed, CH16]
    { \includegraphics[width =0.5\columnwidth]
    {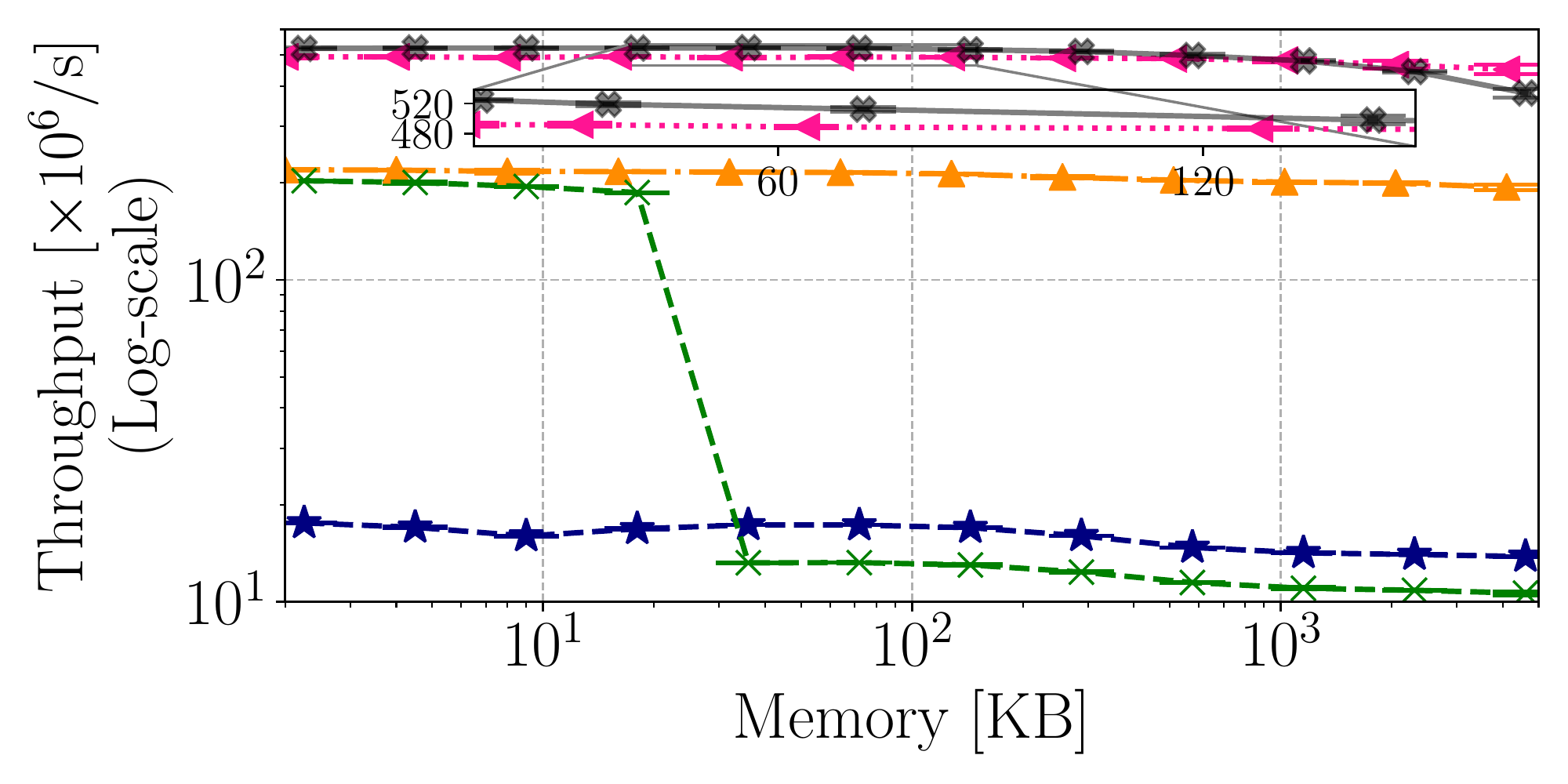}}   
    \hspace*{-1mm}
    \vspace*{-1mm}
    \caption{\small Comparison with Estimator Algorithms (CM sketch).
    }\label{fig:estimators}
    \vspace*{-3mm}
\end{figure}
\textbf{Estimators:}\label{sec:estimatorsEval}
\rev{We now experiment with integrating estimators, specifically AEE ~\cite{Compsketch}, into SALSA CMS.}
Under certain conditions, AEE increases the accuracy and processing speed of the sketch. 
Intuitively, the accuracy \rev{can be increased as the sketch can use more estimators than it would use counters}, and the speed is increased because some packets are ignored without updating the estimators. 
Our estimator-integrated solution SALSA AEE (from Section~\ref{sec:estimators}) optimizes the accuracy by interleaving estimator downsampling and estimator merges. Roughly speaking, SALSA AEE aims to be at least as accurate as the best of SALSA CMS and AEE MaxAccuracy by choosing the best method to cope with each overflow.
Similarly to AEE MaxSpeed, we create a speed-optimized variant called SALSA AEE$_{\mathfrak d}$ that downsamples on the first ${\mathfrak d}$ overflows (and selects whether to merge or downsamples afterward according to the logic presented in Section~\ref{sec:estimators}). This allows the algorithm to reach a sampling rate of $2^{-\mathfrak d}$ and thus obtain \mbox{speedups by reducing hash computations.}

The results, shown in Figure~\ref{fig:estimators}, illustrate that SALSA AEE is always as accurate as SALSA (when SALSA only merges counters), and more accurate for small amounts of memory. For large amounts of memory, SALSA AEE only merges, and therefore its accuracy is identical to SALSA while it is slightly slower due to the added logic.
Compared with AEE MaxAccuracy, SALSA AEE has comparable accuracy for small memory allocations (where it is mostly better to downsample than merge). Further, for large memory allocations (e.g., 100KB or higher), SALSA AEE is more accurate than AEE MaxAccuracy, as in such scenarios it is better to merge than to downsample.
Compared with AEE MaxSpeed,  SALSA AEE$_{10}$ provides improved accuracy (by up to 25\%), especially for small amounts of memory, while also being faster (by up to 7\%), \mbox{except when using large space (2MB+ in this experiment).}

\begin{figure}[t]
    \centering
    \ifdefined\sigmodSubmission\vspace*{-4mm}\fi
    \hspace*{-2mm}
    \subfloat[Error, NY18]
    { \includegraphics[width =0.5\columnwidth]
    {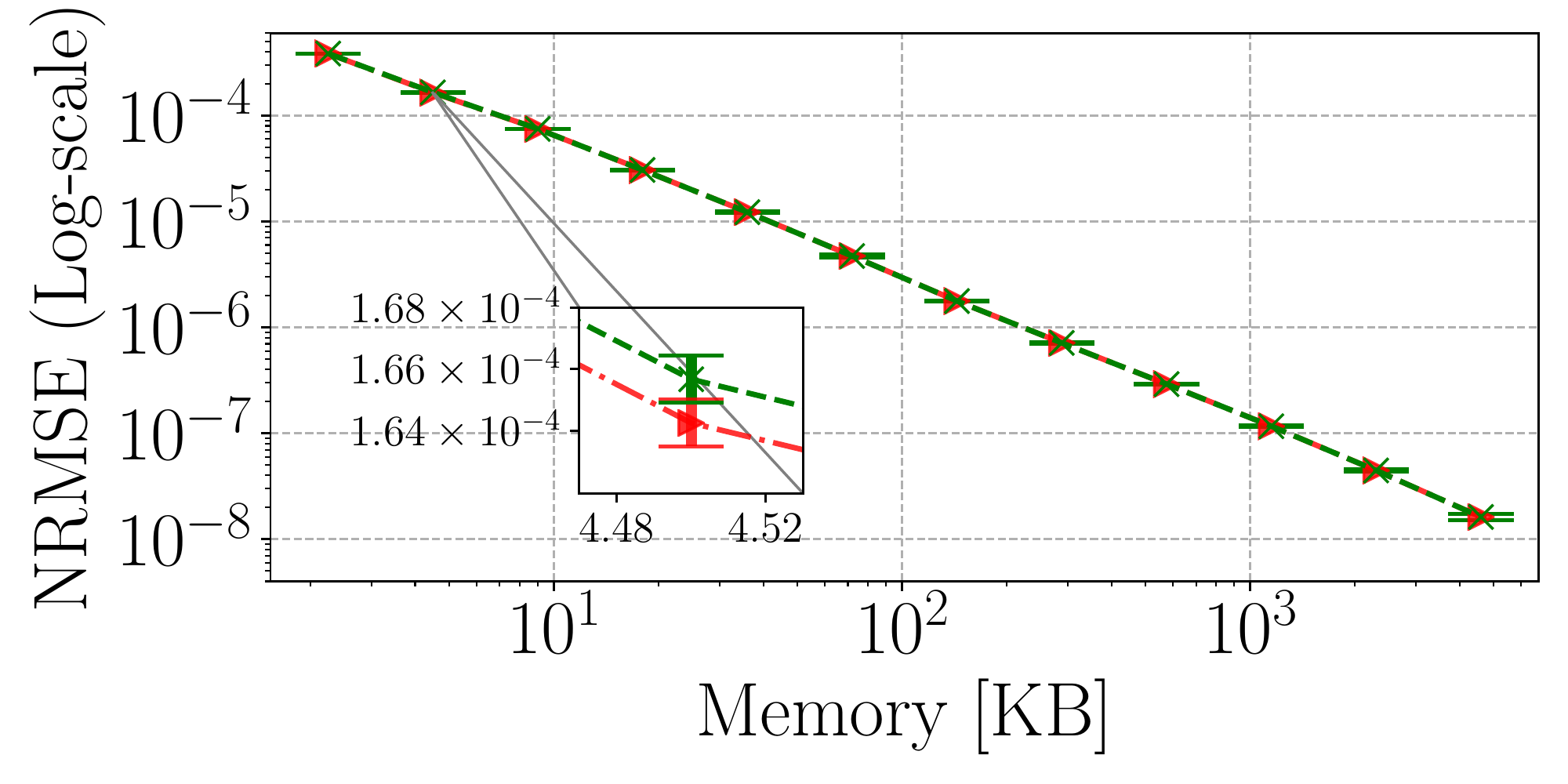}}
    \hspace*{-2mm}
    \subfloat[Error, CH16]
    { \includegraphics[width =0.5\columnwidth]
    {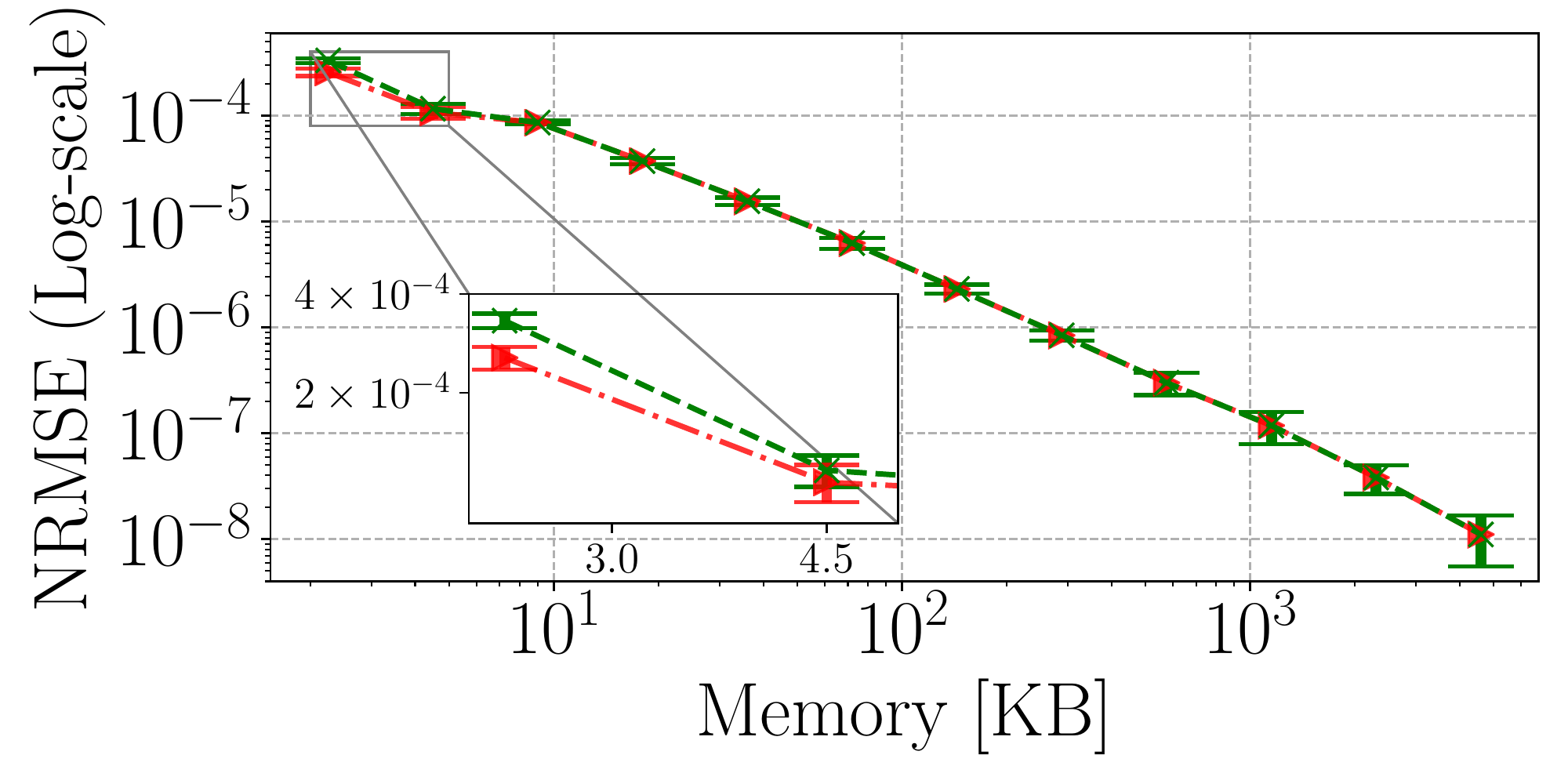}}   
    \hspace*{-1mm}\\
    {\includegraphics[width =0.7\columnwidth]
    {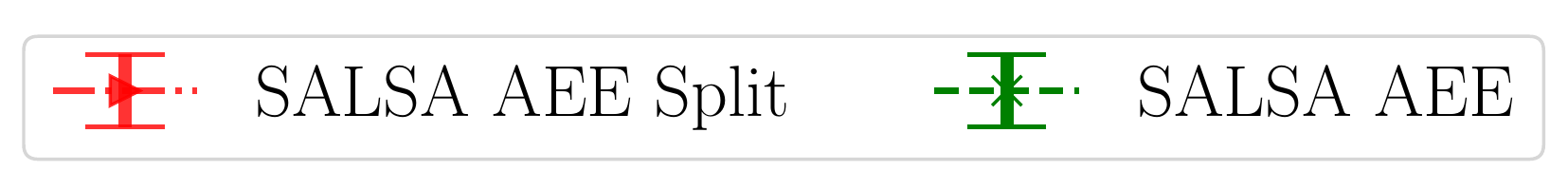}\ifdefined\sigmodSubmission\vspace*{-1mm}\fi}\\
    \caption{Affect of splitting counters in SALSA AEE (CM sketch).}\label{fig:splitting}
    \vspace*{-3mm}
\end{figure}

\textbf{Should We Split Counters?}
Finally, we check the accuracy gains obtainable by splitting counters. 
Intuitively, once a counter is downsampled, it may require fewer bits to represent. Therefore, if previously the counter had $s\cdot 2^{\ell}$ bits and the downsampled value is lower than $2^{s\cdot 2^{\ell-1}}-1$ (and $\ell\ge 1$), we can split the counter into two $s\cdot 2^{\ell-1}$-bit counters. As a result, there are now fewer collisions between elements, and SALSA AEE has better accuracy. However, as the results in Figure~\ref{fig:splitting} suggest, this effect is minor, and in most cases, the accuracy gains are insignificant.



\section{Conclusions}
We have presented SALSA, an efficient framework for dynamically re-sizing counters in sketching algorithms,  extending counters only when needed to represent larger numbers. 
SALSA starts from small counters and gradually adapts its memory layout to optimize the space-accuracy tradeoff. 
%
By evaluating across multiple real-world traces, sketches, and tasks, we have shown that, for a small overhead for its merging logic, SALSA reduces considerably the measurement error. In particular, our evaluation indicates that SALSA improves on state-of-the-art solutions such as Pyramid Sketch~\cite{PyramidSketch}, \mbox{ABC~\cite{gong2017abc}, Cold Filter~\cite{ColdFilter}, and the AEE estimators~\cite{Compsketch}.}

\ifdefined\fullversion
We believe that SALSA can replace and enhance existing sketches in more complex algorithms, such as $L_p$-samplers~\cite{LpSamplersSurvey} and database systems (e.g.,~\cite{Cheetah,SparkCMS}).
All of our code \mbox{is released as open source~\cite{SALSACode}.
}

\else
\CRdel{We believe that SALSA can replace and enhance existing sketches in more complex algorithms, such as $L_p$-samplers~\cite{LpSamplersSurvey} and database systems (e.g.,~\cite{Cheetah,SparkCMS}).
All of our code \mbox{is released as open source~\cite{SALSACode}.}}
Due to lack of space, some of our results are deferred to the full version~\cite{fullVersion}. 
Specifically, we explain and evaluate fine-grained merges, where counters can grow slower than doubling in size. 
As shown, the gain is marginal.
We also discuss  the count distinct functionality of the CM sketch, discuss how SALSA \rev{empirically improves} it, and evaluate the accuracy gain. 
The full version also includes an improved encoding for SALSA that requires under 0.6 bits per counter, with a lower bound that shows that this is near-optimal. Nonetheless, this encoding is more complex and may reduce the measurement throughput. Finally, we \mbox{provide the missing proofs and additional explanations.} All of our code \mbox{is released as open source~\cite{SALSACode}.}
\vspace*{-0mm}

\fi
\newpage
	\bibliographystyle{IEEEtran}
	\bibliography{references}
%

\ifdefined\sigmodSubmission
\end{document}\endinput
\else

\ifdefined\sigmodSubmission
\cleardoublepage
\begin{appendices}
\section{Cover Letter}
\rev{
We thank the reviewers for their helpful comments and suggestions. 
The main changes we did in this revision are:
\begin{itemize}
    \item We added a new experiment that justifies our choice of running the baseline sketches using 32-bits. The results appear in Figure~\ref{fig:smallCounters} show that while SALSA works well when starting with $s=8$ bit counters, using the baseline with $8$ or $16$ bit counters result in large errors, especially for the heavy hitter elements.
    \item We removed all mentioning of our fine-grained merging algorithm Tango. Instead, we reference the full version and summarize the findings.
    \item We removed the description of how SALSA improves the count distinct functionality from the main text and reference the full version instead. Our view is that this is not a major contribution and can be deferred to make space to address the comments.
\end{itemize}
Below we respond to the individual comments.}

\newcommand{\response}[1]{\par\noindent\textcolor{blue}{\textbf{Response:} #1}\\}

\subsection{META-REVIEWER}
1. A short summary on the recommendation, and suggestions for revision (for decision on revision):

Dear authors,
Thank you for submitting your work to ICDE.

The reviewers identified several positive aspects at the paper, but also expressed concerns over technical details that are not adequately explained (see R3.W1, along with the revision points of Reviewer 3). This is a crucial point to be addressed.
There are also several other revision points, clearly marked by the reviewers, that involve presentation issues and the clarity of the text.
Finally, Revision points R3.W4 involves the discussion of theoretical analysis on how different merging options affect the accuracy guarantees, while R3.W3 involves the benefits of SALSA compared to prior resizing techniques.

\response{Thanks for the summary. We have considered all points raised by the reviewers and respond to the individual comments below. }

\subsection{Reviewer \#1}
W1: There are some presentation issues.
\response{Based on the reviewer comments, and with some additional editing, we improved the presentation.  Many are in blue; some simple comma changes or typo fixes are not marked.}

W2: The technique is quite complex and there is some penalty in throughputs.
\response{Indeed, adding the merging logic adds complexity, which is expected. When designing SALSA we aimed for lightweight encoding that still allows high-speed processing. }

\textbf{8. Detailed evaluation, labeled D1, D2, D3 etc.}

D1: I find the ideas to be quite clever. The counter merge is dynamic, and it can be retrieved at the corresponding merge level.

D2: The paper does a good survey of existing sketch techniques, and shows that the proposed technique can be added on them.

D3: The theoretical analysis is solid and shows that it improves upon the original sketches, although the exact improvement is not quantified.
\response{In an adversarial input (e.g., when all frequencies are at least $2^{16}$ and all occurrences of an element are contiguous), SALSA would produce the exact estimates as the underlying sketch. If one assumes a specific input distribution, our analysis shows the benefit. For example, in the line before~\eqref{eq:CSeq2}, we show that SALSA CS reduces the variance by {\small $\parentheses{\mathbb E\brackets{X_B^2|\neg O_{AB}}\Pr[\neg O_{AB}]+\mathbb E\brackets{X_{CD}^2|\neg O_{ABCD}}\Pr[\neg O_{ABCD}]}$.} Plugging in the distribution parameters, one can compute the exact benefit.}

D4: There are some presentation issues. For example, the "Tango" version is never introduced in the paper, but it is used many times, including the theorems. It is in the long/full version. Also, many places of the paper are hard to read -- they would benefit from some plain explanations and intuitions. For example, one place \mbox{is the indices of the merges and encoding bits.}
\response{Thank you for noticing that. We removed all mentioning of Tango from the body of the paper and summarized the findings with a reference to the full version. We also add the merging bits to Figure~\ref{fig:cache-based} to illustrate how the encoding changes during insertions.}

D5: There is some overhead associated with the added logic. As shown in the experiments, the throughputs decrease compared to the original versions.
\response{Please see response to W2.}

\textbf{Required changes for a revision, if applicable. Labeled R1, R2, R3, etc.}
Please address W1 and D4.

\subsection{Reviewer \#2}
7. List three or more weak points, labeled W1, W2, W3, etc.
W1: The key ideas are similar to those used in prior techniques like Pyramid Sketch and ABC.
\response{While there are some similarities, SALSA has unique advantages such as unbounded counter sizes (they can merge into a single counter if needed) which is essential for capturing the sizes of heavy hitters accurately (see Figure~\ref{fig:scatter}).}

W2: The paper is written sloppily, and is difficult to read and verify because of missing pieces and terse proofs with significant mathematical notation.
\response{Based on the reviewer comments, and with some additional editing, we improved the presentation and writing.}

W3: Following up on W3, in several cases, the guarantees are not clearly stated. More below.
\response{Please see detailed answers below.}

\textbf{8. Detailed evaluation, labeled D1, D2, D3 etc.}

Overall, the dynamic re-sizing technique proposed in the paper, although somewhat incremental, is a clear and significant new contribution. Generally speaking, the claims My main concerns with the paper have to do with the writing which is quite sloppy in places, and some organizational issues.

D1: Section III: In the beginning, Strict Turnstile is defined to be non-negative, but one paragraph below: the authors state that ".. CMS operates in Strict Turnstile model where all frequencies are positive..". Which is it? Also, it is clear why negative frequencies matter. But why do 0s matter? For researchers well-versed in this topic, this may be obvious.
\response{It should be non-negative, we corrected the text.}

D2: Section V: Linear Counting is explained very briefly, and without prior familiarity with that work, it may be difficult to follow this part.
\response{To meet the page limit and make room to address the comments, we decided to defer linear counting altogether to the full version and summarize the results here. In the full version, Linear Counting is explained with more detail.}

D3: It would be really useful to have a table that clearly summarizes how the guarantees for different sketches are affected by using SALSA. For instance, for Linear Counting, it appears that the authors were not able to prove the accuracy guarantees. However, the paragraph beginning suggests that SALSA improves the performance.
\response{Due to lack of space, we could not add such a table and have moved the count distinct extension to the full~version. We emphasize that for Linear Counting the gains are empirical gains, and not gains in the asymptotic guarantees.}

D4: Similarly for UnivMon, it is unclear if using SALSA is strictly better than not using SALSA. Also, is "L2-sketches" synonymous with CS?
\response{In general, Univmon requires that the underlying L2 sketches provide an $(\epsilon,\delta)$ guarantee. Our SALSA CS analysis shows that the guarantee for SALSA is at least as good as standard CS (and may be better, depending on the workload) and thus we have the same accuracy guarantee for Univmon when using SALSA. We clarified this in the paper.
CS is the most popular L2 sketch, although for cash register streams there are more accurate sketches such as BPTree (Braverman et al., PODS 2017). Nonetheless, to the best of our knowledge, all existing Univmon variants \mbox{use CS as the building block.}}

D5: There are also quite a few spelling mistakes and other types of typos in the paper. I recommend that the authors do a careful reading to fix those.
\response{We have proofread the paper to fix such issues.}

\textbf{Required changes for a revision, if applicable. Labeled R1, R2, R3, etc.}
D1-D5 above.

\subsection{Reviewer \#3}

7. List three or more weak points, labeled W1, W2, W3, etc.
W1. The theoretical analysis in Section 5 fails to show the advantage of SALSA -- the assumption used in the theorems, that is, the baseline sketching uses the same number of counters as the SALSA sketching, is not fair.
\response{Please see our response for the required changes.}

I felt that the merge process will affect things like the randomness of the hashing functions. Thus, I did not see how to formally evaluate the theoretical performance of SALSA.
\response{Notice that the merge process does not change the hash function. Instead, if an element is hashed into location $i$, the counter may be some contiguous block that includes $i$ (e.g., $\langle i, i+1\rangle$). Let us consider an example. Say that $h(x)=13$ for some element $x$ in $s=8$-bits SALSA. In the underlying sketch, $\langle 12,13,14,15\rangle$ serve as a single $32$-bits counter. In SALSA, counter $13$ starts as an independent counter, and may merge with $12$ (and then further with $\langle 14,15\rangle$) if needed to represent a larger counting range. The essence of our proofs is that for CMS and CUS, the estimate of SALSA is at least as accurate as the underlying counter (i.e., its value is always at most the value of the underlying counter). Similarly, we proved that the variance of SALSA CS is always lower than that of the underlying CS counter. Our analysis factors in the possibility of merging and how it would affect the estimates (e.g., see Lemma~\ref{lem:CS_var}).
}

W2. The bit tricks used in SALSA will increase the time complexity of the sketching algorithms, which will diminish the minor improvement in accuracy in practice.
\response{Sketching algorithms are mainly measured in three axes -- speed, accuracy, and memory. In some applications, speed is critical and one may not opt for a slower and more accurate solution. However, in other cases, the accuracy to memory tradeoff is the most important parameter, given that the speed is sufficient. For example, when running on a network switch one needs to meet the line rate, and pick the most accurate solution that satisfies that constraint.}

W3. SALSA does not show much advantage over the other two resizing techniques (Pyramid \mbox{Sketch and ABC) in Section 7.}
\response{The main advantage of SALSA is the unbounded counting range, which allows it to accurately capture the size of large flows. This is evident from Figure~\ref{fig:scatter}. Notice that it shows in log scale, which may make the differences look small, despite SALSA being several orders of magnitude more accurate on large \mbox{(e.g., frequency of at least $10^4$) elements.} We now note the log scale in the axes' legends.}

W4. The authors provided two merging options: sum and max. But there is no theoretical analysis on how different merging options will affect the accuracy.
\response{It is important to notice that the merging strategies do not apply to all sketches. In turnstile streams, one must use sum-merging, and also when using SALSA CS (even on cash register streams). Similarly, SALSA CUS must use max-merging. The only setting in which both options can provide accuracy guarantees is when using SALSA CMS on cash register streams. In this case, the estimates of max-merging are always at least as accurate as sum-merging and this can easily be \mbox{proven by induction on the stream length.}}

\textbf{8. Detailed evaluation, labeled D1, D2, D3 etc.}

D1. $M_4$ in Figure 1 should be 0 according to the description in Section 4 “The SALSA encoding”
\response{Thanks for noticing that! indeed, there was a typo in the formula. We have edited the text and added examples. }

D2. Title of figure 6a should be “Speed, NY18”
\response{Thanks. We corrected it.}

D3. ICDE paper should be single blinded, not double blinded
\response{Thanks. The PC chair contacted us and we revised the manuscript of the original submission a few days after the deadline. It is fixed now.}

\textbf{Required changes for a revision, if applicable. Labeled R1, R2, R3, etc.}
Please address W1, W3 and W4.

For W1, The standard CMS uses hash function h(x) which maps x to [1, W]. That is, the standard CMS has W buckets each of size s bits. While the hash function $\hat{h}(x) (= h(x)/2^\ell)$ introduced by the authors maps x to $[1, W/2^\ell]$. Then the authors used $\hat{h}(x)$ instead of h(x) in the estimation phase even for CMS. This seems to force the standard CMS to use only $W/2^\ell$ buckets each of size $2^{\ell}\cdot s$ bits. 
\response{This is correct. The idea is that if practitioners use, e.g., 32-bit counters for CMS, and we use $s=8$-bits SALSA CMS, then SALSA have four times as many (small) counters which gives both variants the about same space.}
I think this is not quite fair since it is likely that the performance of CMS on W buckets each of size s is better than that of CMS on $W/2^\ell$ buckets each of size $2^{\ell\cdot s}$ bits. Please clarify this.
\response{The above setting means that the space used by SALSA is just its encoding overhead plus the same space as the underlying sketch. For example, given a 1MB space allocation, $d=4$-rows CMS would have $2^{16}$ counters in each row, each of size $32$ bits and $s=8$ SALSA would have $2^{18}$ counters. Notice that our evaluation (e.g., Figure~\ref{fig:Pyramid}) does that the encoding overhead into account.
For completeness, we added Figure~\ref{fig:smallCounters} that shows that running CMS with a larger number of smaller (e.g., $8$-bit) counters may be less accurate than a standard (i.e., with $32$-bit counters) sketch. We used the default configuration (98M packets and $s=8$-bit counters for SALSA).
Additionally, we observed that the ARE and AAE metrics, in which ABC and Pyramid are competitive, may not be useful to evaluate algorithms on large workloads or for capturing heavy hitters, which are often considered the most important elements. The reason is that in heavy tailed workloads, such as CAIDA or skew$\le 1$ Zipf, there are many small elements that carry most of the weight in the averaging. This is especially evident by the experiment whose results are shown in Figure~\ref{fig:zeroAlg} above: we measured the error on all heavy hitters -- elements larger than a $\phi$ fraction of the input. The leftmost point ($\phi=10^{-8}$) corresponds to the ARE metric (i.e., all flows will be considered). As shown, in this case the best algorithm is $\overline 0$ which corresponds to returning $0$ estimates for all element sizes. That is, according to this metric, one can reduce the error by not running measurements at all. A similar result was observed for AAE (see Figure~\ref{fig:zeroAlgAAE}, where the $\overline 0$ algorithm beats the baseline when considering all flows (leftmost point).}
\begin{figure}[t]
    \centering
    \ifdefined\sigmodSubmission\vspace*{-1mm}\fi
    \hspace*{-2mm}
    { \includegraphics[width =1.0\columnwidth]
    {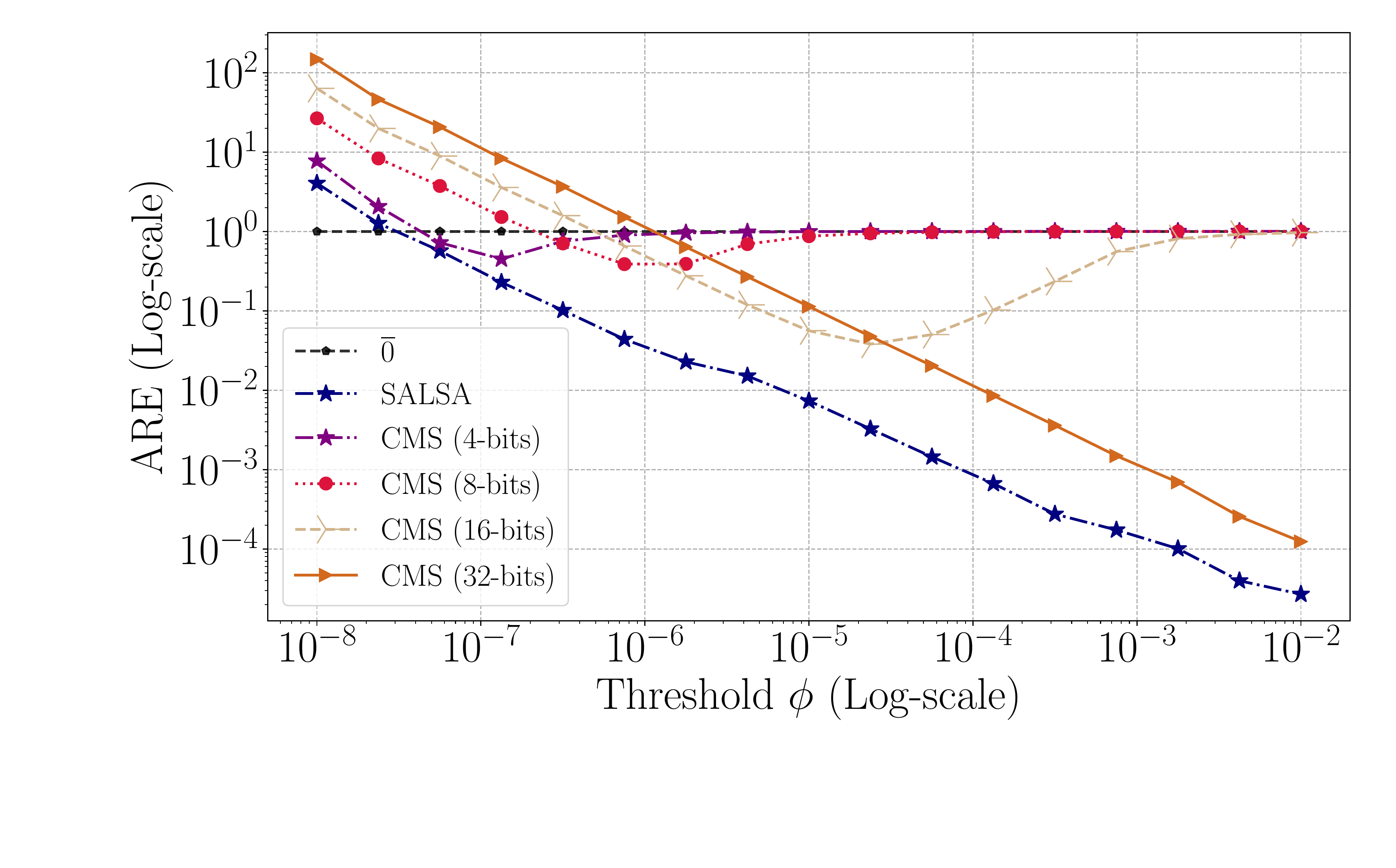}}
    \vspace*{-15mm}
    \caption{\rev{Running CMS with small number of bits and the ``$\overline 0$'' algorithm for estimating heavy hitter sizes (2MB) using average relative error metric. The leftmost point corresponds to the standard ARE metric (used in figures~\ref{fig:areNY18} and \ref{fig:areCH16}), which considers all flows.}}\label{fig:zeroAlg}
    \ifdefined\sigmodSubmission\vspace*{-2mm}\fi
\end{figure}
\begin{figure}[t]
    \centering
    \ifdefined\sigmodSubmission\vspace*{-3mm}\fi
    \hspace*{-2mm}
    { \includegraphics[width =1.0\columnwidth]
    {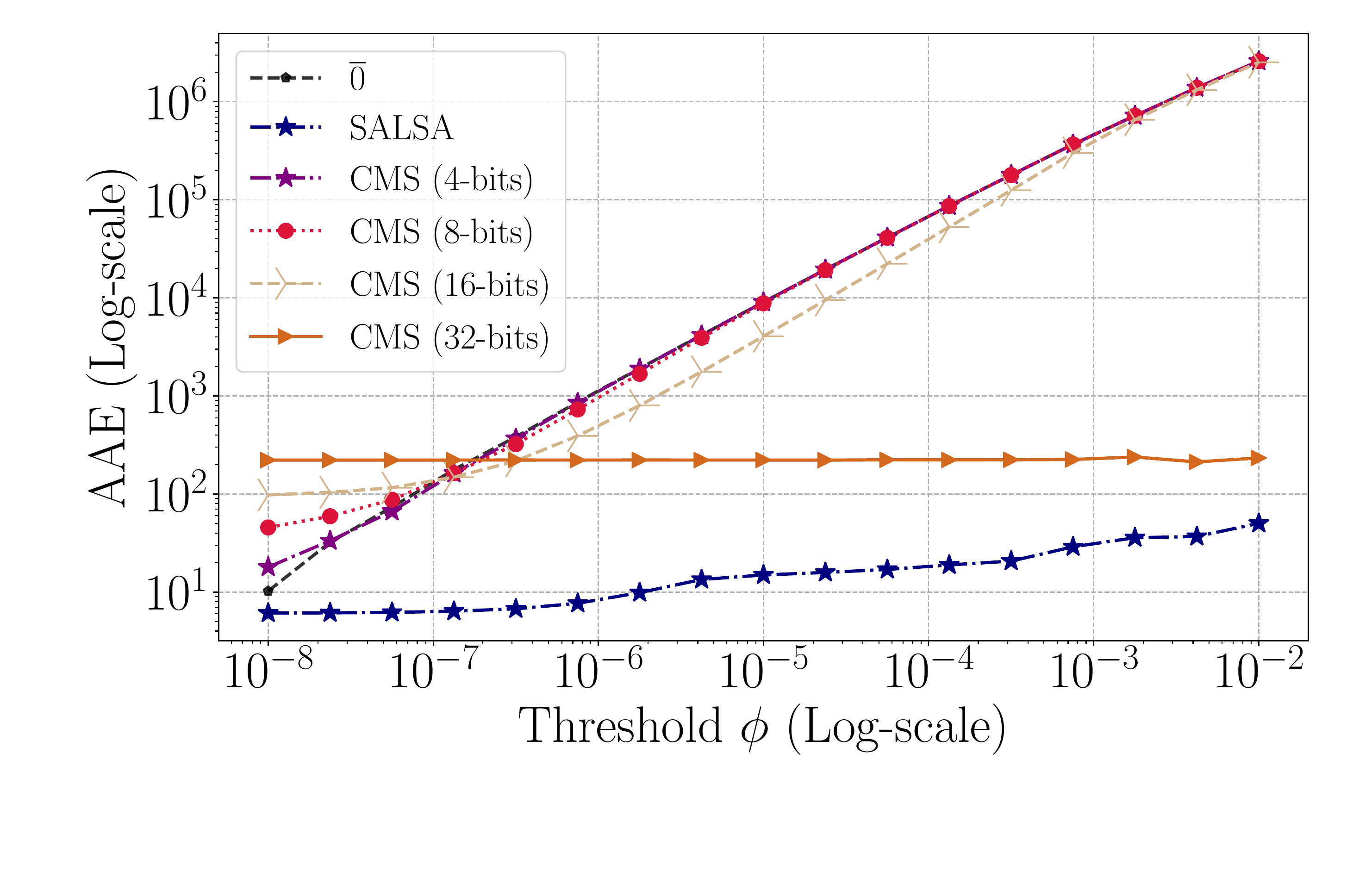}}
    \vspace*{-15mm}
    \caption{\rev{Running CMS with small number of bits and the ``$\overline 0$'' algorithm for estimating heavy hitter sizes (2MB) using average absolute error metric. The leftmost point corresponds to the standard AAE metric (used in figures~\ref{fig:aaeNY18} and \ref{fig:aaeCH16}), which considers all flows.}}\label{fig:zeroAlgAAE}
    \ifdefined\sigmodSubmission\vspace*{-2mm}\fi
\end{figure}
Regarding the randomness, there may be some dependency when switching the hash function from "$h(x) \mod 2^{\ell-1}$" to "$h(x) \mod 2^l$" in the merge process. This should be discussed in the analysis.
\response{This is correct; when merging a $s\cdot 2^{\ell-1}$ bits counter noise from additional elements affect future queries. Notice that the hash functions are determined beforehand and do not change due to merges.
This is taken into account in our analysis, e.g., in Lemma~\ref{lem:CS_var} where we consider the random events that correspond to potential merges. The argument is simpler for sketches such as CMS and CUS, where the error of SALSA is deterministically bounded by that of the underlying sketches.}
\end{appendices}
\else
\begin{appendices}

\ifdefined\icdeSubmission

\section{Fine-grained Counter Merges:}\label{sec:tango}
The SALSA encoding we presented in Section~\ref{sec:SALSA_encoding} doubles the counter size upon an overflow, which may be wasteful when the overflowing counter could benefit from a smaller increase in size. 
Thus, we suggest the more refined Tango algorithms to explore the benefits of a more fine-grained merging strategy. 
In Tango, counters can be merged into sizes that are arbitrary multiples of $s$. For example, if we start from $s=8$ bit counters, Tango can merge a $16$ bit counter into a $24$ bit counter while SALSA would merge from $16$ bits to $32$. The encoding of Tango is simple: each counter $j$ is associated with a merge bit $m_j$ that denotes whether the counter is merged with its right-neighbor.
To compute the counter size and offset in Tango of $j=h(x)$, we scan the number of set bits to the left and right of $m_j$ until we hit a zero at both sides. 
For example, if $j = 5$ and $m_4=m_5=m_6=m_7=1$ while $m_3=m_8=0$ then the counter consists of $s\cdot 5$ bits, spanning $\angles{4,5,6,7,8}$.
In general, one can use complex logic to decide whether to merge with the left or right neighbor once a counter overflows. However, we design Tango to evaluate the potential benefits of fine-grained merging and therefore enforce a merging logic that mimics SALSA. 
Specifically, Tango always tries to be aligned to the smallest possible power of two. For example, if counter $9$ overflows, it merges with $8$ to be aligned with the $2$-block $\angles{8,9}$. If it overflows again, it merges with $10$ (creating a $s\cdot 3$ bits sized counter) and then with $11$. If more bits are needed it will merge with $12$ then with $13,14$ and $15$ (being aligned to the $8$-block $\angles{8,\ldots,15}$). Then it merges with $7, 6,\ldots$, etc. Notice that at every point in time, the Tango counters are contained in the corresponding SALSA counters. In particular, this allows us to produce an estimate that is at least as accurate as SALSA.
We note that Tango poses a tradeoff -- while it allows more accurate sketches (e.g., as a counter may not exceed $2^{24}-1$ and thus it could be wasteful to merge it into $32$ bits), it also has slower decoding time and cannot use the efficient encoding of the previous section.

\textbf{Is Fine-grained Merging Worth It?}
To understand the accuracy improvement attainable by fine-grained merging (as opposed to SALSA's approach of doubling the counter size at each overflow), we compare SALSA with Tango. 
As the results in Figure~\ref{fig:tango} indicate, Tango also offers the best accuracy-space tradeoff when starting with $s=8$ bits (Tango$_{16}$ is equivalent to SALSA$_{16}$ and is omitted). However, while it is slightly more accurate, the gains seem marginal considering the computationally expensive operations of determining the counter's size and offset. Further, Tango has an overhead of $1$ bit per counter and does not obviously allow an efficient encoding \mbox{like SALSA does (Section~\ref{sec:smartEncoding}).}

\begin{figure}[]
    \centering
    \subfloat[Error, NY18]
    { \includegraphics[width =0.49\columnwidth]
    {figs/Salsa/SalsaVsTango/cms_salsa_vs_tango_rmse_NY18}}
    \subfloat[Error, Zipf]
    { \includegraphics[width =0.49\columnwidth]
    {figs/Salsa/SalsaVsTango/cms_sum_vs_max_zipf_2048_KB}}   
    \\
    {\includegraphics[width =1.00\columnwidth]
    {figs/Salsa/SalsaVsTango/cms_salsa_vs_tango_legend}\ifdefined\sigmodSubmission\vspace*{-1mm}\fi}
    \caption{\small Accuracy of SALSA CMS (with $s=8$ bits) vs. Tango CMS. In (b), Tango{\Large$_s$} is allocated with $2(1+1/s)$MB of space while SALSA uses $2(1+1/8)=2.25$MB.
    }\label{fig:tango}
    \ifdefined\sigmodSubmission\vspace*{-3mm}\fi
\end{figure}

\fi

\countdistinctapp{\section{Counting Distinct}\label{app:Counting Distinct}
\textbf{{Counting Distinct Items:}}
Estimating the number of distinct items in a data stream (defined as $F_0\equiv\norm f_0$) is a fundamental primitive for applications such as discovering denial of service attacks~\cite{IntrusionDetection2}.
While UnivMon can natively support such a function, we can also estimate it from CMS and CUS.
By observing the fraction of zero-valued counters in a sketch's row $p$, we can estimate the number of distinct elements (as additional occurrences of the same element would not change this quantity). 
Specifically, a common approach (e.g.,~\cite{elasticsketch}) is use the Linear Counting algorithm~\cite{LinearCounting} that estimates the distinct count as $\frac{\log p}{\log(1-1/w)}\approx -w\log p$. Such an estimate has a standard error of $\frac{\sqrt {w\cdot (e^{\frac{F_0}{w}}-\frac{F_0}{w}-1)}}{F_0}$~\cite{LinearCounting} that \mbox{improves when $w$ grows.}}

\countdistinctapp{\textbf{Count Distinct using Count Min:}
We evaluate the performance of SALSA CMS on additional applications such as counting distinct elements and estimating the size of the heavy hitters.
As shown in the count, distinct results (Figures~\ref{fig:cmsCDandHHapp}), neither SALSA CMS nor the Baseline are effective with low memory footprints. This is because no counters remain zero-valued, and the Linear Counting estimator fails. 
Nevertheless, SALSA CMS can work with less memory (4.5MB for NY18 and 1.125MB for CH16) and reduce the estimation error when the Baseline does produce estimates. Intuitively, Linear Counting with $w$ buckets can count up to $w\ln w$ elements, so the number of elements in the datasets (6.5M for NY18 and 2.5M for CH16) \mbox{imposes a lower bound on the amount of space needed.}}

\ifdefined\icdeSubmission
\begin{figure}[t]
    \centering
    \hspace*{-2mm}
    \subfloat[NY18 Dataset]
    { \includegraphics[width =0.16\textwidth]
    {figs/Salsa/countdistinct/cms_count_distinctNY18_tall}}
    \hspace*{-2mm}
    \subfloat[CH16 Dataset]
    { \includegraphics[width =0.16\textwidth]
    {figs/Salsa/countdistinct/cms_count_distinctCH16_tall}}
    \subfloat[Zipf (8MB)]
    { \includegraphics[width =0.16\textwidth]
    {figs/Salsa/countdistinct/8192.0_KB_tall}}   \\
    {\vspace*{-0mm}\includegraphics[width =.6\columnwidth]
    {figs/Salsa/cs/cs_legend}}
   
    \caption{\small Accuracy of SALSA CMS on counting distinct elements.}
    \label{fig:cmsCDandHHapp}
    \ifdefined\sigmodSubmission\vspace*{-3mm}\fi
\end{figure} 
\fi 

\ifdefined\icdeSubmission
\section{Should We Split Counters?}\label{sec:Split Counters}

\begin{figure}[t]
    \centering
    \hspace*{-2mm}
    \subfloat[Error, NY18]
    { \includegraphics[width =0.5\columnwidth]
    {figs/Salsa/counterSplitting/counter_splitting_rmse_NY18}}
    \hspace*{-2mm}
    \subfloat[Error, CH16]
    { \includegraphics[width =0.5\columnwidth]
    {figs/Salsa/counterSplitting/counter_splitting_rmse_CH16}}   
    \hspace*{-1mm}\\
    {\includegraphics[width =0.7\columnwidth]
    {figs/Salsa/counterSplitting/counter_splitting_legend}\ifdefined\sigmodSubmission\vspace*{-1mm}\fi}\\
    \caption{Affect of splitting counters in SALSA AEE (CM sketch).}\label{fig:splitting}
    \ifdefined\sigmodSubmission\vspace*{-2mm}\fi
\end{figure}

Finally, we check the accuracy gains obtainable by splitting counters. 
Intuitively, once a counter is downsampled, it may require fewer bits to represent. Therefore, if previously the counter had $s\cdot 2^{\ell}$ bits and the downsampled value is lower than $2^{s\cdot 2^{\ell-1}}-1$ (and $\ell\ge 1$), we can split the counter into two $s\cdot 2^{\ell-1}$-bit counters. As a result, there are now fewer collisions between elements, and SALSA AEE has better accuracy. However, as the results in Figure~\ref{fig:splitting} suggest, this effect is minor, and in most cases, the \mbox{accuracy gains are insignificant.}
\fi

\section{Improved Encoding}\label{app:improvedEncoding}
Here, we lower bound the space required to encode SALSA, and then suggest a near-optimal \mbox{encoding that has $O(1)$ time operations.}

\smallskip
\noindent\textbf{Lower bound.}
%
For $n\in \mathbb N$, we define by $a_n$ the number of possible layouts for a consecutive block of $2^n \cdot s$  bits (i.e., a block that started from $2^n$ counters of size $s$-bits each).
For example, we have $a_2=5$ since the possible combinations for $4$ consecutive counters are $\angles{\set{a},\set{b},\set{c},\set{d}},
\angles{\set{a,b},\set{c},\set{d}},
\angles{\set{a},\set{b},\set{c,d}},\allowbreak
\angles{\set{a,b},\set{c,d}},
\angles{\set{a,b,c,d}}.
$
Observe that either all $2^n$ counters are merged together, or it is enough to specify the layouts of the first $2^{n-1}$ counters and last $2^{n-1}$ counters. Therefore, we get the recursive relation $a_n=a_{n-1}^2+1$ and $a_0=1$.
Given that we start from $w$ counters, this implies that the number of possible layouts is \mbox{lower bounded by $a_{\floor{\log_2 w}}.$}

\begin{lemma}\label{lem:an-bound}
$\forall n\in\mathbb N: \floor{1.5^{2^n}} \le a_n < 1.51^{2^n}$.
\end{lemma}
\begin{proof}
The inequality is easy to verify for $n\le 3$. 
By induction, one can then easily prove \mbox{that $\forall n\ge 3:1.5^{2^n} + 1 < a_n < 1.51^{2^n} - 1.$ \hspace*{-4mm}\qedhere}
\end{proof}

\ifdefined\sigmodSubmission\vspace*{-1mm}\fi
This suggests a lower bound of $\ceil{\log_2 a_{\floor{\log_2 w}}} \le \ceil{2^{\floor{\log_2 w}}\log_2 1.5}$ bits. Specifically, for $w\ge 2^4$ values which are powers of $2$ any encoding must use at least $\log_2 1.5\approx 0.585$ bits per counter.

\smallskip
\noindent\textbf{Near-optimal encoding.}
Denote by $\mathfrak m$ the maximal number of merges a single counter may go through during the execution. 
We note that $\mathfrak m=O(1)$ as we assumed the final counters must fit into $O(1)$ machine words. For example, if we start from $s=2$ bits counters and we assume that counters grow up to $128$ bits, then $\mathfrak m=6$. Let $\overline m = \max\set{5,\mathfrak m}$. Intuitively, we encode every $2^{\overline m}$ counters separately, thereby allowing $O(\overline m) = O(1)$ time size computation. According to Lemma~\ref{lem:an-bound}, we have that 
$z_{\overline m}\triangleq\ceil{\log_2 a_{\overline m}}$ bits are enough to encode the counter-set layout; for example, $z_5=19$ bits are enough to encode the layout of $2^5=32$ counters. Specifically, for $n=\overline m,\overline m-1,\ldots,0$ we write a $z_{n}$-bits value $X_n$ such that $X_n=a_{n}-1$ means that all $2^{n}$ counters are encoded, and otherwise $X_{n-1}\triangleq \floor{X_n/a_{n-1}}$ encodes the layout of the first $2^{n-1}$ counters while $X_{n-1}'\triangleq X_n \mod a_{n-1}$ encodes the layout of the rest (i.e., they are the base-$(a_{n-1})$ digits of $X_n$). As a result, we use $z_{\overline m}$ bits for each consecutive set of $2^{\overline m}$ counters, giving an overhead of $z_{\overline m}/2^{\overline m}$. For $n\ge 5$, we have that $z_n/2^{n}< 0.594$, i.e., we require at most $0.594$ overhead bits per counter.
Computing the size of a counter then becomes simple: we start from $n=\overline m$ and every time check if the value is $a_n-1$, or recurse into either the left or right half depending on the counter index. An example of this process is illustrated in Figure~\ref{fig:complex-encoding}.
While this approach reduces the overhead, the decoding process involves division and modulo \mbox{operations that may reduce the speed.}
%
\begin{figure}[]\ifdefined\sigmodSubmission\vspace*{-2mm}\fi
\centering{
\includegraphics[width = 1\columnwidth]
		{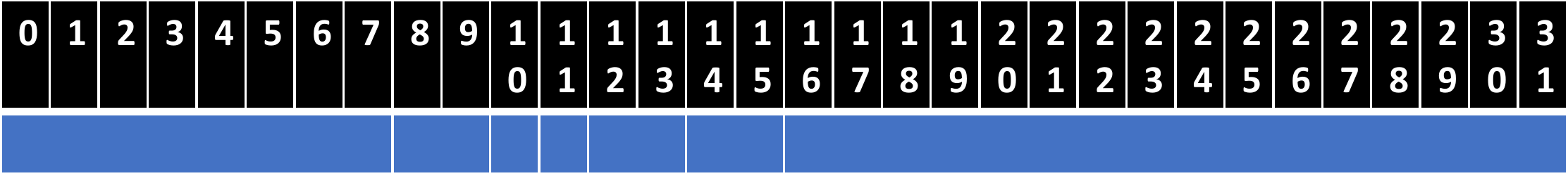}}
	\ifdefined\submissionVersion
	\ifdefined\sigmodSubmission\vskip -0.1cm\fi
	\fi
	\ifdefined\sigmodSubmission\vspace*{-5mm}\fi
    \caption{\label{fig:complex-encoding} \small An encoding example for $\overline m=5$. This layout is encoded by $X_5=449527< a_5$. To compute the size of counter $9$, we first check that $X_5<a_5-1$ and thus not all counters are merged. Next, we have that $X_4=\floor{X_5/a_4}=663<a_4-1$ and thus counters 0-15 are not all merged. Then, we check that $X_3'=X_4\mod a_3=13<a_3-1$ which means that counters 8-15 are not all merged. We continue with $X_2=\floor{X_3'/a_2}=2<a_2-1$ (thus 8-11 are not merged) and finally get $X_1=\floor{X_2/a_1}=1=a_1-1$ which \mbox{implies that $9$ is merged with $8$.}
    }
\end{figure}

\ifdefined\fullversion
\section{Understanding the Differences -- Extended~Results}\label{app:stupidMetrics}
For completeness, we repeat the experiment in Figure~\ref{fig:smallCounters} using even smaller ($4$-bit) counters.
The results are shown in figures~\ref{fig:zeroAlg} and~\ref{fig:zeroAlgAAE}. We measured the error on all heavy hitters -- elements larger than a $\phi$ fraction of the input. The leftmost point ($\phi=10^{-8}$) of Figure~\ref{fig:zeroAlg} corresponds to the ARE metric (i.e., all flows will be considered). As shown, in this case, the best algorithm is $\overline 0$, which corresponds to returning $0$ estimates for all element sizes. That is, according to this metric, one can reduce the error by not running measurements at all. A similar result was observed for AAE, in Figure~\ref{fig:zeroAlgAAE}), where the $\overline 0$ algorithm outperforms the baseline when considering all flows (leftmost point).
\begin{figure}[t]
    \centering
    \ifdefined\sigmodSubmission\vspace*{-1mm}\fi
    \hspace*{-2mm}
    { \includegraphics[width =1.0\columnwidth]
    {cover_letter_figs/dumbass_rev3_1}}
    \vspace*{-15mm}
    \caption{\rev{Running CMS with small number of bits and the ``$\overline 0$'' algorithm for estimating heavy hitter sizes (2MB) using average relative error metric. The leftmost point corresponds to the standard ARE metric (used in figures~\ref{fig:areNY18} and \ref{fig:areCH16}), which considers all flows.}}\label{fig:zeroAlg}
    \ifdefined\sigmodSubmission\vspace*{-2mm}\fi
\end{figure}
\begin{figure}[t]
    \centering
    \ifdefined\sigmodSubmission\vspace*{-3mm}\fi
    \hspace*{-2mm}
    { \includegraphics[width =1.0\columnwidth]
    {cover_letter_figs/dumbass_rev3_2}}
    \vspace*{-15mm}
    \caption{\rev{Running CMS with small number of bits and the ``$\overline 0$'' algorithm for estimating heavy hitter sizes (2MB) using average absolute error metric. The leftmost point corresponds to the standard AAE metric (used in figures~\ref{fig:aaeNY18} and \ref{fig:aaeCH16}), which considers all flows.}}\label{fig:zeroAlgAAE}
    \ifdefined\sigmodSubmission\vspace*{-2mm}\fi
\end{figure}

\fi

{ 

}\end{appendices}\end{document}\endinput
\end{appendices}
\end{document}\endinput


\begin{figure}[t]
    \centering
    \subfloat[Error, Random Order]
    { \includegraphics[width =0.5\columnwidth]
    {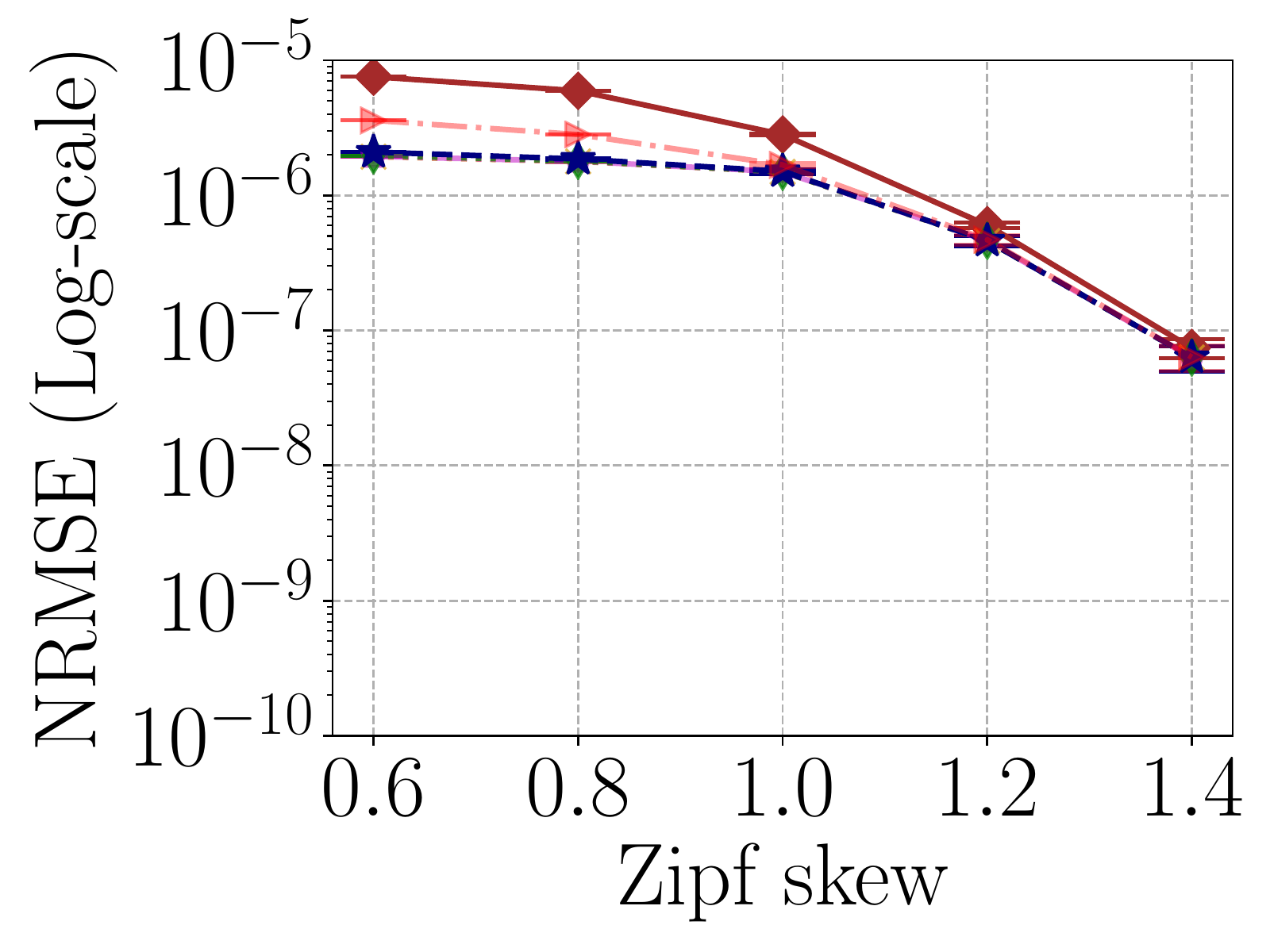}}
    \subfloat[Error, Sequential Order]
    {\includegraphics[width =0.5\columnwidth]
    {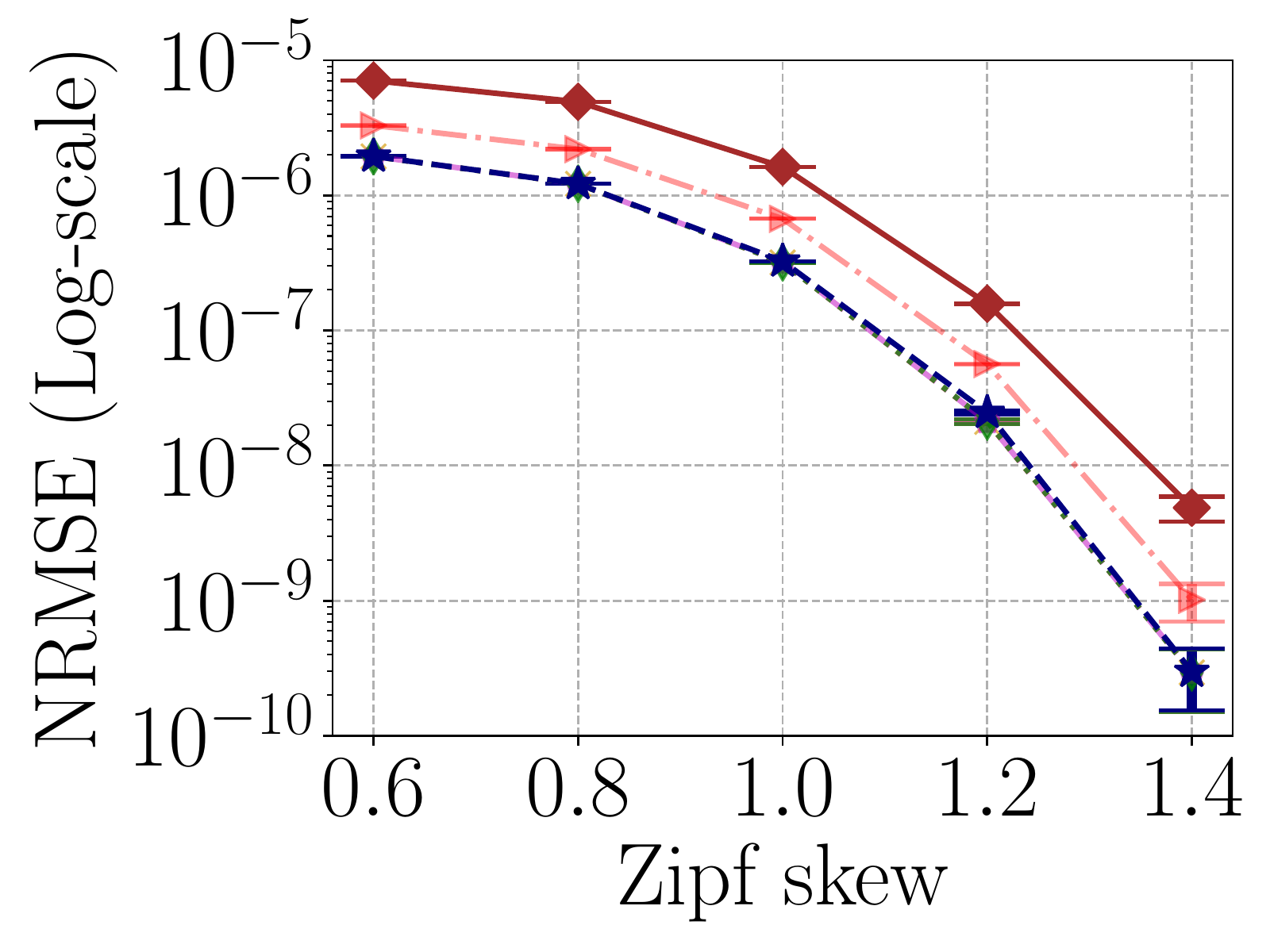}}\\
    {\includegraphics[width =1.04\columnwidth]
    {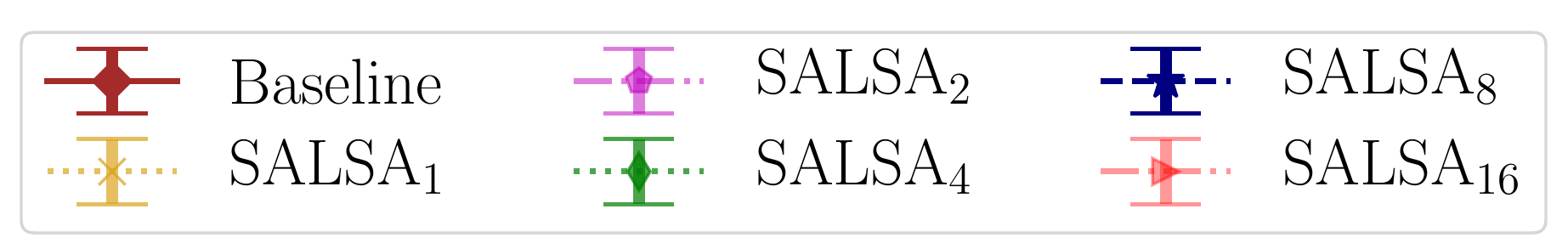}\ifdefined\sigmodSubmission\vspace*{-3mm}\fi}
    \subfloat[Speed, Random Order]
    {\label{5c} \includegraphics[width =0.5\columnwidth]
    {figs/Salsa/countMin/zipf_1024_KB_speed}}
    \subfloat[Speed, Sequential Order]
    {\label{5d}\includegraphics[width =0.5\columnwidth]
    {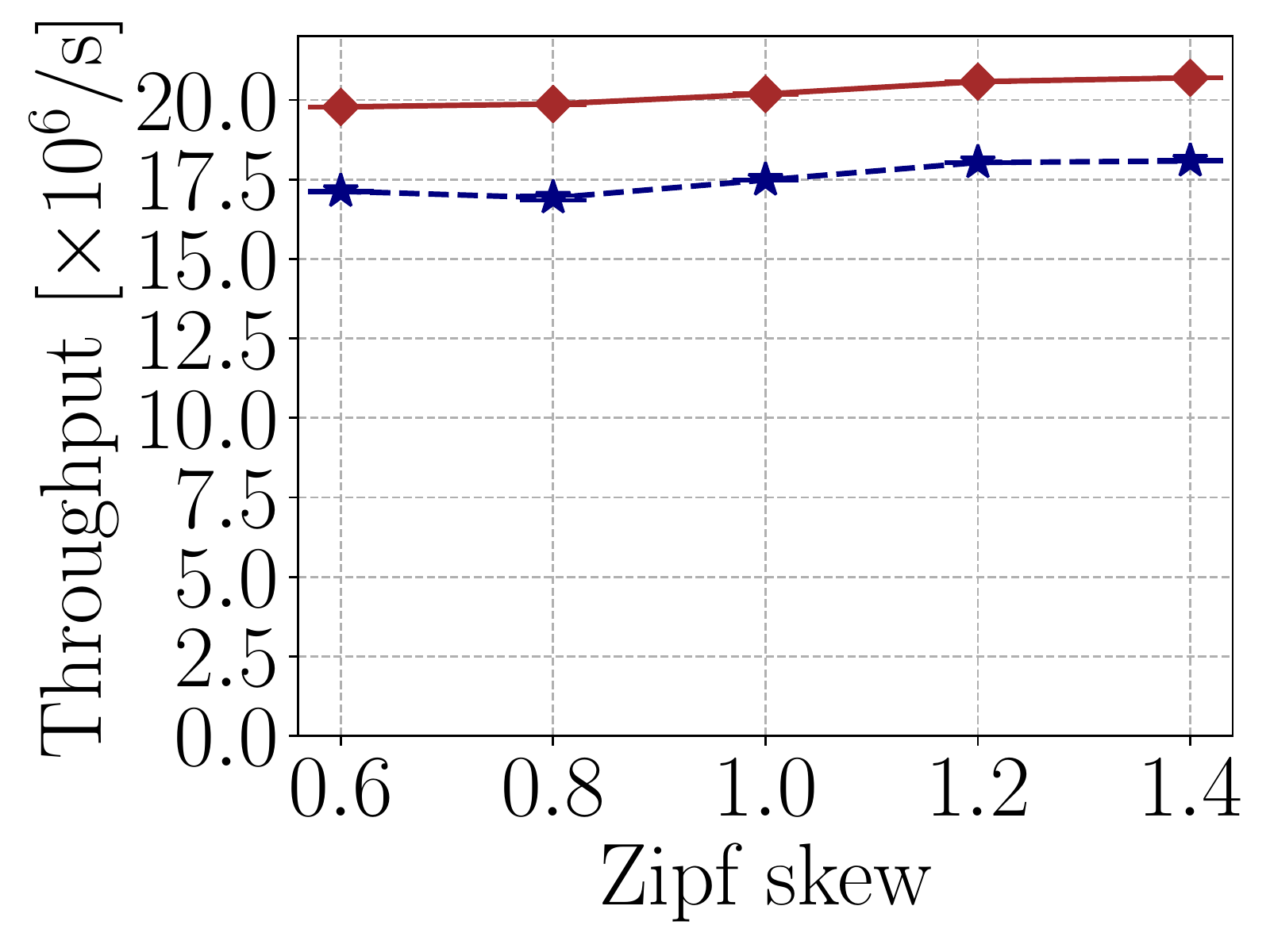}} 
    \ifdefined\sigmodSubmission\vspace*{-1mm}\fi
    \caption{\small Speed and accuracy of SALSA CMS for the synthetic datasets. The Baseline uses $w=2^{16}$ in each row for a total of 1MB of space. SALSA$_s$ is using $w=2^{16}\cdot 32/s$ sized rows for a total of $(1+1/s)$MB space.\ran{We should probably remove }
    }\label{fig:CMS_Zipf}
    \ifdefined\sigmodSubmission\vspace*{-3mm}\fi
\end{figure} 

\section{Proof of the SALSA Count Sketch Algorithm}
Consider a counter $AB$, and let $A$ and $B$ be its two sub-counters. We denote by $F_A$ the set of elements mapped into the sub-counter $A$ and by $F_B$ those mapped to $B$. Let $X_A, X_B$ and $X_{AB}$ denote the values of the counters.
We use $f_y$ to denote the size of a element $y$ and by $g_y$ to denote its direction. 
\begin{figure}[t]
\centering
\includegraphics[width = 0.95\columnwidth]
		{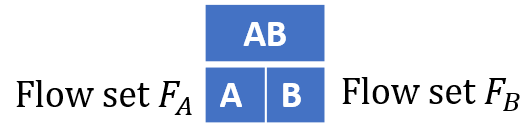}
	\ifdefined\sigmodSubmission\vskip -0.15cm\fi
	\ifdefined\submissionVersion
	\ifdefined\sigmodSubmission\vskip -0.1cm\fi
	\fi
    \caption{\label{fig:assoc_combined} Setting for the proof of Count Sketch SALSA.
    }
\end{figure}

Fix some queried element $x$, and assume without loss of generality that $g_x=1$ and that $x$ is mapped to counter $A$.
We denote by 
$$
\mathfrak E_A = f_x - X_A = f_x - \sum_{y\in F_A} f_yg_y 
$$
and 
$$
\mathfrak E_{AB} = f_x - X_{AB} = f_x - \sum_{y\in F_A\cup F_B} f_yg_y
$$
the errors we get from using the counters to estimate the size of $x$.
We denote the event of an overflow by $\mathfrak O$ and its indicator by $I_{\mathfrak O}$.
Under these notations, the error of SALSA is
$$
\mathfrak E = \mathfrak E_A  (1-I_{\mathfrak O}) + \mathfrak E_{AB} I_{\mathfrak O}.
$$

We first prove that the estimate of SALSA is unbiased.
\begin{lemma}
$$
\mathbb E[\mathfrak E] = 0.
$$
\end{lemma}
\begin{proof}
We have that
\begin{multline*}
\mathbb E[\mathfrak E] = \mathbb E\brackets{\mathfrak E_A  (1-I_{\mathfrak O}) + \mathfrak E_{AB}  I_{\mathfrak O}}\\
= \mathbb E\brackets{\mathfrak E_A | \neg\mathfrak O}\Pr[\neg\mathfrak O] + \mathbb E\brackets{\mathfrak E_{AB} | \mathfrak O}\Pr[\mathfrak O] \\
= \mathbb E\brackets{\mathfrak E_{A} | \neg\mathfrak O }\Pr[\neg\mathfrak O] + \mathbb E\brackets{\mathfrak E_{A} | \mathfrak O}\Pr[\mathfrak O] \\+ \mathbb E\brackets{\mathfrak E_{AB}-\mathfrak E_{A} | \mathfrak O} \Pr[\mathfrak O]\\
= \mathbb E\brackets{\mathfrak E_{A}} + \mathbb E\brackets{\mathfrak E_{AB}-\mathfrak E_{A} | \mathfrak O}\Pr[\mathfrak O] = 0.
\end{multline*}
Here, the final equation follows from the following facts: 

\begin{itemize}
    \item $\mathbb E[\mathfrak E_A]=0$ since $g$ is pairwise independent and $$\Pr(g_y=1)=\Pr(g_y=-1) \,\, \forall y \neq x.$$
    \item  $\mathbb E\brackets{\mathfrak E_{AB}-\mathfrak E_{A} | \mathfrak O}=0$ since $$\mathfrak E_{AB}-\mathfrak E_{A} = X_B$$ and $$\Pr[X_B = k| \mathfrak O]=\Pr[X_B = -k| \mathfrak O] \,\, \forall k.\qedhere$$
\end{itemize}
\end{proof}
As a result, we have that 
\begin{multline}
\Var[\mathfrak E] = \mathbb E[\mathfrak E^2] = \mathbb E\brackets{\parentheses{\mathfrak E_A  (1-I_{\mathfrak O}) + \mathfrak E_{AB} I_{\mathfrak O}}^2} \\
 = \mathbb E\brackets{\mathfrak E_A^2  (1-I_{\mathfrak O}) + \mathfrak E_{AB}^2 I_{\mathfrak O}}\\
 = \mathbb E\brackets{\mathfrak E_A^2|\neg\mathfrak O}\Pr[\neg\mathfrak O] + \mathbb E\brackets{\mathfrak E_{AB}^2 |\mathfrak O}  \Pr[\mathfrak O].\label{eq0}
\end{multline}
We now prove that SALSA can only decrease the estimate variance.
\begin{lemma}
$$
\Var[\mathfrak E]\le \Var[\mathfrak E_{AB}].
$$
\end{lemma}
\begin{proof}
According to~\eqref{eq0}, we have that 
\begin{multline*}
\Var[\mathfrak E] =  \mathbb E\brackets{\mathfrak E_A^2|\neg\mathfrak O}\Pr[\neg\mathfrak O] + \mathbb E\brackets{\mathfrak E_{AB}^2 |\mathfrak O}  \Pr[\mathfrak O] \\
=  \mathbb E\brackets{\mathfrak E_A^2|\neg\mathfrak O}\Pr[\neg\mathfrak O] + \\(\mathbb E\brackets{\mathfrak E_{AB}^2|\neg\mathfrak O}\Pr[\neg\mathfrak O] + \mathbb E\brackets{\mathfrak E_{AB}^2 |\mathfrak O}  \Pr[\mathfrak O])\\ - \mathbb E\brackets{\mathfrak E_{AB}^2|\neg\mathfrak O}\Pr[\neg\mathfrak O] \\
=  \mathbb E\brackets{\mathfrak E_A^2|\neg\mathfrak O}\Pr[\neg\mathfrak O] + Var[\mathfrak E_{AB} ] - \mathbb E\brackets{\mathfrak E_{AB}^2|\neg\mathfrak O}\Pr[\neg\mathfrak O] \\
= Var[\mathfrak E_{AB} ] -   \mathbb E\brackets{\mathfrak E_{AB}^2 -\mathfrak E_A^2 |\neg\mathfrak O}\Pr[\neg\mathfrak O].
\end{multline*}
We are left with proving that $\mathbb E\brackets{\mathfrak E_{AB}^2 -\mathfrak E_A^2 |\neg\mathfrak O} \ge 0$.
Plugging in the values of $\mathfrak E_{AB}$ and $\mathfrak E_{A}$:
\begin{multline*}
\mathbb E\brackets{\mathfrak E_{AB}^2 -\mathfrak E_A^2 |\neg\mathfrak O} = 
\mathbb E\brackets{(f_x - X_{AB})^2 -(f_x - X_{A})^2 |\neg\mathfrak O}  \\ 
=\mathbb E\brackets{(X_{AB}^2- X_{A}^2 - 2f_x(X_{AB} - X_{A}) |\neg\mathfrak O}  \\
=\mathbb E\brackets{X_{AB}^2- X_{A}^2|\neg\mathfrak O} -2f_x\mathbb E\brackets{(X_{AB} - X_{A}) |\neg\mathfrak O}.
\end{multline*}
Recall that $X_B = X_{AB} - X_A$ and that by symmetry $\mathbb E\brackets{X_B |\neg\mathfrak O} = 0$.
Therefore:
\begin{multline*}
\mathbb E\brackets{\mathfrak E_{AB}^2 -\mathfrak E_A^2 |\neg\mathfrak O} \\= \mathbb E\brackets{X_{AB}^2- X_{A}^2|\neg\mathfrak O} -2f_x\mathbb E\brackets{(X_{AB} - X_{A}) |\neg\mathfrak O} \\
= \mathbb E\brackets{X_{AB}^2- X_{A}^2|\neg\mathfrak O} 
= \mathbb E\brackets{(X_{A}+X_B)^2- X_{A}^2|\neg\mathfrak O}\\
= \mathbb E\brackets{X_B^2+ 2X_{A}X_B|\neg\mathfrak O} = \mathbb E\brackets{X_B^2|\neg\mathfrak O} \ge 0.
\end{multline*}
Here, we used the fact that $\mathbb E\brackets{X_{A}X_B|\neg\mathfrak O}=0$ which again follows from the sign-symmetry of $B$.\qedhere
\end{proof}

\begin{figure}[t]
\centering
\includegraphics[width = 0.95\columnwidth]
		{figs/CountSketchProofSetting.png}
	\ifdefined\sigmodSubmission\vskip -0.15cm\fi
	\ifdefined\submissionVersion
	\ifdefined\sigmodSubmission\vskip -0.1cm\fi
	\fi
    \caption{\label{fig:assoc_combined} Setting for the proof of Count Sketch SALSA.
    }
\end{figure}

$$
\widehat{F_{AB}}=X_{AB}^2+\sum_{j\neq AB} C[i,j]^2
$$
the $F_2$ estimation when $A$ and $B$ are merged and by 
$$
\widehat{F_{A,B}}=X_{A}^2+X_B^2+\sum_{j\neq AB} C[i,j]^2
$$
the estimation when they are not.

We define by
$$
\widehat{F_2} \triangleq \widehat{F_{A,B}}  (1-I_{\mathfrak O}) + \widehat{F_{AB}} I_{\mathfrak O}
$$
the estimation of SALSA.

We first prove that the estimation is unbiased.
\begin{lemma}
$$
\mathbb E[\widehat{F_2}] = F_2.
$$
\end{lemma}
\begin{proof}
We have that
\begin{multline*}
    \mathbb E[\widehat{F_2}] = \mathbb E[\widehat{F_{A,B}}  (1-I_{\mathfrak O}) + \widehat{F_{AB}} I_{\mathfrak O}]\\
    = \mathbb E[\widehat{F_{A,B}} | \neg \mathfrak O]\Pr[\neg \mathfrak O] + \mathbb E[\widehat{F_{AB}} | \mathfrak O]\Pr[\mathfrak O]\\
    = \mathbb E[\widehat{F_{A,B}} | \neg \mathfrak O]\Pr[\neg \mathfrak O] + \mathbb E[\widehat{F_{A,B}} | \mathfrak O]\Pr[\mathfrak O]\\
    + \mathbb E[\widehat{F_{AB}} - \widehat{F_{A,B}} | \mathfrak O]\Pr[\mathfrak O]\\
    = \mathbb E[\widehat{F_{A,B}}] + \mathbb E[\widehat{F_{AB}} - \widehat{F_{A,B}} | \mathfrak O]\Pr[\mathfrak O]\\
    = \mathbb E[\widehat{F_{A,B}}] + \mathbb E[X_{AB}^2 - (X_{A}^2+X_B^2) | \mathfrak O]\Pr[\mathfrak O]\\
    = \mathbb E[\widehat{F_{A,B}}] + \mathbb E[(X_{A}+X_B)^2 - (X_{A}^2+X_B^2) | \mathfrak O]\Pr[\mathfrak O]\\
    = \mathbb E[\widehat{F_{A,B}}] + 2\mathbb E[X_{A}X_B | \mathfrak O]\Pr[\mathfrak O] = \mathbb E[\widehat{F_{A,B}}]\\
    = \sum_{j}\mathbb E\brackets{\parentheses{\sum_{h(y)=j}f_yg_y}^2} \\
    = \sum_{j}\mathbb E\brackets{\sum_{h(y)=j}f_y^2 + \sum_{h(y)=j, h(z)=j, y\neq z}f_yf_zg_yg_z}\\
    = F_2 +\mathbb E\brackets{\sum_{h(y)=j, h(z)=j, y\neq z}f_yf_zg_yg_z} = F_2.
\end{multline*}
\end{proof}
\begin{lemma}
$$\Var[\widehat{F_2}]\le \Var[\widehat{F_{AB}}].$$
\end{lemma}
\begin{proof}

\end{proof}

\section{Proof sketch for the $k$-tail bounds of SALSA Count Min}
Assume that we have $w$ counters (in a row).
The probability that we hit one of the largest $k$ elements is at most $k/w$.
Conditioned on not hitting any of the $k$ elements, we have an expected noise $F_1^{res(k)}/w$.

Let $k = x/\epsilon$.

Therefore, except with probability $k/w+1/c$, the noise is at most $cF_1^{res(k)}/w$.

Set $w = 4/\epsilon, c = 4$

The noise is at most $\epsilon F_1^{res(k)}$ except with prob. $(1/4 + k/w) = (1/4 + k\epsilon/4)$.

If $k\le w/2$ then we succeed with probability $\ge1/2$.

---

Set $w = 2\sqrt[4]{\delta^{-1}}/\epsilon, c = 2\sqrt[4]{\delta^{-1}}$

The noise is at most $\epsilon F_1^{res(k)}$ except w. prob. $(\sqrt[4]{\delta}/2 + k/w).$

If $k\le w\sqrt[4]{\delta}$ then we succeed with probability $1-\sqrt[4]{\delta}$.

---
Set $w = 10/9\sqrt[4]{\delta^{-1}}/\epsilon, c = 10/9\sqrt[4]{\delta^{-1}}$

The noise is at most $\epsilon F_1^{res(k)}$ except w. prob. $(9\sqrt[4]{\delta}/10 + k/w).$

If $k\le (w\sqrt[4]{\delta})/10$ then we succeed with probability $1-\sqrt[4]{\delta}$.

\fi
\fi
\end{document}